\RequirePackage{fix-cm}

\documentclass[numbook,envcountsect,envcountsame,envcountreset,runningheads,smallextended]{svjour3}

\smartqed

\usepackage[numbers]{natbib}

\usepackage[
    colorlinks=true,
    breaklinks=true,
    bookmarks=true,
    urlcolor=blue,
    citecolor=blue,
    linkcolor=blue,
    bookmarksopen=false,
    draft=false
]{hyperref}

\usepackage{amssymb}

\usepackage{setspace}
\usepackage{graphicx}
\usepackage[dvipsnames]{xcolor}
\usepackage{amsmath,  amssymb}
\usepackage{tikz}
\usepackage{todonotes}
\usepackage{setspace}
\usepackage[a4paper, total={5.7in, 9in}]{geometry}

\usepackage{bbm}
\usepackage{nth}
\usepackage{dsfont,amsfonts}
\usepackage{subfig}
\usepackage{algorithm2e} 
\usepackage{aliascnt}

\usepackage{booktabs}
\usepackage{enumerate, enumitem}

\newcommand{\N}{{\mathbb{N}}}
\newcommand{\E}{{\mathbb{E}}}
\newcommand{\Q}{{\mathbb{Q}}}
\renewcommand{\P}{{\mathbb{P}}}
\newcommand{\q}{{\mathcal{Q}}}

\newcommand{\R}{{\mathds{R}}}

\newcommand{\mF}{{\mathcal{F}}}

\newcommand{\mB}{{\mathcal{B}}}

\newcommand{\tT}{ }

 \newcommand{\lp}[1]{L^\infty}

\newcommand{\X}{\mathcal{X}}

\DeclareMathOperator*{\limL}{{\xrightarrow{L^\infty}}}

\DeclareMathOperator*{\limas}{{\xrightarrow{a.s.}}}

\DeclareMathOperator*{\esssup}{\sup}

\DeclareMathOperator*{\Limsup}{\operatorname{LimSup}}
\DeclareMathOperator*{\Liminf}{\operatorname{LimInf}}

\DeclareMathOperator*{\clconv}{\operatorname{clconv}}

\newcommand{\Uset}{consolidated uncertainty set}

\renewcommand{\u}{{{u}}}

\newcommand{\U}{{{U}}}

\newcommand{\epi}{{ \text{epi}_U }}
\newcommand{\ba}{{ ba_{1,+} }}

\NewDocumentCommand{\LL}{O{t}}{%
   \mathcal{L}_{#1:T}%
}

\NewDocumentCommand{\bil}{mO{t}}{%
   \langle #1 \rangle_{#2,T} %
}

\usepackage{zref-clever}
\zcRefTypeSetup{theorem}{
  Name-sg = Theorem,
  name-sg = Theorem,
  Name-pl = Theorems,
  name-pl = Theorems
}

\zcRefTypeSetup{lemma}{
  Name-sg = Lemma,
  name-sg = Lemma,
  Name-pl = Lemmas,
  name-pl = Lemmas
}

\zcRefTypeSetup{proposition}{
  Name-sg = Proposition,
  name-sg = Proposition,
  Name-pl = Propositions,
  name-pl = Propositions
}

\zcRefTypeSetup{corollary}{
  Name-sg = Corollary,
  name-sg = Corollary,
  Name-pl = Corollaries,
  name-pl = Corollaries
}

\zcRefTypeSetup{definition}{
  Name-sg = Definition,
  name-sg = Definition,
  Name-pl = Definitions,
  name-pl = Definitions
}

\zcRefTypeSetup{remark}{
  Name-sg = Remark,
  name-sg = Remark,
  Name-pl = Remarks,
  name-pl = Remarks
}

\zcRefTypeSetup{example}{
  Name-sg = Example,
  name-sg = Example,
  Name-pl = Examples,
  name-pl = Examples
}

\zcRefTypeSetup{property}{
  Name-sg = Properties,
  name-sg = Properties,
  Name-pl = Properties,
  name-pl = Properties
}

\zcRefTypeSetup{item}{
  Name-sg = Item,
  name-sg = Item,
  Name-pl = Items,
  name-pl = Items
}

\newcommand{\zcenvtype}{}

\newcommand{\zcsetenvtype}[1]{%
  \gdef\zcenvtype{#1}%
  \zcsetup{reftype=#1}%
}

\newcommand{\zcclearenvtype}{%
  \gdef\zcenvtype{}%
  \zcsetup{reftype=}%
}

\newcommand{\zcrestoreenvtype}{%
  \ifx\zcenvtype\empty
    \zcsetup{reftype=}%
  \else
    \edef\zctmp{\noexpand\zcsetup{reftype=\zcenvtype}}%
    \zctmp
  \fi
}

\AddToHook{env/theorem/begin}{\zcsetenvtype{theorem}}
\AddToHook{env/theorem/end}{\zcclearenvtype}

\AddToHook{env/lemma/begin}{\zcsetenvtype{lemma}}
\AddToHook{env/lemma/end}{\zcclearenvtype}

\AddToHook{env/proposition/begin}{\zcsetenvtype{proposition}}
\AddToHook{env/proposition/end}{\zcclearenvtype}

\AddToHook{env/corollary/begin}{\zcsetenvtype{corollary}}
\AddToHook{env/corollary/end}{\zcclearenvtype}

\AddToHook{env/definition/begin}{\zcsetenvtype{definition}}
\AddToHook{env/definition/end}{\zcclearenvtype}

\AddToHook{env/remark/begin}{\zcsetenvtype{remark}}
\AddToHook{env/remark/end}{\zcclearenvtype}

\AddToHook{env/example/begin}{\zcsetenvtype{example}}
\AddToHook{env/example/end}{\zcclearenvtype}

\AddToHook{env/property/begin}{\zcsetenvtype{property}}
\AddToHook{env/property/end}{\zcclearenvtype}

\AddToHook{env/enumerate/begin}{\zcsetup{reftype=item}}
\AddToHook{env/enumerate/end}{\zcrestoreenvtype}

\begin{document}

\title{Dual Representation of Robust Risk Measures and Uncertainty Sets}

\titlerunning{Dual Representation of Robust Risk Measures}

\author{
Marlon R. Moresco
\and
Marcelo Righi
\and
Silvana M. Pesenti
}

\authorrunning{Moresco, Righi, and Pesenti}

\institute{
Marlon R. Moresco \at
Federal University of Rio Grande do Sul (UFRGS), Porto Alegre, RS, Brazil \\
\email{marlon.moresco@ufrgs.br}
\and
Marcelo Righi \at
Federal University of Rio Grande do Sul (UFRGS), Porto Alegre, RS, Brazil \\
\email{marcelo.righi@ufrgs.br}
\and
Silvana M. Pesenti \at
Department of Statistical Sciences, University of Toronto, Toronto, ON, Canada \\
\email{silvana.pesenti@utoronto.ca}
}

\date{}

\maketitle

\begin{abstract}
We consider robust risk measures that arise as worst-case values of convex risk measures evaluated on uncertainty sets. We characterize continuity properties of robust risk measures through their consolidated uncertainty sets, derive dual representations for robust risk measures, and develop a set-valued dual representation for consolidated uncertainty sets. The two dual frameworks rely on distinct geometric assumptions and are therefore complementary rather than interchangeable.

\keywords{Risk measures \and Uncertainty and ambiguity \and Convex duality \and Set-valued conjugates \and Robust finance}

\subclass{91G70 \and 91G05 \and 46N10 \and 49N15}

\noindent\textit{JEL Classification:} C61, G10, G32
\end{abstract}

\section{Introduction.}
Robust risk measures are paramount for decision making and optimisation with applications ranging from finance and insurance, economics, and risk-averse stochastic control. Distributional robust risk measures, which for simplicity we just call robust risk measures, are almost always defined as worst-case risk measures. Worst-case risk measures are the largest value a risk measure can attain when evaluated on a set of alternative random variables, where the latter is often termed the uncertainty set. Most works in the literature focus on specific choices of uncertainty sets, such as those characterised by moment constraints \cite{Cornilly2018IME,Bernard2024MF,cai2025OR}, optimal transport distances and divergences \cite{blanchet2019MOR,Pesenti2024ORL,tam2026WP}, $f$-divergences \cite{lam2016MOR,miao2025EJOR}, and distributional constraints \cite{bernard2017FS}; to name a few. Here we consider a generic uncertainty set defined as a set-valued map $u\colon \lp{} \to 2^{L^\infty}$, similar to \cite{moresco2023uncertainty}, where $u(X)$ characterises the uncertainty inherent in the risky position $X \in \lp{}$. For an uncertainty set $u$ and a risk measure $\rho\colon \lp{} \to \R$, we consider the robust risk measure
\begin{equation*}
    R^{u, \rho}(X)=\sup_{Y\in u(X)} \rho(Y)\,.
\end{equation*}
Key to the exposition is the so-called \Uset~$U$, that is the largest uncertainty set that leads to the same robust risk measure, which has been introduced in \cite{moresco2023uncertainty}. The \Uset~is instrumental for uniquely characterising properties of the robust risk measure and their dual representations.

We start by characterising continuity properties of the robust risk measure
\(R^{u,\rho}\). Since \(R^{u,\rho}\) depends on both the \Uset~\(U\) and the
risk measure \(\rho\), its continuity properties are inherited from the
interaction between these two components. As a first key result, we provide a
characterisation linking continuities of the set-valued map \(U\) with
continuities of the robust risk measure \(R^{u,\rho}\). From this part on, the manuscript is devoted to duality. We
develop separately a dual representation for the robust risk measure \(R^{u,\rho}\) and
a dual representation for the consolidated uncertainty set \(U\). The choice of
having two dual representations, one for the robust risk measure \(R^{u,\rho}\) and one
for the consolidated uncertainty set \(U\), is not merely expository, but is
dictated by a fundamental mathematical incompatibility between the two
frameworks.

Since \(R^{u,\rho}\) is a scalar-valued map on \(L^\infty\), its dual representation
naturally belongs to the classical theory of convex risk measures. It is
derived through the usual penalty-function machinery, starting from the dual
representation of \(\rho\) and incorporating the auxiliary worst-case
expectation maps
\[
g_Q(X):=\sup_{Z\in u(X)}\mathbb E_Q[-Z].
\]
On the other hand, \(U\) is a set-valued map. Hence its dual representation
must be formulated within the set-valued conjugation framework, in the sense of
Hamel, via support functions of the graph of \(U\) and an appropriate
biconjugation argument. To the authors' knowledge, the dual representation of
uncertainty sets has not been studied in the literature. These two dual representations are in general not compatible. The dual representation of
\(U\) hinges on set-concavity, whereas the dual representation of \(R^{u,\rho}\), rooted
in the standard theory of robust risk measures, relies on the usual convexity
of risk measures, which corresponds to set-convexity of \(U\). These
requirements point in opposite directions: the concavity needed for the
set-valued duality of \(U\) is not compatible, in general, with the convexity
required for the scalar duality of \(R^{u,\rho}\). Therefore, the duality of \(U\), in the
sense of Hamel, cannot be directly transferred to the duality of \(R^{u,\rho}\), in the
sense of risk measures. The two dual representations are thus complementary,
rather than duplicative, and we relate them through the corresponding support
function and penalty constructions.

From a financial perspective, this separation is natural. Convexity of the
risk measure \(R^{u,\rho}\) is well justified by diversification, meaning that combining
positions should not increase risk more than proportionally. By contrast, no
analogous principle holds for uncertainty sets. Uncertainty may increase
under aggregation even when each component is individually well specified. For
example, when the marginal distributions of two positions are exactly specified and the corresponding individual uncertainty sets are singletons, there might 
still be uncertainty about their dependence structure. In that case, the
uncertainty set of the combined position is no longer a singleton, even though
the uncertainty sets of the marginals are. Thus, the geometric behaviour that
is financially appropriate for risk measures does not naturally extend to
uncertainty sets themselves.

The distinction between the two dual representations is particularly important
because the relevant geometric assumptions differ. Convexity of risk measures
is well established as a form of diversification, whereas set-orderings of
uncertainty sets are less understood. Indeed, set-concavity is often more
intuitive than set-convexity. Consider the uncertainty set of a convex
combination of risky positions, namely
\(u(\lambda X+(1-\lambda)Y)\), where \(\lambda\in[0,1]\) and
\(X,Y\in\lp{}\). This uncertainty set includes distributional uncertainty of
each position \(X\) and \(Y\) individually, as well as uncertainty about the
dependence between \(X\) and \(Y\). Set-concavity means that the uncertainty
set of the convex combination contains the convex combination of the
uncertainty sets, that is,
\[
u(\lambda X+(1-\lambda)Y)
\supset
\lambda u(X)+(1-\lambda)u(Y),
\]
which accounts for dependence uncertainty.

Apart from the literature on robust risk measures, our work is closely related to the dual representation of risk measures and to set-valued duality. Dual representation of risk measures originated in \cite{artzner1999coherent} for coherent risk measures. It has been generalised to convex risk measures \cite{frittelli2002putting}, to risk orders \cite{drapeau2013MOR}, to real valued systemic risk measures (which are defined on the space of random vectors or sets) \cite{ararat2020dual,righi2024set-rm} and to conditional maps \cite{frittelli2011SIAM,mastrogiacomo2019DEF}. Close to our work is \cite{Righi2024} who studies examples of dual representation of robust convex risk measures and \cite{centrone2026WP-quasi-convex} who studies dual representation of robust quasi-convex risk measures. None of them, however, considers continuity of robust risk measures nor the dual representation of the uncertainty set. 

To analyse the dual representation of the uncertainty set, we rely on the theory of set-valued optimization and set-valued conjugates. The foundation for set-valued convex analysis and Fenchel conjugation was established by \cite{hamel2009duality}, who utilized complete lattices of sets to formulate a set-valued biconjugation theorem in general locally convex topological linear spaces. Comprehensive overviews and foundational treatises of this set optimization framework are provided in \cite{hamel2015set} and \cite{khan2016set}. Furthermore, the field of set-valued optimization has experienced rapid growth, with recent surveys covering advanced topics ranging from vectorisation schemes \cite{eichfelder2025two} to stochastic formulations for robust learning \cite{giovannelli2026stochastic}. Within financial mathematics, this optimization theory has been primarily applied to set-valued risk measures to handle multiple eligible assets, yielding a complete duality theory for set-valued convex measures of risk \cite{hamel2010duality}. It is worth noting that much of this applied literature focuses on mapping multivariate random variables into subsets of a finite-dimensional space (the space of eligible portfolios). This finite-dimensional approach includes the evaluation of systemic risk \cite{ararat2020dual}, pricing in conical market models with transaction costs \cite{hamel2011setvalued}, set-valued loss-based measures \cite{sun2018setvalued}, set-valued star-shaped risk measures \cite{nie2026setvalued}, and the analysis of stochastic orderings for set-valued risk measures \cite{mastrogiacomo2026stochastic}. Statistical extensions like cone distribution functions \cite{hamel2018cone} and conditional functionals of random sets \cite{fissler2026setvalued} also operate specifically within Euclidean spaces. In contrast to these finite-dimensional applications, we build upon the general infinite-dimensional conjugation frameworks, such as those for T-translative functions \cite{hamel2021setvalued} and set-valued convex compositions \cite{ararat2026setvalued}, to formulate the dual representation of set-valued uncertainty sets mapping directly into the power set of $L^\infty$.

The paper is structured as follows. \zcref[S]{framework} introduces uncertainty sets, robust risk measures, as well as the \Uset s, and recaps the relevant results needed for the exposition from \cite{moresco2023uncertainty}.  
In \zcref[S]{sec:continuity} we establish continuity of the uncertainty sets and robust risk measures, and characterise under what conditions on the uncertainty set, the robust risk measure preserves continuity. \zcref[S]{sec:dual-risk} features one of the main results of this paper, the dual representation of the robust risk measure. The second key result is stated in \zcref[S]{sec:dual-U}, which pertains to the dual of the set-valued uncertainty sets. Finally in \zcref[S]{sec:connection} we connect the two dual frameworks through their penalty functions. 

\section{Robust risk measures.}
\label{framework}
\subsection{Preliminaries and Notation.}
We consider a probability space $(\Omega,\mF,\P)$ and denote by $L^\infty := L^\infty(\Omega,\mF,\P)$ the space of $\mF$-measurable and essentially bounded random variables endowed with the supremum norm 
$    \Vert X \Vert 
    :=  \inf\big\{ m \in \R ~:~  \P( |X|>m )=0  \big\}\,.
$
Unless otherwise stated, all equalities and inequalities  are  in a $\P$-almost sure (a.s.) sense. We say that a sequence $\{X^n\}_{n\in\mathbb{N}} \subset L^\infty$ converges to $X \in L^\infty$ in $L^\infty$ if $\|X^n - X\| \to 0$ as $n\to\infty$, which we denote by $X^n  \limL  X \in \lp{}$. We further say that $\{X_n\}_{n\in\mathbb{N}} \in L^\infty$ converges to $X$ almost surely (a.s.) if $\P(\{\omega \in \Omega : X_n(\omega)\to X(\omega)\}) = 1$, and denote it by $X^n  \limas X $.

Central to the exposition are set-valued maps. To clarify the notation, we recall the addition of sets 
\begin{align*}
    A + B  := \left\{ X + Y \in \lp{0,T} : X \in A, \, Y \in B  \right\}
    \,,
    \quad \;\text{where}\quad A,B \subseteq \lp{0,T}\,.
   \end{align*}
By abuse of notation, we may denote sets consisting of a singleton by its element, i.e., $Z:= \{Z\}\subset \lp{0,T}$. We further recall the multiplication of a set $A \subseteq \lp{t}$ with a scalar $\lambda \in \R$ 
\begin{equation*}
     \lambda \, A := \big\{ \lambda \, X \in \lp{t}  ~:~X \in A  \big\}\,,
\end{equation*}
and denote the complement of a set $A \subseteq \lp{t}$ by $A^\complement := \{X \in \lp{t} ~:~ X \notin A   \}$. 
We use the following canonical order extension on the power set of $\lp{t}$  with $\tT$
\begin{equation*}
    A \preceq B \qquad \text{ if and only if } \qquad B \subseteq A + L^\infty_+,
\end{equation*}
where $L^\infty_+$  is the set of non-negative random variables. A set $A \subset \lp{t} $ is bounded  if \[\inf \{ c \in \R : \P( c \geq |X|) =1, \forall \; X \in A   \}  < \infty ,\] and bounded from below if \[\inf \{ c \in \R : \P( c \geq -X) =1, \forall \; X \in A   \}  < \infty .\]

\subsection{Risk Measures and Uncertainty sets}

We follow the concepts and terminologies introduced in \cite{moresco2023uncertainty}. However, here we work with Profit and Losses (P\&L) rather than losses as in \cite{moresco2023uncertainty}, hence, $X>0$ represents a gain and $X<0 $ a loss.
    \begin{definition}[Risk measure]\label{def-cond-rm}

    A  risk measure is a mapping $\rho : L^\infty \to \R $ and its acceptance set is defined as $A^\rho: = \{ X\in L^\infty : \, \rho(X) \leq 0\}.$
A risk measure $\rho $ may satisfy the following properties: 
\begin{enumerate}[label = $\arabic*)$]

        \item \textbf{Monotonicity:} $\rho  (X ) \leq \rho  (Y )$, for all $X , Y  \in \lp{t+1,s}$ with $Y  \leq X $.
        
        \item \textbf{Translation Invariance:} 
        $\rho \left(X+c  \right)  =\rho ( X ) - c  $, for all 
        $X  \in L^\infty  $ and all $c \in \R.$    
           \item \textbf{Convexity:}
        $\rho (\lambda \, X  + (1-\lambda )\, Y  ) \leq \lambda \, \rho  (X ) + (1-\lambda) \, \rho (Y )$, for all  $X ,Y  \in \lp{t+1,s}$ and  all $\lambda \in \R$ with $0 \leq \lambda \leq 1$.

        \item \textbf{Positive Homogeneity:}
        $\rho ( \lambda \, X ) = \lambda \,\rho (X )$, for all  $X  \in\lp{t+1,s} \,$, and all $\lambda \in \R$ with $\lambda \ge 0$.

    \end{enumerate}
We use the following commonly used terminologies. We call a risk measure a \emph{monetary risk measure}, if it satisfies monotonicity and translation invariance. A risk measure is a \emph{convex risk measure}, if it is a monetary risk measure that satisfies convexity, and it is a \emph{coherent risk measure}, if it is a convex risk measure that satisfies positive homogeneity. 
\end{definition}

Robust risk measures are typically defined through worst-case risk measures, which are the largest values a risk measure can attain when evaluated on a set of alternative random variables, where the latter is typically termed an uncertainty set. Thus we next define uncertainty sets and their desirable properties. Mathematically, an uncertainty set $u$ is a set function mapping a random variable $X$ to a subset of measurable random variables $u(X)$, which captures the uncertainty around $X$. To avoid trivial cases, we assume that the uncertainty set is non-empty and bounded from below, \cite{moresco2023uncertainty} refers to such uncertainty sets as proper uncertainty sets.

\begin{definition}[Uncertainty set]
An uncertainty set $u$  is a set function $u : L^\infty \rightarrow 2^{\lp{t}}$, such that $u(X)$ is  non-empty and bounded from below for all $X \in \lp{}$.
An uncertainty set $u$ may satisfy the following properties: 
\begin{enumerate}[label = $\arabic*)$]

    \item \textbf{Order preservation:}\label{property:order}
    Let $Y  \leq X $ with $X , Y  \in\lp{t,T}$. Then
    for each $Z \in u (X ) $ there exists a $W \in u (Y )$ such that $W\leq Z$.

    \item \textbf{Monotonicity:}\label{property:mon}   $Y  \leq X $ implies that $u  (X ) \subseteq u (Y )$, for all $X , Y  \in \lp{t,T}$.
      
    \item \textbf{Translation invariance:}\label{property:trans}
    $ u (X +c ) =  u (X )  + c $ for all $X  \in \lp{t,T}$ and $ c  \in \R$.
           
    \item \textbf{Positive homogeneity:}\label{property:pos-hom}
    $u  (\lambda \, X  ) = \lambda \,u  ( X  )  $ for all $\lambda \in\R$ with $ \lambda \ge 0$ and $X  \in\lp{t,T}$.
    \item 
   \textbf{Order convexity:} For all $Z \in u  (\lambda X  +(1-\lambda) Y ) $ there exists $X' \in u  (X ) $ and $Y' \in u  ( Y )$  such that $Z \geq \lambda X' +(1-\lambda) Y'$, for all $\lambda \in \R$ with $ 0\leq \lambda\leq 1$ and for all $X ,\, Y  \in \lp{t,T}$.

    \item \textbf{Set-convexity:}  $u  (\lambda X  + (1-\lambda)Y ) \subseteq \lambda u  (X ) + (1-\lambda) u (Y )$ for all $\lambda \in \R$ with $ 0\leq \lambda\leq 1$ and for all $X ,\, Y  \in \lp{t,T}$.

  \item \textbf{Set-concavity:}  $ \lambda u  (X ) + (1-\lambda) u (Y ) \subseteq u  (\lambda X  + (1-\lambda)Y )$ for all $\lambda \in \R$ with $ 0\leq \lambda\leq 1$ and for all $X ,\, Y  \in \lp{t,T}$.

\end{enumerate}
We call an uncertainty set a \emph{convex uncertainty set} if it satisfies order preservation, translation invariance, and order convexity. 
\end{definition}

\subsection{Robust risk measures}

With the definition of uncertainty sets, we introduce robust risk measures as the largest (worst-case) value a risk measure can attain when evaluated at random variables in the uncertainty set.

\begin{definition}[Robust risk measure]\label{defin R}
For a convex risk measure $\rho$ and an uncertainty set $u$, we define the \emph{robust risk measure} as a mapping 
$R^{\u,\rho} \colon  L^\infty  \to \R$ 
 given by
\begin{align*}
     R ^{\u,\rho}\left(X \right)
     :=
      \esssup \, \Big\{  \rho  (Y) \in \R  ~: ~Y \in u  \big(X \big)  \Big\}\,.
\end{align*}
Whenever the uncertainty set and the  risk measure are clear from the context, we simply write $R(\cdot):=R^{\u, \rho}(\cdot)$.
\end{definition}

\begin{remark}\label{remark loss to PL}
We emphasize that, in our setting, monotonicity and order preservation of an uncertainty set operate in opposite directions compared to \cite{moresco2023uncertainty}. This difference arises because, in the present study, \(X \leq Y\) means that \(Y\) is preferred over \(X\), whereas in the earlier work the variables represented losses, so \(X \leq Y\) indicated that \(X\) was preferred to \(Y\). For more details on that ordering of sets, see \cite{hamel2009duality} and references therein.

In the theory of risk measures, working with losses or with P\&L is generally interchangeable: if $\rho$ is defined on losses, the transformation $\rho^\dagger(X) := \rho(-X)$ can be interpreted as a risk measure on P\&L, with most properties requiring only minor adjustments. For robust risk measures, however, this interchangeability holds only if, in addition to $\rho^\dagger(X) := \rho(-X)$, the relation $u^\dagger(X) := -u(-X)$ also holds. In that case, it follows that $R^\dagger(X) = R(-X)$. While the interpretation of $X$ versus $-X$ is straightforward, the meaning of $-u(-X)$ is generally less clear. Furthermore, while $\rho^\dagger$ preserves the convexity of $\rho$, if $u$ is set-convex $u^\dagger$ is set-concave. 
\end{remark}

Next we recall results from \cite{moresco2023uncertainty} which are needed for the exposition. In particular, we are interested which properties a robust risk measure inherits from its uncertainty set. \zcref[S]{prop:u_to_R} provides sufficient conditions for this inheritance, in particular, it demonstrates that a convex uncertainty set guarantees a convex robust risk measure. For further properties and equivalences between properties of $R$ and $u$, see \cite{moresco2023uncertainty}.

\begin{proposition}[Proposition 2 of \cite{moresco2023uncertainty}]\label{prop:u_to_R} 
Let $R$ be a  robust risk measure with uncertainty set $u$. 
Then the following hold:
\begin{enumerate}[label = $\arabic*)$]
   
    \item\label{u monotone - R monotone} 
    If $u$ is monotone or order preserving, then  $R$ is monotone.
        
    \item\label{u trans -> R trans} 
    If $u$ is translation invariant, then $R$ is translation invariant.
    
    \item  \label{u ph - R ph} 
    If $\rho$ and $u$ are positive homogeneous, then $R$ is  positive homogeneous.

    \item \label{u conv - R conv} If $u$ is set-convex or order convex, then $R$ is convex.

    \item \label{u conc - R conc} Let $\rho$ be concave\footnote{For this property only, we allow  $\rho$ to not be convex.}. If $u$ is set-concave,
    then $R$ is concave.    
    
\end{enumerate}
\end{proposition}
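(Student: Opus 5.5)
We argue item by item, working directly from the definition $R(X)=\sup_{Y\in u(X)}\rho(Y)$ and the listed properties of $\rho$ and $u$. Each item is a pointwise manipulation of the supremum, so the plan is to fix $X$ (and $Y$, $\lambda$, $c$ as needed), pick an arbitrary element of the relevant uncertainty set, compare its $\rho$-value with that of a suitable element of another uncertainty set, and then pass to suprema.

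For \ref{u monotone - R monotone}, let $Y\le X$. If $u$ is monotone, then $u(X)\subseteq u(Y)$, so the supremum over the smaller set is smaller and $R(X)\le R(Y)$. If instead $u$ is order preserving, then for each $Z\in u(X)$ there is $W\in u(Y)$ with $W\le Z$. Monotonicity of $\rho$ gives $\rho(Z)\le\rho(W)\le R(Y)$, and taking the supremum over $Z$ yields $R(X)\le R(Y)$.

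For \ref{u trans -> R trans}, translation invariance of $u$ gives $u(X+c)=u(X)+c$, so
\[
R(X+c)=\sup_{Z\in u(X)}\rho(Z+c)=\sup_{Z\in u(X)}\rho(Z)-c=R(X)-c.
\]
For \ref{u ph - R ph}, we write $u(\lambda X)=\lambda u(X)$ and use $\rho(\lambda Z)=\lambda\rho(Z)$ for $\lambda\ge0$. This gives $R(\lambda X)=\lambda R(X)$; the case $\lambda=0$ needs no special treatment, since $\rho(0)=0$ holds by positive homogeneity.

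For \ref{u conv - R conv}, fix $X,Y$ and $\lambda\in[0,1]$, and take $Z\in u(\lambda X+(1-\lambda)Y)$. Under set-convexity we can write $Z=\lambda X'+(1-\lambda)Y'$ with $X'\in u(X)$ and $Y'\in u(Y)$. Under order convexity we only get $Z\ge \lambda X'+(1-\lambda)Y'$, and monotonicity of $\rho$ then gives $\rho(Z)\le\rho(\lambda X'+(1-\lambda)Y')$. In either case convexity of $\rho$ yields
\[
\rho(Z)\le\lambda\rho(X')+(1-\lambda)\rho(Y')\le \lambda R(X)+(1-\lambda)R(Y),
\]
and the supremum over $Z$ gives convexity of $R$.

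For \ref{u conc - R conc}, take $X'\in u(X)$ and $Y'\in u(Y)$. Set-concavity places $\lambda X'+(1-\lambda)Y'$ in $u(\lambda X+(1-\lambda)Y)$, and concavity of $\rho$ gives
\[
\lambda\rho(X')+(1-\lambda)\rho(Y')\le\rho(\lambda X'+(1-\lambda)Y')\le R(\lambda X+(1-\lambda)Y).
\]
Taking suprema over $X'$ and $Y'$ separately, which is allowed because $\lambda$ and $1-\lambda$ are nonnegative, gives concavity of $R$.

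The main subtlety is finiteness and well-definedness of $R$. Lower boundedness of $u(X)$ together with monotonicity and translation invariance of $\rho$ gives a uniform upper bound $\rho(Z)\le\rho(-c)=\rho(0)+c$, and non-emptiness gives $R(X)>-\infty$. For the concave item, where $\rho$ need not be monotone, we would simply allow extended values, so the inequalities remain valid in $\overline{\R}$. Beyond this, every step is routine.
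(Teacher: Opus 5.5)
Your argument is correct: each item follows from the compare-then-take-suprema steps you give, and you use the P\&L directions of monotonicity, order preservation and order convexity correctly. Your remark that lower boundedness of $u(X)$, together with monotonicity and translation invariance of $\rho$, makes $R$ real-valued supplies the finiteness that is needed, for example for $0\cdot R(X)=0$ in the homogeneity item.

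The paper does not prove this proposition itself; it imports it from Proposition~2 of \cite{moresco2023uncertainty}. Your direct supremum manipulations are the natural way to establish it.
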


Given the non-uniqueness of uncertainty sets, specifically, that uncountably many uncertainty sets give rise to the same robust risk measure, we define the so-called \Uset s. The \Uset\, associated with a robust risk measure is the largest uncertainty set that gives rise to the same robust risk measure. Below we define the \Uset\ using its equivalent representation, that is the set of all random variables that have a risk less than that of the robust risk measure; the equivalence of the two representations is given in \zcref[S]{lemma u to U} below. We refer to Section 3.2 in \cite{moresco2023uncertainty} for further discussion and interpretation of the \Uset.

\begin{definition}[Consolidated uncertainty set]\label{defin U set}
Let $R$ be a  robust risk measure with uncertainty set $\u$ and  risk measure $\rho $. Its \emph{\Uset} is the uncertainty set $U $, defined for all  $X \in \lp{t,T}$ by
\begin{align*}
       U (X ) := \left\{ Y \in \lp{t}~:~ \rho (Y) \leq R^{\u,\rho}(X )  \right\}.
\end{align*} 
\end{definition}

We call $U$ a \emph{convex \Uset}, if it satisfies monotonicity, translation invariance, and set-convexity. We call $U$ a \emph{concave \Uset}, if it satisfies monotonicity, translation invariance, and set-concavity.

The next lemma collects different representations of \Uset s.

\begin{lemma}[Lemma 3 of \cite{moresco2023uncertainty}]\label{lemma u to U}
Let $R$ be a robust risk measure with uncertainty set $u$ and \Uset \; $U$. Then it holds
for all $ X \in \lp{t+1,T}$ that
\begin{enumerate}[label = $\arabic*)$]

    \item \label{ item equality bet unc sets} 
    $  U (X) =\bigcup \Big\{ u'(X) \subseteq \lp{t} :   
   R^{u'}(X) = R^{u}(X)     \Big\} $.

    \item \label{equality of the risk measure for two unc sets} 
    $R^U(X) = R^u (X)$.
   
    \item 
    \label{lemma u to U: 4}
    $ U (X) = A^\rho -  R^U (X)$.
\end{enumerate}

\end{lemma}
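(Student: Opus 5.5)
We prove the three items of \zcref[S]{lemma u to U} in the order \ref{lemma u to U: 4}, \ref{equality of the risk measure for two unc sets}, \ref{ item equality bet unc sets}. The only tools are \zcref[S]{defin U set} and the translation invariance of $\rho$ (a convex risk measure is monetary).

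For item \ref{lemma u to U: 4}, we first prove the auxiliary identity
\[
U(X)=A^\rho - R^u(X),
\]
with $R^u(X)$ in place of $R^U(X)$. Fix $Y\in L^\infty$ and set $c:=R^{u}(X)\in\R$. Translation invariance gives $\rho(Y+c)=\rho(Y)-c$. Hence $\rho(Y)\le c$ if and only if $Y+c\in A^\rho$, which is equivalent to $Y\in A^\rho-c$. Once item \ref{equality of the risk measure for two unc sets} is established, $R^u(X)$ may be replaced by $R^U(X)$, which yields item \ref{lemma u to U: 4} as stated.

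For item \ref{equality of the risk measure for two unc sets}, we prove two inequalities.
\begin{itemize}
\item Every $Y\in u(X)$ satisfies $\rho(Y)\le R^u(X)$ by the definition of the supremum. Hence $u(X)\subseteq U(X)$, and therefore $R^u(X)\le R^U(X)$.
\item Every $Y\in U(X)$ satisfies $\rho(Y)\le R^u(X)$ by \zcref[S]{defin U set}. Taking the supremum over $Y\in U(X)$ gives $R^U(X)\le R^u(X)$.
\end{itemize}
Before taking suprema we must check that $U(X)$ is non-empty, so that $R^U(X)$ is well defined. This follows from $\emptyset\neq u(X)\subseteq U(X)$.

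For item \ref{ item equality bet unc sets}, we prove both inclusions, reading the union over those sets $u'(X)$ with $R^{u'}(X)=R^u(X)$.
\begin{itemize}
\item For ``$\supseteq$'': take $Y\in u'(X)$ with $R^{u'}(X)=R^u(X)$. Then $\rho(Y)\le R^{u'}(X)=R^u(X)$, so $Y\in U(X)$.
\item For ``$\subseteq$'': by item \ref{equality of the risk measure for two unc sets}, the choice $u'=U$ is admissible in the union, since $R^{U}(X)=R^u(X)$. Hence $U(X)$ is itself one of the sets in the union and is contained in it.
\end{itemize}

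The main obstacle is a technicality about admissibility: whether $U$ qualifies as an uncertainty set in the sense of the definition, which requires $U(X)$ to be bounded from below. In general $A^\rho$ is not bounded from below, since for instance $A^\rho$ contains $L^\infty_+$. We therefore plan to read the union in item \ref{ item equality bet unc sets} over arbitrary non-empty subsets $u'(X)\subseteq L^\infty$ for which $R^{u'}(X)$ is real and equal to $R^u(X)$, which is the reading of Lemma 3 in \cite{moresco2023uncertainty}. The argument above then goes through without any boundedness requirement.
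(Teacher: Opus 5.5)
Your proof is correct and is the natural direct argument. Item 3) rewrites the defining inequality $\rho(Y)\le R^u(X)$ using translation invariance. Item 2) is the two-sided comparison via $u(X)\subseteq U(X)$. Item 1) then follows from item 2). The paper gives no proof of its own here, since the lemma is imported from \cite{moresco2023uncertainty}. It does, however, use your translation-invariance step elsewhere, e.g.\ in the attainment lemma and in the proof of Lemma~\ref{lemma penalty g}.

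Your final paragraph needs correcting, though. The stated reason is wrong: $L^\infty_+$ is bounded from below (by $0$), so $L^\infty_+\subseteq A^\rho$ says nothing about unboundedness from below. For $\rho(X)=-\operatorname{ess\,inf}X$ one even has $A^\rho=L^\infty_+$, and then $U$ is a genuine uncertainty set. The conclusion does hold in general, but for a different reason. For instance, take $\rho=-\E$ and an event $A$ with $0<\P(A)<1$; then $A^\rho$ contains $-n\mathbf 1_A+\frac{n\P(A)}{1-\P(A)}\mathbf 1_{A^\complement}$ for every $n$.

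More importantly, you do not need to weaken the reading of item 1). Suppose the union ranges only over proper uncertainty sets $u'$ (non-empty and bounded from below). Given $Y\in U(X)$, let $u'$ coincide with $u$ except that $u'(X):=u(X)\cup\{Y\}$. This $u'$ is proper, and $R^{u'}(X)=\max\{R^u(X),\rho(Y)\}=R^u(X)$, so $Y$ lies in the union. This proves the statement as literally written.

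Items 2) and 3) never require $U(X)$ to be bounded from below. Boundedness from below of $u(X)$ enters only through the finiteness of $R^u(X)$, which you use implicitly when writing $c\in\R$. If $Z\ge -k$ for all $Z\in u(X)$, then $\rho(Z)\le\rho(-k)=\rho(0)+k$.
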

We highlight that a \Uset \, is a convex \Uset\, if and only if it is also a convex uncertainty set (Corollary 1 of \cite{moresco2023uncertainty}). We postpone the discussion of concave \Uset s to \zcref[S]{sec:dual-U}, which is needed for the dual representation of a \Uset.

Finally, we recall the one-to-one correspondence between the properties of a robust risk measure and its \Uset.

\begin{theorem}
[Theorem 2 of \cite{moresco2023uncertainty}]
\label{theo:equiv_R_U}
Let $R$ be a robust risk measure with  uncertainty set $u$ and denote by $U$ its associated \Uset. 
Then, the following hold:
\begin{enumerate}[label = $\arabic*)$]
    
    \item
    $R$ is monotone if and only if  $U$ is monotone. \label{theo:equiv_R_U_monotone}
    
    \item
    $R$ is translation invariant if and only if $U$ is translation invariant. \label{theo:equiv_R_U_trans_inv}

    \item
    \label{theo:equiv_R_U_ph} 
    Let $\rho$ be positive homogeneous. Then $R$ is positive homogeneous if and only if $U$ is positive homogeneous.

    \item
  $R$ is convex if and only if $U$ is set-convex.
        \label{theo:equiv_R_U_conv}
       
   \item
$R$ is concave if and only if $U$ is set-concave.
   \label{theo:equiv_R_U_conc}
        
\end{enumerate}
\end{theorem}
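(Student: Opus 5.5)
The plan is to reduce every item to the representation $U(X) = A^\rho - R(X)$ from \zcref[S]{lemma u to U}\ref{lemma u to U: 4}. Here $R := R^{u,\rho} = R^U$, and $\rho$ is a convex risk measure. Each property of $U$ then becomes a statement about how the sublevel sets $\{Y:\rho(Y)\le c\}$ are nested as $c$ varies.

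First I record two facts. Translation invariance of $\rho$ gives $A^\rho - c = \{Y : \rho(Y)\le c\}$. Hence $c\mapsto A^\rho - c$ is inclusion-increasing, and it is strictly increasing: if $\rho(Y)=c'$ with $c<c'$, then $Y\in (A^\rho-c')\setminus(A^\rho-c)$, and such $Y$ exist because $\rho(Y+m)=\rho(Y)-m$ attains every real value. So $U(X)\subseteq U(X')$ holds if and only if $R(X)\le R(X')$, and $U(X)=U(X')$ holds if and only if $R(X)=R(X')$. I also use $\sup_{Y\in U(X)}\rho(Y)=R(X)$.

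\textbf{Items 1--3.} For monotonicity, $Y\le X$ implies $U(X)\subseteq U(Y)$, which is equivalent to $R(X)\le R(Y)$ by the first fact. For translation invariance, $U(X+c)=A^\rho-R(X+c)$ and $U(X)+c=A^\rho-(R(X)-c)$, and these are equal if and only if $R(X+c)=R(X)-c$. For positive homogeneity, $\rho$ positive homogeneous makes $A^\rho$ a cone, so $\lambda U(X)=\lambda A^\rho-\lambda R(X)=A^\rho-\lambda R(X)$ for $\lambda>0$. Equality with $U(\lambda X)$ is then again equivalent to $R(\lambda X)=\lambda R(X)$. The case $\lambda=0$ needs $U(0)=\{0\}$ on one side and $R(0)=0$ on the other, so I will handle it separately by the convention the paper adopts.

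\textbf{Items 4--5.} For convexity, suppose $R$ is convex and let $Z\in U(\lambda X+(1-\lambda)Y)$, so $\rho(Z)\le \lambda R(X)+(1-\lambda)R(Y)$. I want to write $Z=\lambda X'+(1-\lambda)Y'$ with $X'\in U(X)$ and $Y'\in U(Y)$. I take $X'=Z+a$ and $Y'=Z+b$ with constants satisfying $\lambda a+(1-\lambda)b=0$. Translation invariance gives $\rho(X')=\rho(Z)-a$, so I choose $a=\rho(Z)-R(X)$ and $b=\rho(Z)-R(Y)$. Then $\lambda a+(1-\lambda)b=\rho(Z)-(\lambda R(X)+(1-\lambda)R(Y))\le 0$. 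Any deficit is absorbed by increasing $a$ or $b$, which only lowers $\rho$ and keeps membership. For the converse, set-convexity together with convexity of $\rho$ gives $\rho(Z)\le\lambda\rho(X')+(1-\lambda)\rho(Y')\le \lambda R(X)+(1-\lambda)R(Y)$, and taking the supremum over $Z$ yields convexity of $R$.

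For concavity, suppose $R$ is concave. For $X'\in U(X)$ and $Y'\in U(Y)$ I need $\rho(\lambda X'+(1-\lambda)Y')\le R(\lambda X+(1-\lambda)Y)$. Convexity of $\rho$ bounds the left side by $\lambda R(X)+(1-\lambda)R(Y)$, and concavity of $R$ bounds that by the right side. For the converse, take $X'=X''-\rho(X'')+R(X)$, which lies in $U(X)$ with $\rho(X')=R(X)$, and similarly for $Y'$. The required inequality then does not follow directly, because $\rho$ is convex. Instead I pick $X'=W-\lambda^{-1}\dots$; more simply, I take $X'=W+(1-\lambda)(R(Y)-R(X))$ and $Y'=W+\lambda(R(X)-R(Y))$ with $W$ chosen so that $\rho(W)=\lambda R(X)+(1-\lambda)R(Y)$. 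Then $\rho(X')=R(X)$, $\rho(Y')=R(Y)$, and $\lambda X'+(1-\lambda)Y'=W$, so set-concavity yields $\rho(W)\le R(\lambda X+(1-\lambda)Y)$.

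The main obstacle is the constant-shift bookkeeping in items 4--5, together with the degenerate case $\lambda=0$ in the homogeneity item.
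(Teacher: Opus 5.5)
Your proof is correct. For item 5 it is the paper's argument.

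\textbf{Item 5.} The forward direction is the same chain
$\rho(\lambda X'+(1-\lambda)Y')\le\lambda R(X)+(1-\lambda)R(Y)\le R(\lambda X+(1-\lambda)Y)$.
For the converse, the paper uses your construction with the constant choice $W=\rho(0)-\lambda R(X)-(1-\lambda)R(Y)$, that is, $X'=\rho(0)-R(X)$ and $Y'=\rho(0)-R(Y)$. Delete the two abandoned attempts that precede your final choice.

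\textbf{Items 1--4.} Here the routes differ. The paper gives no argument and simply invokes Theorem 2 of the earlier work. You derive all four items from one structural fact: $c\mapsto A^\rho-c=\{Y:\rho(Y)\le c\}$ is a strictly increasing embedding of $(\mathbb{R},\le)$ into $(2^{L^\infty},\subseteq)$, combined with $U(X)=A^\rho-R(X)$. This makes the theorem self-contained. It also shows that items 1--3 are just this order embedding transported along $R$. Items 4--5 additionally need convexity of $\rho$ and the constant-shift trick. Your absorption of the slack $\lambda a+(1-\lambda)b<0$ by enlarging $a$ is valid for $\lambda\in(0,1)$, and $\lambda\in\{0,1\}$ is trivial.

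\textbf{The case $\lambda=0$ in item 3.} No convention in the paper rescues this case, and you should not defer to one.
\begin{itemize}
\item On the $R$ side nothing is needed: $R(0)=0$ already follows from $R(2\cdot 0)=2R(0)$.
\item On the $U$ side, homogeneity at $\lambda=0$ would require $U(0)=0\cdot U(X)=\{0\}$. But $U(0)=A^\rho-R(0)$ is non-empty and closed under adding $L^\infty_+$, so it is never a singleton.
\end{itemize}
Hence item 3 is meaningful, and true by your argument, only for $\lambda>0$. State that restriction explicitly.
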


\begin{proof}\zcref[S]{theo:equiv_R_U_monotone,theo:equiv_R_U_trans_inv,theo:equiv_R_U_ph,theo:equiv_R_U_conv} follows directly from Theorem 2 of \cite{moresco2023uncertainty}. \zcref[S]{theo:equiv_R_U_conc} is a slightly generalization of their result, here we provide a proof.  
Assume first that $R$ is concave. Let $X_1,X_2\in L^\infty$, $\lambda\in[0,1]$, and take $Y_1\in U(X_1)$ and $Y_2\in U(X_2)$. Then $\rho(Y_1)\le R(X_1)$ and $\rho(Y_2)\le R(X_2)$. By convexity of $\rho$, and concavity of $R$
\[
\rho(\lambda Y_1+(1-\lambda)Y_2)
\le
\lambda\rho(Y_1)+(1-\lambda)\rho(Y_2) \le
\lambda R(X_1)+(1-\lambda)R(X_2) \leq R(\lambda X_1+(1-\lambda)X_2).
\]
Therefore,
$\rho(\lambda Y_1+(1-\lambda)Y_2) \le R(\lambda X_1+(1-\lambda)X_2), $
which implies $\lambda Y_1+(1-\lambda)Y_2\in U(\lambda X_1+(1-\lambda)X_2)$. Thus,
\[
\lambda U(X_1)+(1-\lambda)U(X_2)
\subseteq
U(\lambda X_1+(1-\lambda)X_2),
\]
and $U$ is set-concave.

Conversely, assume that $U$ is set-concave. Fix $X_1,X_2\in L^\infty$ and $\lambda\in[0,1]$. Note that for any $X \in \lp{}, \rho(\rho(0) -R(X)) = R(X)$. Hence $\rho(0)-X_i\in U(X_i)$ for $i=1,2$. Since $U$ is set-concave, we get
\[
\lambda(\rho(0)-R(X_1))+(1-\lambda)(\rho(0)-R(X_2))=\rho(0)-(\lambda R(X_1)+(1-\lambda)R(X_2))
\in
U(\lambda X_1+(1-\lambda)X_2).
\]
Therefore,
\[
R(\lambda X_1+(1-\lambda)X_2) \geq \rho( \rho(0)-(\lambda(R(X_1))+(1-\lambda)(R(X_2)))) = \lambda R(X_1)+(1-\lambda)R(X_2).
\]
Thus, $R$ is concave.
\end{proof}

Next, we show that the robust risk measure is attained.

\begin{lemma}
    For any uncertainty set $\u$ and associated \Uset~$\U$, it holds that the supremum in the definition of $R$ is attained in the \Uset~$\U$. That is, there exists a $Y\in\lp{0,T}$ such that  $Y \in U(X )$, and
\begin{equation*}
    R^{\u}  (X ) 
    = \rho  (Y )\,,
\end{equation*}
\end{lemma}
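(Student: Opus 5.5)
The plan is to exhibit an explicit maximiser built from constants, using translation invariance of $\rho$; in particular, no compactness or closedness argument on $U(X)$ is needed. Fix $X\in L^\infty$ and, as in the proof of \zcref[S]{theo:equiv_R_U}, take the deterministic random variable
\[
Y := \rho(0) - R^{\u}(X).
\]
This is well defined as an element of $L^\infty$ because $R^{\u}(X)\in\R$ by \zcref[S]{defin R}.

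\textbf{Step 1: the risk of $Y$.} Since $\rho$ is a convex risk measure, it is translation invariant. With $c = \rho(0)-R^{\u}(X)$ this gives
\[
\rho(Y)=\rho(0+c)=\rho(0)-c=R^{\u}(X).
\]
This is the identity $\rho(\rho(0)-R(X))=R(X)$ already used in the proof of \zcref[S]{theo:equiv_R_U}.

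\textbf{Step 2: membership in $U(X)$.} Since $\rho(Y)=R^{\u}(X)\le R^{\u}(X)$, \zcref[S]{defin U set} gives $Y\in U(X)$.

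\textbf{Step 3: attainment in $U$.} By \zcref[S]{lemma u to U}, item \zcref{equality of the risk measure for two unc sets}, we have $R^U(X)=R^{\u}(X)$. Hence $\sup_{Z\in U(X)}\rho(Z)=R^{\u}(X)=\rho(Y)$ with $Y\in U(X)$, so the supremum over $U(X)$ is attained at $Y$. If one wants to avoid citing the lemma, the inequality $\rho(Z)\le R^{\u}(X)$ for every $Z\in U(X)$ holds directly by definition, and this together with Step 1 already gives attainment.

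The only point needing care is the finiteness of $R^{\u}(X)$. It is built into \zcref[S]{defin R}, where $R$ is declared real-valued; boundedness from below of $u(X)$ together with monotonicity of $\rho$ is consistent with this. Beyond that, the argument is just translation invariance of $\rho$, so I do not expect any real obstacle.
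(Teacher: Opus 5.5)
Your proposal is correct and follows the paper's proof: take the constant $Y:=\rho(0)-R(X)$, use translation invariance of $\rho$ to get $\rho(Y)=R(X)$, and conclude $Y\in U(X)$ from the definition of $U$ together with $R^{u}(X)=R^{U}(X)$. Your extra remark on the finiteness of $R^{u}(X)$ is a harmless addition the paper leaves implicit.
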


 \begin{proof}
    Recall that $R^{\u}  (X ) = R^{\U}  (X )$. Set $Y  := \rho(0)-R  (X )$, then by translation invariance  of $\rho $ it holds that $\rho (Y )= \rho (\rho(0)-R  (X ) ) = \rho(0) - (\rho(0)-R  (X )) = R  (X )$. Finally $Y $ belongs to the uncertainty set since $Y  = \rho(0)-R  (X ) \in U  (X ) = \left\{ Y \in \lp{t+1} : \rho  (Y) \leq R  (X )  \right\} $.
 \end{proof}

We end this section by establishing some useful properties of the \Uset.

\begin{lemma}\label{lemma U containted in U + Y}
    Let $u$ be an order preserving uncertainty set and $U$ its \Uset. Then, for all $X \in \lp{}$ we have the following: 
    \begin{enumerate}[label = $\arabic*)$]
        \item\label{lemma U containted in U + Y: item Y non negative} if $Y \in \lp{+}_+$ then  $U(X) +Y \subseteq U(X)$.

        \item\label{lemma U containted in U + Y: item Y acceptable 2} if $\rho$ is normalized and $U(X) +Y\subseteq U(X) $,  then $Y \in A^\rho $.
        
        \item\label{lemma U containted in U + Y: item Y real} if $c \in \R$ then $U(X) +c \subseteq U(X) $ if and only if $c\geq 0$. With equality only if $c=0$.
    \end{enumerate}

\end{lemma}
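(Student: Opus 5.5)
Throughout, $\rho$ is a convex, hence monetary, risk measure, so monotonicity and translation invariance of $\rho$ are available. The proof works directly with the explicit description $U(X) = \{Z : \rho(Z) \le R(X)\}$ from \zcref[S]{defin U set}, equivalently $U(X) = A^\rho - R(X)$ by \zcref[S]{lemma u to U}. Notably, the order preservation of $u$ is not needed for the three items. Each item reduces to a statement about the sublevel set $\{\rho \le R(X)\}$, where $R(X)$ is a fixed real number.

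For item \ref{lemma U containted in U + Y: item Y non negative}, take $Z \in U(X)$ and $Y \ge 0$. Then $Z + Y \ge Z$, so monotonicity of $\rho$ gives $\rho(Z+Y) \le \rho(Z) \le R(X)$. Hence $Z + Y \in U(X)$.

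For item \ref{lemma U containted in U + Y: item Y acceptable 2}, use the attained element from the previous lemma: $Z_0 := \rho(0) - R(X)$ lies in $U(X)$, and by translation invariance $\rho(Z_0) = R(X)$. Normalization gives $\rho(0)=0$, so $Z_0 = -R(X)$. The hypothesis yields $Z_0 + Y \in U(X)$, so
\[
\rho(Y) - R(X) \;=\; \rho(Y + Z_0) \;\le\; R(X),
\]
where the equality uses translation invariance with the constant $-R(X)$. This only gives $\rho(Y) \le 2R(X)$, which is not the claim. The fix is to iterate. By induction $Z_0 + nY \in U(X)$ for every $n \in \N$, so
\[
n\rho(Y) \;\le\; \rho(nY) \;\le\; 2R(X) \quad \text{for all } n \in \N.
\]
Here $\rho(nY) \ge n\rho(Y)$ when $\rho(Y) > 0$ should follow from convexity together with $\rho(0) = 0$, since $\rho(Y) = \rho\big(\tfrac1n (nY) + (1-\tfrac1n)\,0\big) \le \tfrac1n \rho(nY)$. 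If $\rho(Y) > 0$, the left side is unbounded in $n$, a contradiction. Therefore $\rho(Y) \le 0$, that is, $Y \in A^\rho$. This iteration step is the main obstacle. I first expected a one-shot argument to work, but normalization must be combined with convexity and iteration.

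For item \ref{lemma U containted in U + Y: item Y real}, the direction $c \ge 0 \Rightarrow U(X) + c \subseteq U(X)$ is item \ref{lemma U containted in U + Y: item Y non negative}. Conversely, suppose $c < 0$ and take $Z_0$ as above. Then $\rho(Z_0 + c) = R(X) - c > R(X)$, so $Z_0 + c \notin U(X)$ and the inclusion fails. For the equality clause, assume $c > 0$. Then $Z_0 - c \in U(X)$ fails, because $\rho(Z_0 - c) = R(X) + c > R(X)$. Since $Z_0 - c \in U(X)$ is exactly what $Z_0 \in U(X) + c$ would require, $Z_0 \notin U(X) + c$ and the inclusion is strict. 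Hence equality holds only for $c = 0$, and $c = 0$ trivially gives equality.
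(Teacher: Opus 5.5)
Your item 2 contains a sign error, and that error creates the obstacle you then work around. Translation invariance, $\rho(W+c)=\rho(W)-c$, with $c=-R(X)$ gives $\rho(Y+Z_0)=\rho(Y-R(X))=\rho(Y)+R(X)$, not $\rho(Y)-R(X)$. You use the convention correctly in item 3, where you write $\rho(Z_0+c)=R(X)-c$.

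With the correct sign, $Z_0+Y\in U(X)$ yields $\rho(Y)+R(X)\le R(X)$, that is, $\rho(Y)\le 0$, in one step. This is exactly the paper's proof, phrased there contrapositively: if $\rho(Y)>0$, then $Y-R(X)\in U(X)+Y$ while $\rho(Y-R(X))=R(X)+\rho(Y)>R(X)$, so $Y-R(X)\notin U(X)$.

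As written, your intermediate bounds $\rho(Y)\le 2R(X)$ and $\rho(nY)\le 2R(X)$ are false in general. Take $u(X)=\{X\}$ (so $R=\rho$), $X=1$ and $Y=0$. Then $U(1)+0\subseteq U(1)$ trivially and $R(1)=-1$, so your bound reads $0\le -2$. The iteration still reaches the right conclusion, but only because any $n$-independent upper bound on $\rho(nY)$ would suffice. After correcting the sign the bound is $\rho(nY)\le 0$, so $n=1$ already does the job and the appeal to convexity of $\rho$ is unnecessary. Replace the iteration by the direct computation.

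Items 1 and 3 are correct. Item 1 is the paper's argument. For item 3 you argue directly, exhibiting $Z_0+c\notin U(X)$ for $c<0$ and $Z_0\notin U(X)+c$ for $c>0$. The paper instead deduces item 3 from items 1 and 2, which implicitly uses the normalization hypothesis of item 2. Your route needs only translation invariance and also makes the strictness clause explicit.
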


 \begin{proof}
\zcref[S]{lemma U containted in U + Y: item Y non negative}, take any $Y \in \lp{+}_+$. Then,  for any $Z\in U(X)$, as $Z-Y\leq Z$, monotonicity implies $\rho(Z-Y) \geq \rho(Z)$. Hence,  $ U(X) + Y = \{ Z \in \lp{}: \rho(Z-Y) \leq R(X)\} \subseteq \{ Z \in \lp{}: \rho(Z) \leq R(X)\}  = U(X)$.

\zcref[S]{lemma U containted in U + Y: item Y acceptable 2}, we need to show that if $Y \notin  A^\rho $ then there exists a $Z \in U(X)+Y$ such that $Z \notin U(X) $. Take $Z=Y-R(X)$, clearly, $Z \in U(X)+Y$. Furthermore, $\rho(Z) = R(X) + \rho(Y) > R(X) $. Hence, $Z \notin U(X)$.

\zcref[S]{lemma U containted in U + Y: item Y real}, is a direct consequence of \zcref[S]{lemma U containted in U + Y: item Y non negative,lemma U containted in U + Y: item Y acceptable 2}. 

\end{proof}

\section{Continuities of sets and risk measures.}\label{sec:continuity}

This section is devoted to continuity properties of robust risk measures, which is of independent interest but also needed for the dual representation of robust risk measures. 

\subsection{Continuity of set-valued maps}
We first recall some definitions of limits of sets.

\begin{definition}

    The distance between  a random variable $X \in \lp{T}$ and a set $A \subseteq \lp{T}$ is given by
    \begin{equation*}
         d(X, A) := \inf_{Y\in A} \Vert Y-X \Vert.
    \end{equation*}
\end{definition}

\begin{definition}[Limits of sets]
Let $\{A ^n\}_{n\in\N}$ be a sequence of sets such that $A ^n \subseteq \lp{t}$ for all $n\in\N$. We recall the following limits of set functions \citep{aubin2009set}:

\begin{enumerate}[label = $\arabic*)$]
    \item 
    \textbf{Outer limit:}\footnote{Note that $\limsup$ is the limit supremum of functions and sequences, while $\Limsup$ is the outer limit of sets.} 
    The outer limit of the sequence of sets $\{A ^n\}_{n\in\N}$ is given by
    \begin{equation*}
    \Limsup_{n \to \infty} A ^n 
    := \big\{ X \in \lp{t} ~:~ \liminf_{n \to \infty} d(X,A ^n) = 0 \big\}     \,.
    \end{equation*}

    \item 
    \textbf{Inner limit:} 
    The inner limit of the sequence of sets $\{A ^n\}_{n\in\N}$ is given by
    \begin{equation*}
         \Liminf_{n \to \infty} A ^n := \big\{ X \in \lp{t} ~:~\lim_{n \to \infty} d(X,A ^n) = 0 \big\} 
         \,.
        \end{equation*}
    \item 
    \textbf{Limit:} If $\Liminf\limits_{n \to \infty} A ^n= \Limsup\limits_{n \to \infty} A ^n$, then the limit of $\{A ^n\}_{n\in\N}$ exists and is defined as 
    \begin{equation*}
    \lim_{n \to \infty} A ^n 
    :=
    \Liminf_{n \to \infty} A ^n= \Limsup_{n \to \infty} A ^n \,.    
    \end{equation*}
\end{enumerate}
\end{definition}

Note that the above limits are defined as norm-limits. That is, the outer limit is the set of all cluster points in the norm of sequences, $\{a_n\}_{n\in\N}$ where $a_n \in A_n$, and the inner limit is the set of all limit points in the supremum norm.

\begin{definition}[Continuity of set-valued functions]
   A set function $u
   : \lp{t,T }\rightarrow  2^{\lp{t}} $ may satisfy the following continuity properties:
   \begin{enumerate}[label = $\arabic*)$]
\item  \textbf{Lower semi-continuity (lsc):} if
for all $\{X^n \}_{n\in\N} \subseteq \lp{t+1,T}$ such that $X^n  \limL  X \in \lp{t+1}$ it holds that 
\[ u (X  ) \subseteq \Liminf\limits_{n \rightarrow \infty} u  (X ^n), \]

\item  \textbf{Upper semi-continuity (usc):} if 
for all $\{X^n \}_{n\in\N} \subseteq \lp{t+1,T}$ such that $X^n  \limL  X \in \lp{t+1}$ it holds that 
\[ \Limsup\limits_{n \rightarrow \infty} u  (X ^n) \subseteq u (X  ), \]

        \item \textbf{ Fatou lower semi-continuity (Fatou lsc):}  If for all bounded $\{X^n \}_{n\in\N} \subseteq \lp{t+1,T}$ such that $X^n  \limas X \in \lp{t+1}$  it holds that 
        \[u (X  ) \subseteq \Liminf\limits_{n \rightarrow \infty} u  (X ^n) .\] 

   \item \textbf{Fatou upper semi-continuity (Fatou usc):} If for all bounded 
$\{X^n\}_{n\in\N} \subseteq \lp{t+1,T}$ such that 
$X^n \limas X \in \lp{t+1,T}$ it holds that
\[
\Limsup\limits_{n \rightarrow \infty} u(X^n) \subseteq u(X).
\]

       \item  \textbf{Lipschitz continuity:}  if for all $X  , Y  \in \lp{t,T}$ it holds that 
    \begin{equation*}
        U(X ) \subseteq U(Y ) + \Vert X  - Y  \Vert \;  \mB,
    \end{equation*}  
    where $\mB := \{\lambda \in \R ~:~ | \lambda | \leq 1    \} $ is the unit ball on $\R$.

   \end{enumerate}
   \end{definition}

\begin{remark}
The definition of Lipschitz continuity above is slightly stronger than the usual one, for details, see e.g., \cite{aubin2009set}. Note that for lsc and usc, both norm-based and Fatou, the convergence of $u(X^n)$ is with respect to the strong topology. Thus, the Fatou lsc and Fatou usc conditions introduced above are stronger than their
purely almost-sure counterparts. Indeed, the set limits $\Liminf$ and $\Limsup$
used here are the Painlevé--Kuratowski limits induced by the $L^\infty$ norm.
Hence, although the sequence $X^n$ is assumed to converge almost surely to $X$,
the convergence of the corresponding selections from $u(X^n)$ is still required
in the $L^\infty$ norm. This is a strong requirement: even singleton-valued maps,
such as $u(X)=\{X\}$, need not satisfy Fatou lsc under this definition, since
bounded almost sure convergence does not imply convergence in $L^\infty$.
A weaker version could be obtained by replacing the norm-based
Painlevé--Kuratowski limits with sequential Painlevé--Kuratowski limits defined
directly through almost sure convergence. The results in this study remain
valid under this weaker formulation, provided the corresponding scalar Fatou
assumptions are imposed where needed.
\end{remark}

The next result states that monotonicity and translation invariance implies Lipschitz continuity of set function; a property that is known for risk measures. Indeed any risk measure that is monotone and translation invariant is Lipschitz continuous.

\begin{lemma}\label{lemma lipschitz}
    If a set function $U
   : \lp{t,T }\rightarrow  2^{\lp{t}} $ is monotone and translation invariant, then $\U$ is Lipschitz continuous.
\end{lemma}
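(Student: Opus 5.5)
Fix $X,Y\in\lp{}$ and set $c:=\Vert X-Y\Vert$. We need $U(X)\subseteq U(Y)+c\,\mB$; since $\mB=[-1,1]$, it suffices to exhibit, for each $Z\in U(X)$, a real constant $a\in[-c,c]$ with $Z-a\in U(Y)$. The plan mirrors the scalar argument for monetary risk measures: sandwich $X$ between constant shifts of $Y$, then use monotonicity and translation invariance of the set-valued map in place of those of $\rho$.

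First, from the definition of the essential supremum norm we have $Y-c\le X\le Y+c$ almost surely. Monotonicity of $U$ (in the sense of the paper: $Y\le X$ implies $U(X)\subseteq U(Y)$) applied to $X\le Y+c$ gives $U(Y+c)\subseteq U(X)$. Applied to $Y-c\le X$ it gives $U(X)\subseteq U(Y-c)$. Translation invariance then rewrites the right-hand side as $U(Y-c)=U(Y)-c$. Hence $U(X)\subseteq U(Y)-c$, and since $-c\in c\,\mB$, we obtain
\[
U(X)\subseteq U(Y)-c\subseteq U(Y)+c\,\mB,
\]
where $U(Y)+c\,\mB$ is read as $\{W+\lambda: W\in U(Y),\ \lambda\in c\,\mB\}$ with $\lambda$ a constant random variable. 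This is exactly the required inclusion.

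The only point needing care is the direction of the order conventions. Monotonicity here is antitone: a larger position has a smaller uncertainty set. So I must choose the shift $Y-c$, not $Y+c$, to obtain an inclusion of $U(X)$ into something built from $U(Y)$. I also need to interpret the sum with $\mB$ correctly as adding constant random variables.

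Once that is checked, the argument is essentially immediate. The inclusion in the opposite direction follows by symmetry, exchanging $X$ and $Y$, although it is not required by the definition as stated.
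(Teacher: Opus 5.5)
Your proof is correct and is essentially the paper's argument: from $Y-\Vert X-Y\Vert\le X$, monotonicity gives $U(X)\subseteq U(Y-\Vert X-Y\Vert)$, translation invariance rewrites this as $U(Y)-\Vert X-Y\Vert$, and the constant shift is absorbed into $\Vert X-Y\Vert\,\mB$. Your justification of the last inclusion, $-c\in c\,\mB$, is the right one. The extra inclusion $U(Y+c)\subseteq U(X)$ that you derive is correct but not needed.
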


 \begin{proof}
    Let $X ,Y  \in \lp{t,T}$, then we clearly have that $Y  - \Vert X  - Y  \Vert \leq X  $. Therefore 
    \begin{align*}
        U  (X ) 
        &\subseteq U \big(Y  -\Vert X  - Y  \Vert  \big) 
        = U (Y ) -\Vert X  - Y  \Vert  
        \subseteq U (Y ) - \mB \Vert X  - Y  \Vert  
        = U (Y ) + \mB \Vert X  - Y  \Vert  ,
    \end{align*}
    where the first inclusion is due to the monotonicity of $\U$, and translation invariance yields the equality. The last inclusion follows as $0 \in \mB$ and the last equality follows from the symmetry of the unit ball, i.e. $\mB = - \mB$. 
\end{proof}

\subsection{Continuities of risk measures}

We recall well-known definitions of continuity of risk measures.

\begin{definition}[Continuities of risk measures]
Let $R \colon L^\infty\to \R$ be a risk measure. Then we say that $R $ is
    \begin{enumerate}[label = $\arabic*)$]
        \item \textbf{Lower semi-continuous (lsc):} if $R (X ) \leq \liminf\limits_{n \rightarrow \infty} R (X^n )$ for all $\{X^n \}_{n\in\N} \subseteq \lp{t+1,T}$ such that $X^n  \limL  X \in \lp{t+1}$.
        
        \item \textbf{Upper semi-continuous (usc):} if $\limsup\limits_{n \rightarrow \infty} R (X^n ) \leq R (X )$ for all $\{X^n \}_{n\in\N} \subseteq \lp{t+1,T}$ such that $X^n  \limL  X \in \lp{t+1}$.
        
        \item \textbf{Continuous:} if $R$ is both lower and upper semi-continuous.
        
        \item \textbf{Fatou lower semi-continuous (Fatou lsc):} if $R (X ) \leq \liminf\limits_{n \rightarrow \infty} R (X^n )$,  for all bounded sequences $\{X^n \}_{n\in\N} \subseteq \lp{t+1,T}$ such that $X^n  \limas X \in \lp{t+1}$.

\item \textbf{Fatou upper semi-continuous (Fatou usc):} if $ \limsup\limits_{n \rightarrow \infty} R(X^n) \leq R(X),$
for all bounded sequences $\{X^n\}_{n\in\N} \subseteq \lp{t+1,T}$ such that 
$X^n \limas X \in \lp{t+1,T}$.
        
        \item \textbf{Lipschitz continuous:} if  $|R (X )- R (Y )| \leq  \Vert X  - Y  \Vert  $ for all $ X , Y  \in \lp{t+1,T} $.

        \item \textbf{Weak* lsc}:  if the lower level sets $\{X \in L^\infty : R(X) \le c\}$ are weak* closed for all $c \in \mathbb{R}$.
    \end{enumerate}
\end{definition}

With the above continuity definitions of risk measures and uncertainty sets, we next derive the equivalence between continuity of robust risk measures and continuity of their uncertainty sets.

\begin{proposition}\label{prop: continuity}
    Let $R$ be a  robust risk measure associated with an uncertainty set $\u$ and its \Uset \; $\U$. Then, the following hold:
    \begin{enumerate}[label = $\arabic*)$]
        \item \label{prop: continuity - lower}  If $u $ is lower semi-continuous, then $R $ is lower semi-continuous.
        
        \item\label{prop: continuity - upper} If $u $ is upper semi-continuous and $u  (Y )$ is compact with respect to the strong topology in $\lp{}$ for any $Y \in \lp{t+1,T}$,  then $R $ is upper semi-continuous.

        \item \label{prop: continuity - fatou}  If $u $ is Fatou lsc,  then $R $ is Fatou lsc.
        
        \item \label{prop: continuity - fatou usc} If $u$ is Fatou usc and, for every bounded sequence
$\{X^n\}_{n\in\N}\subseteq \lp{t+1,T}$ such that $X^n\limas X$, the set
$\bigcup_{n\geq N}u(X^n)$ is contained in a compact subset of $\lp{}$ for some
$N\in\N$, then $R$ is Fatou usc.
      
        \item \label{prop: continuity - lipschitz} If $\u$ is Lipschitz continuous, then $R$ is Lipschitz continuous.
    \end{enumerate}
\end{proposition}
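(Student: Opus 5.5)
The plan is to treat each item by approximating near-optimal selections and transporting them through the set limits. Throughout we use that $\rho$ is a convex risk measure, hence monotone and translation invariant, and therefore $1$-Lipschitz in the supremum norm: $|\rho(Y)-\rho(Z)|\le\Vert Y-Z\Vert$. This is the only property of $\rho$ we need, apart from Fatou continuity of $\rho$ in the Fatou items.

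For item~\ref{prop: continuity - lower}, fix $X^n\limL X$ and $\varepsilon>0$, and pick $Y\in u(X)$ with $\rho(Y)\ge R(X)-\varepsilon$. Lower semi-continuity of $u$ gives $Y\in\Liminf_n u(X^n)$, so $d(Y,u(X^n))\to0$. Hence there are $Y^n\in u(X^n)$ with $\Vert Y^n-Y\Vert\to0$. Then
\[
R(X^n)\ge\rho(Y^n)\ge\rho(Y)-\Vert Y^n-Y\Vert,
\]
so $\liminf_n R(X^n)\ge R(X)-\varepsilon$, and we let $\varepsilon\downarrow0$. Item~\ref{prop: continuity - fatou} follows verbatim: the selections still converge in norm, because the set limits are norm Painlevé--Kuratowski limits, so only the Lipschitz property of $\rho$ is needed. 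Item~\ref{prop: continuity - lipschitz} is also direct. For $Y\in u(X)$, Lipschitz continuity (applied to $u$) gives $Y=Z+c$ with $Z\in u(X')$ and $|c|\le\Vert X-X'\Vert$. Translation invariance of $\rho$ gives $\rho(Y)=\rho(Z)-c\le R(X')+\Vert X-X'\Vert$. Taking the supremum and then exchanging the roles of $X$ and $X'$ yields the claim.

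For item~\ref{prop: continuity - upper}, take $X^n\limL X$ and pass to a subsequence realizing $\limsup_n R(X^n)$. Choose $Y^n\in u(X^n)$ with $\rho(Y^n)\ge R(X^n)-1/n$. The main obstacle is extracting a norm-convergent subsequence of $(Y^n)$. Compactness of each single $u(X^n)$ does not by itself give this. I would first try to use Lipschitz-type control: for $n$ large, $u(X^n)$ lies within a small norm neighbourhood of $u(X)$, so one can pick $Z^n\in u(X)$ with $\Vert Y^n-Z^n\Vert\to0$. Compactness of $u(X)$ then gives a subsequence $Z^{n_k}\to Y$ in norm, hence $Y^{n_k}\to Y$ as well. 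Alternatively one can read the hypothesis as providing a compact set containing the tail $\bigcup_n u(X^n)$, exactly as in item~\ref{prop: continuity - fatou usc}; I would state which reading is used. In either case the limit satisfies $Y\in\Limsup_n u(X^n)\subseteq u(X)$ by upper semi-continuity. Lipschitz continuity of $\rho$ then gives
\[
\limsup_n R(X^n)=\lim_k\rho(Y^{n_k})=\rho(Y)\le R(X).
\]

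Item~\ref{prop: continuity - fatou usc} is identical, and there the assumed compact set $K\supseteq\bigcup_{n\ge N}u(X^n)$ directly supplies the norm-convergent subsequence of the near-maximizers $Y^n$. The limit lies in $\Limsup_n u(X^n)\subseteq u(X)$ by Fatou upper semi-continuity, and continuity of $\rho$ again yields $\limsup_n R(X^n)\le R(X)$.
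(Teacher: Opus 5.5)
Your arguments for items 1), 3), 4) and 5) are correct and match the paper's proof. In 1) and 3) you replace the paper's appeal to $\rho(\Liminf u^n)\subseteq\Liminf\rho(u^n)$ with an explicit choice of near-optimal selections, which is the same argument unpacked; 4) and 5) follow the paper almost verbatim. The gap is in item 2). You are right that compactness of each $u(X^n)$ separately does not yield a norm-convergent subsequence of near-maximizers. However, your first remedy, that for large $n$ the set $u(X^n)$ lies in a small norm neighbourhood of $u(X)$, is a Hausdorff-type upper semicontinuity. It does not follow from the Painlev\'e--Kuratowski condition $\Limsup_n u(X^n)\subseteq u(X)$ assumed here. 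Your second remedy does not prove the item; it replaces its hypothesis. Neither route can be completed under the stated assumptions, because item 2) is false as written.

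Here is a counterexample. Fix any convex $\rho$, and set $u(0):=\{0\}$ and $u(X):=\{-1/\|X\|\}$ (a constant) for $X\neq 0$. Every value is a singleton, hence non-empty, bounded and compact.
\begin{itemize}
\item If $X^n\limL X\neq 0$, then eventually $X^n\neq 0$, so $u(X^n)\to u(X)$.
\item If $X^n\limL 0$, then for indices with $X^n\neq 0$ we have $d(W,u(X^n))\geq 1/\|X^n\|-\|W\|\to\infty$. So $\liminf_n d(W,u(X^n))=0$ is only possible along indices with $X^n=0$, which forces $W=0$. Hence $\Limsup_n u(X^n)\subseteq u(0)$.
\end{itemize}
Thus $u$ is usc with compact values. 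Yet for constant positions $R(1/n)=\rho(-n)=\rho(0)+n\to\infty$, while $R(0)=\rho(0)$, so $R$ is not usc.

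The paper's own proof of this item has the same defect. It asserts $\limsup_n\sup\rho(u^n)\leq\sup\rho(\Limsup_n u^n)$ ``for any sequence of compact sets $u^n$''. This already fails for $u^n=\{-n\}$, whose outer limit is empty. A uniform hypothesis is needed, for example either of the following.
\begin{itemize}
\item For every $X^n\limL X$, the tail $\bigcup_{n\geq N}u(X^n)$ lies in a compact set, as in item 4).
\item $u$ is Hausdorff upper semicontinuous and $u(X)$ is compact.
\end{itemize}
Under either assumption your argument goes through as written. So you should state item 2) with such a hypothesis and note explicitly that pointwise compactness alone does not suffice, rather than leaving the reading open.
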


 \begin{proof}
To ease notation, we set $\rho  ( u) := \{ \rho (Y) : Y \in u \}$ for any $u \subseteq \lp{t+1}$. Of course we have that if $u \subseteq u'$, then $\rho  (u) \subseteq \rho (u')$. Additionally, fix $X ^n$ such that it converges to $X $, as $n \to \infty$ in $\lp{}$.

 \zcref[S]{prop: continuity - lower},  by Proposition 1.2.2 of \cite{aubin2009set}, since $\rho $ is Lipschitz continuous with respect to the $\lp{}$ norm, we have $\rho  \big( \Liminf\limits_{n \rightarrow \infty} u^n \big) \subseteq \Liminf\limits_{n \rightarrow \infty} \rho (u^n)  $ for any sequence of sets $u^n$, then \begin{align*}
            R  (X ) 
            &= \esssup \left\{ \rho  \big( u \left(X  \right)\big) \right\}
            \\ &\leq \esssup \left\{ \rho  \left( \Liminf\limits_{n \rightarrow \infty} u \left( X ^n \right)\right) \right\} 
            \\ &\leq \esssup \left\{ \Liminf\limits_{n \rightarrow \infty} \rho  \big(  u ( X ^n)\big) \right\}
 \\ &\leq     \Liminf\limits_{n \rightarrow \infty} \left( \esssup  \left\{\rho  \left(  u ( X ^n)\right) \right\} \right)
 \\ &=\liminf\limits_{n \rightarrow \infty}  R\big( X ^n\big) .
    \end{align*}

 \zcref[S]{prop: continuity - upper}, by Lemma 1.1.6 of \cite{aubin2009set}, if $\rho $ is  Lipschitz continuous,  then $\limsup\limits_{n \rightarrow \infty} \left( \esssup \rho (u^n) \right) \leq \esssup \rho  \big(\Limsup\limits_{n \rightarrow \infty} u^n \big)$ for any sequence of compact sets $u^n$. Therefore,
    \begin{align*}
        \limsup\limits_{n \rightarrow \infty} R \left(X ^n\right) 
        &= \limsup\limits_{n \rightarrow \infty}  \left(\esssup \left\{ \rho \left(u  (X ^n )\right) \right\} \right)
        \\ &\leq    \esssup \left\{ \rho  \left(\Limsup\limits_{n \rightarrow \infty} u  (X ^n)  \right)\right\}
        \\ &\leq    \esssup \{ \rho  (  u  (X ) )\} 
        \\ &=  R \left(X \right)\,.  
    \end{align*}

\zcref[S]{prop: continuity - fatou} follows by using similar steps as \zcref[S]{prop: continuity - lower}, with the only difference that $X ^n$ now is bounded and converges to $X $ almost surely, as $n \to \infty$.

\zcref[S]{prop: continuity - fatou usc}, let $\{X^n\}_{n\in\N}\subseteq \lp{t+1,T}$ be bounded and such that $X^n\limas X$. If $\limsup\limits_{n\to\infty} R(X^n)=-\infty$, there is nothing to prove.
Otherwise, by the compactness assumption and the Lipschitz continuity of $\rho$, 
$\limsup\limits_{n\to\infty} R(X^n)<\infty$. Hence, there exists a subsequence
$\{X^{n_k}\}_{k\in\N}$ such that $R(X^{n_k})\to \limsup\limits_{n\to\infty} R(X^n)$.
For each $k\in\N$, choose $Z^k\in u(X^{n_k})$ such that
\[
R(X^{n_k})-\frac{1}{k}\leq \rho(Z^k)\leq R(X^{n_k}).
\]
By assumption, there exists $N\in\N$ and a compact set $K\subseteq \lp{}$ such that $\bigcup_{n\geq N}u(X^n)\subseteq K$. Hence, passing to a further subsequence if necessary, there exists $Z\in\lp{}$ such that $Z^k\limL Z$. Since $Z^k\in u(X^{n_k})$, Fatou usc of $u$ yields
\[
Z\in \Limsup_{k\to\infty}u(X^{n_k})\subseteq u(X).
\]
By Lipschitz continuity of $\rho$, $\rho(Z^k)\to\rho(Z)$. Therefore,
\[
\limsup_{n\to\infty}R(X^n)=\lim_{k\to\infty}R(X^{n_k})=\lim_{k\to\infty}\rho(Z^k)=\rho(Z)\leq R(X).
\]
Thus, $\limsup_{n\to\infty}R(X^n)\leq R(X)$.

    \zcref[S]{prop: continuity - lipschitz}, let $X , Y  \in \lp{t+1,T}$ such that $\Vert X  - Y  \Vert  
 >0$ and $\mB$ the unit ball on $\R$, note that $\mB=-\mB$. Then, by Lipschitz continuity of $\u$, it holds that
    \begin{align*}
         R (X ) 
         &= \esssup \big\{ \rho (Z) : Z \in u  (X  ) \big\}
         \\ &
         \leq \esssup \big\{  \rho (Z) : Z \in u  (Y  ) + \Vert X  - Y  \Vert \, \mB \big\}
          \\ &
         = \esssup \big\{  \rho (Z) : Z \in u  (Y  ) - \Vert X  - Y  \Vert \, \mB \big\}
         \\ &=
         \esssup \big\{  \rho (Z - b\, \Vert X  - Y  \Vert ) : Z \in u  (Y  ), \; \Vert b \Vert \leq 1 \big\}
         \\ 
         &=
         \esssup \big\{  \rho (Z) + \Vert X  - Y  \Vert    : Z \in u  (Y  ) \big\}
         \\ &
         =R (Y ) + \Vert X  - Y  \Vert  \,. 
    \end{align*}
    
    By reversing the roles of $X$ and $Y$, we have that $|R (X )- R (Y )| \leq  \Vert X  - Y  \Vert  $.
    \end{proof}

\begin{remark}
For an uncertainty set $\u$, we can always construct a compact uncertainty set $\u'$ such that $R^{\u} = R^{\u'}$. The compact uncertainty set $\u'$ can be chosen as $u' ( X ) := \{\rho(0)-R^{\u}  ( X )  \} $. Since $u'$ is  a singleton, it is always compact with respect to $\lp{}$.     
\end{remark}   

As a last result of this section, we develop the equivalence between continuity of the \Uset~and continuity of the robust risk measure.

\begin{theorem}\label{theo continuity}
Let $R$ be a  robust risk measure with  associated \Uset ~ $\U$. 
Then, the following hold:
    \begin{enumerate}[label = $\arabic*)$]
        \item \label{theo continuity: lower} $R $ is lower semi-continuous  if and only if $U$ is lower semi-continuous.

         \item\label{theo continuity: upper}  $R $ is upper semi-continuous if and only if $U$ is upper semi-continuous.

          \item \label{theo continuity: fatou}  $R $ is Fatou lsc if and only if $U$ is Fatou lsc. 
          \item \label{theo continuity: fatou usc} $R$ is Fatou usc if and only if $U$ is Fatou usc.
        \item \label{theo continuity: lipschitz}  $R$ is Lipschitz continuous if and only if $\U$ is Lipschitz continuous.
    \end{enumerate}
\end{theorem}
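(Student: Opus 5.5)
The plan is to treat each item as two implications. One direction comes from \zcref[S]{prop: continuity} applied with $u=U$, using $R^U=R^u$ from \zcref[S]{lemma u to U}. The other direction uses the explicit form $U(X)=A^\rho-R(X)$ from \zcref[S]{lemma u to U}, together with translation invariance and Lipschitz continuity of $\rho$.

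For \zcref[S]{theo continuity: lower,theo continuity: fatou}, the direction ``$U$ lsc (resp.\ Fatou lsc) $\Rightarrow$ $R$ lsc (resp.\ Fatou lsc)'' is \zcref[S]{prop: continuity - lower,prop: continuity - fatou} with $u=U$. Conversely, assume $R$ is lsc and let $X^n\to X$ (in norm, or boundedly a.s.). Take $Y\in U(X)$, so $\rho(Y)\le R(X)$. Define $\varepsilon_n:=(R(X)-R(X^n))^+$ and $Y^n:=Y+\varepsilon_n$. Then
\[
\rho(Y^n)=\rho(Y)-\varepsilon_n\le R(X)-\varepsilon_n\le R(X^n),
\]
so $Y^n\in U(X^n)$. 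Moreover $\Vert Y^n-Y\Vert=\varepsilon_n$, and $\limsup_n\varepsilon_n=(R(X)-\liminf_n R(X^n))^+=0$ by lsc of $R$. Hence $d(Y,U(X^n))\to0$, i.e.\ $Y\in\Liminf U(X^n)$. The Fatou case is identical, since this argument only uses the scalar inequality and never the mode of convergence of $X^n$.

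For \zcref[S]{theo continuity: upper,theo continuity: fatou usc}, the direction ``$R$ usc $\Rightarrow$ $U$ usc'' goes as follows. Let $Y\in\Limsup U(X^n)$, so there are $n_k$ and $Y^k\in U(X^{n_k})$ with $Y^k\to Y$ in norm. Lipschitz continuity of $\rho$ gives
\[
\rho(Y)=\lim_k\rho(Y^k)\le\limsup_k R(X^{n_k})\le R(X),
\]
so $Y\in U(X)$. For the converse I will not use \zcref[S]{prop: continuity - upper}, since $U(X)$ is not compact. Instead I will use the point $\rho(0)-R(X^n)\in U(X^n)$. Choose a subsequence with $R(X^{n_k})\to\limsup_n R(X^n)=:L$; this is finite because $R$ is Lipschitz, being monotone and translation invariant (or I will argue finiteness directly). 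Then the constants $\rho(0)-R(X^{n_k})$ converge in norm to $\rho(0)-L$, so $\rho(0)-L\in\Limsup U(X^n)\subseteq U(X)$. This gives $\rho(\rho(0)-L)=L\le R(X)$. The Fatou versions are verbatim the same.

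For \zcref[S]{theo continuity: lipschitz}, one direction is \zcref[S]{prop: continuity - lipschitz}. For the converse, let $Z\in U(X)$ and set $c:=R(X)-R(Y)$, so $|c|\le\Vert X-Y\Vert$. Then $Z+c$ satisfies $\rho(Z+c)=\rho(Z)-c\le R(Y)$, so $Z+c\in U(Y)$ and $Z\in U(Y)-c\subseteq U(Y)+\Vert X-Y\Vert\mB$.

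The main obstacle is the converse in the usc items. There the finiteness of $\limsup R(X^n)$ and the reliance on translation invariance of $\rho$ (through the constants) must be handled carefully. Since $\rho$ is convex hence monetary, constants lie in $U$ as needed, and boundedness of $X^n$ plus monotonicity of $U$ (if available) or Lipschitzness of $R$ gives finiteness. If monotonicity of $R$ is not assumed, I would instead note that if $L=+\infty$, the constants $\rho(0)-R(X^{n_k})$ still give points of $U(X^{n_k})$ with arbitrarily large $\rho$. I would then try to extract a contradiction via $U(X^n)\ni \rho(0)-R(X)-1$ and usc applied to a bounded subsequence.
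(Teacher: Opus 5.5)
Your proposal follows the paper's proof essentially step by step. The ``if'' directions of the lsc, Fatou lsc and Lipschitz items come from \zcref[S]{prop: continuity} applied with $u=U$. The converses shift elements of $U(X)$ by constants, using translation invariance of $\rho$. For both usc items you bypass the compactness hypothesis by using the constant selections $\rho(0)-R(X^{n_k})\in U(X^{n_k})$, which is what the paper does. Your lsc construction $Y^n=Y+(R(X)-R(X^n))^+$ is slightly more robust than the paper's shift by $\liminf_n R(X^n)-\inf_{m\ge n}R(X^m)$. The latter is not well defined when $\liminf_n R(X^n)=+\infty$.

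The one step that fails as written is your main justification that $L=\limsup_n R(X^n)$ is finite in the converse of the usc items. The theorem assumes nothing on $u$ beyond being an uncertainty set. So $R$ need not be monotone or translation invariant, hence need not be Lipschitz, and $L=+\infty$ does occur. For instance, take $u(X)=\{-1/\Vert X-X_0\Vert\}$ if $X\neq X_0$ and $u(X_0)=\{0\}$. Then $R(X^n)=\rho(0)+1/\Vert X^n-X_0\Vert\to\infty$ whenever $X^n\to X_0$ in $L^\infty$ with $X^n\neq X_0$.

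Your fallback is the right idea and matches the paper's treatment, but no bounded subsequence is needed. Put $c:=\rho(0)-R(X)-1$ and take a subsequence with $R(X^{n_k})\ge R(X)+1$. Then $\rho(c)=R(X)+1\le R(X^{n_k})$, so $c\in U(X^{n_k})$ for every $k$. The constant selection gives $c\in\Limsup_{n\to\infty}U(X^n)\subseteq U(X)$, i.e.\ $R(X)+1\le R(X)$, a contradiction. The case $L=-\infty$ is trivial.

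With this case written out, your argument is complete. It transfers verbatim to the Fatou usc item, since all selections used are constants converging in norm.
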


 \begin{proof}
 In light of \zcref[S]{prop: continuity} we only need to prove the only if direction of each item besides \zcref[S]{theo continuity: upper,theo continuity: fatou usc}. 
 
\zcref[S]{theo continuity: lower}, for $X ^n \limL X $, we show the following chain of inclusions 
\begin{subequations}\label{pf:inclusion}
     \begin{align}
          \Liminf\limits_{n \rightarrow \infty}  U  (X ^n) 
          &\supseteq 
          \big\{ Y \in \lp{t+1} :  \rho  (Y) \leq \liminf\limits_{n \rightarrow \infty} R  (X ^n) \big\} 
          \label{pf:inclusion-1}
          \\ 
          &\supseteq U (X )
          \,.
          \label{pf:inclusion-2}
     \end{align}
\end{subequations}

We start by showing \eqref{pf:inclusion-1}. 
     Take $Y \in \lp{t+1}$ such that  $ \rho  (Y) \leq \liminf\limits_{n \rightarrow \infty} R  (X ^n) $  and define the sequence 
     \begin{equation*}
     Y^n := Y - \inf \{ R  (X ^m) : m\geq n\} +  \liminf\limits_{n \rightarrow \infty}  R  (X ^n)\,,
     \quad n\in\N\,.
     \end{equation*}
     Clearly $Y^n \limL Y$ and $\rho (Y^n) = \rho (Y) + \inf \{ R  (X ^m) : m\geq n\} -  \liminf\limits_{n \rightarrow \infty}  R  (X ^n) $ converges to $\rho (Y)$ in $\lp{}$, as $n \to \infty$. Furthermore, as $\rho (Y) \leq \liminf\limits_{n \rightarrow \infty} R  (X ^n)$, we have
     \begin{align*}
         \rho (Y^n) &= \rho (Y) + \inf \{ R  (X ^m) : m\geq n\} -  \liminf\limits_{n \rightarrow \infty}  R  (X ^n) 
         \\ &\leq \liminf\limits_{n \rightarrow \infty}  R  (X ^n) + \inf \{ R  (X ^m) : m\geq n\} -  \liminf\limits_{n \rightarrow \infty}  R  (X ^n) 
         \\ &=\inf \{ R  (X ^m) : m\geq n\} 
         \\ &\leq  R  (X ^n)\,.
     \end{align*}
    Thus, we constructed a sequence $\{Y^n\}_{n\in\N}$, converging to $Y$ that satisfies $\rho  ( Y^n) \leq R  (X ^n)  $ for all $n$. Now by the definition of the inner limit and the definition of $\U$ if there exists a sequence $Y^n \limL Y$ satisfying $\rho  ( Y^n) \leq R  (X ^n)  $ for all $n$, then $Y \in \Liminf\limits_{n \rightarrow \infty} U  (X ^n) $. Hence, we obtain the set inclusion \eqref{pf:inclusion-1}. 
    
    The inclusion \eqref{pf:inclusion-2} follows by lower semi-continuity of $R$, indeed  \begin{align*}
          U (X ) &= \{ Y \in \lp{t+1} :  \rho  (Y) \leq R  (X ) \} 
          \subseteq
          \{ Y \in \lp{t+1} :  \rho  (Y) \leq \liminf\limits_{n \rightarrow \infty} R  (X ^n) \}.
         \end{align*}

\zcref[S]{theo continuity: upper}, for the only if direction,
recall that  $Y \in \Limsup\limits_{n \rightarrow \infty} U  (X ^n) $ if there exists a sequence $Y^n$ satisfying $\rho  ( Y^n) \leq R  (X ^n)  $ for all $n \in \N$, and there exists a sub-sequence $Y^{n_k} \limL Y$. As $\rho $ is Lipschitz continuous, it is also lower semi-continuous. This, together with   upper semi-continuity  of $R$ yields     
     \begin{align*}
         \rho  (Y) \leq \liminf\limits_{n \rightarrow \infty} \rho (Y^{n_k}) \leq  \limsup\limits_{{n_k} \rightarrow \infty} \rho  (Y^{n_k})  \leq  \limsup\limits_{{n_k} \rightarrow \infty} R  (X ^{n_k}) \leq  R  (X ) \, .
     \end{align*}  
     Hence, 
     \begin{align*}
          \Limsup\limits_{n \rightarrow \infty}  U  (X ^n) 
           \subseteq \{ Y \in \lp{t+1} :  \rho  (Y) \leq R  (X ) \} 
          = U (X )\,.
     \end{align*}

Conversely, assume that $U$ is upper semi-continuous and let
$X^n \limL X$. We prove that $
\limsup_{n\to\infty} R(X^n) \leq R(X).$
Suppose, by contradiction, that
$\limsup_{n\to\infty} R(X^n)>R(X)$.
First, assume that $\limsup_{n\to\infty} R(X^n)<\infty$. Take a subsequence
$\{X^{n_k}\}_{k\in\N}$ such that
$R(X^{n_k})\to \limsup_{n\to\infty}R(X^n).$
Define $Y^k:=\rho(0)-R(X^{n_k}), \:k\in\N.
$ By translation invariance of $\rho$,
$\rho(Y^k)=R(X^{n_k}), $
and therefore $Y^k\in U(X^{n_k})$ for every $k\in\N$. Moreover,
\[
Y^k \limL \rho(0)-\limsup_{n\to\infty}R(X^n).
\]
Hence, $\rho(0)-\limsup_{n\to\infty}R(X^n)
\in
\Limsup_{n\to\infty}U(X^n).$
Since $U$ is upper semi-continuous, this implies
$\rho(0)-\limsup_{n\to\infty}R(X^n)\in U(X).$
Thus,
\[
\limsup_{n\to\infty}R(X^n)
=
\rho\left(\rho(0)-\limsup_{n\to\infty}R(X^n)\right)
\leq R(X),
\]
which contradicts the assumption.

It remains to rule out the case
$\limsup_{n\to\infty}R(X^n)=+\infty$. Choose $c\in\R$ such that
$\rho(c)>R(X)$. Since $\limsup_{n\to\infty}R(X^n)=+\infty$, there exists a
subsequence $\{X^{n_k}\}_{k\in\N}$ such that
$\rho(c)\leq R(X^{n_k})$ for all $k\in\N$. Hence $c\in U(X^{n_k})$ for all
$k\in\N$. Since the constant sequence $c$ converges to $c$ in $L^\infty$, we have
$c\in \Limsup_{n\to\infty}U(X^n).$ 
By upper semi-continuity of $U$, $c\in U(X)$, which implies $\rho(c)\leq R(X), $
contradicting the choice of $c$. Therefore,
$
\limsup_{n\to\infty} R(X^n) \leq R(X),
$ and $R$ is upper semi-continuous.

\zcref[S]{theo continuity: fatou}, the only if direction follows analogously to the proof of the inclusions in \eqref{pf:inclusion}, where the assumption $X^n \xrightarrow{L^\infty} X$ is replaced by bounded sequences such that $X^n \xrightarrow{\text{a.s.}} X$. By the definition of the inner limit and of $U$, $Y \in \Liminf_{n \to \infty} U(X^n)$ if there exists a sequence $Y^n \xrightarrow{L^\infty} Y$ satisfying $\rho(Y^n) \le R(X^n)$ for all $n \in \mathbb{N}$. Since $R$ is Fatou lsc, the inequality $R(X) \le \liminf_{n \to \infty} R(X^n)$ holds. Therefore, \eqref{pf:inclusion-2} remains valid, and the sequence $Y^n$ constructed previously satisfies the required acceptance conditions.

\zcref[S]{theo continuity: fatou usc} follows analogously to \zcref[S]{theo continuity: upper}. 

 \zcref[S]{theo continuity: lipschitz}, by Lipschitz continuity of $R$,
  we  have that $R  (X ) \leq R  (Y ) + \Vert X  - Y  \Vert $.
 Next by translation invariance of $\rho $, it holds 
\begin{align*}
    U (X ) 
    &=
    \big\{ Z \in \lp{t+1} ~:~  \rho  (Z) \leq  R  (X ) \big\}
    \\ 
    &\subseteq 
    \big\{ Z \in \lp{t+1} ~:~  \rho  (Z) \leq  R  (Y ) + \Vert X  - Y  \Vert  \big\}
    \\
    &= 
    \big\{ Z \in \lp{t+1} ~:~  \rho  \big(Z  + \Vert X  - Y  \Vert \big) \leq R  (Y )   \big\}
    \\ 
    &=
    \big\{ Z -   \Vert X  - Y  \Vert  \in \lp{t+1} ~:~  \rho  (Z ) \leq   R  (Y )  \big\} 
    \\
    &=
    U (Y ) - \Vert X  - Y  \Vert 
    \\ &\subseteq U (Y ) - \mB \Vert X  - Y  \Vert 
    \\ &= U (Y ) + \mB \Vert X  - Y  \Vert \,.
\end{align*}
Interchanging the roles of $X $ and $Y $, yields $  U (Y ) \subseteq U (X ) + \Vert X  - Y  \Vert  \;\mB$. 
Thus, $\U$ is Lipschitz continuous.
\end{proof} 

\begin{example}\label{ex:robust-optimization}
Robust optimization provides a direct interpretation of the continuity results in this section. Fix an uncertainty set \(\u\), a risk measure \(\rho\), and write \(R:=R^{\u,\rho}\). Let \(A\) be a compact metric set of admissible actions, let \(X\in(\lp{})^d\) be a vector of random factors, and let \(\ell:A\times\mathbb R^d\to\mathbb R\) be a loss function such that \(\ell(a,X)\in\lp{}\) for every \(a\in A\). Since we use the P\&L convention, the random position generated by \(a\) is \(-\ell(a,X)\). Thus, the usual risk-neutral stochastic program \(\min_{a\in A}\E[\ell(a,X)]\) may be replaced by
\[
\min_{a\in A}R(-\ell(a,X)).
\] Consider now perturbed random factors \(X_n\in(\lp{})^d\) and the corresponding problems
\[
\min_{a\in A}R(-\ell(a,X_n)).
\]
Assume that the maps \(a\mapsto R(-\ell(a,X))\) and \(a\mapsto R(-\ell(a,X_n))\) are continuous on \(A\). If \(\u\) is Lipschitz continuous, then \zcref[S]{prop: continuity} implies that \(R\) is Lipschitz continuous; also, by \zcref[S]{theo continuity}, Lipschitz continuity of $R$ can be inherited from \Uset. Hence, for every \(a\in A\),
\[
\bigl|R(-\ell(a,X_n))-R(-\ell(a,X))\bigr|
\leq
\Vert \ell(a,X_n)-\ell(a,X)\Vert .
\]
Therefore, if \(\sup_{a\in A}\Vert \ell(a,X_n)-\ell(a,X)\Vert\to0\), then the objective functions converge uniformly on \(A\). In particular,
\[
\min_{a\in A}R(-\ell(a,X_n))
\longrightarrow
\min_{a\in A}R(-\ell(a,X)).
\] The optimal actions are stable as well. Indeed, define
\[
\Gamma_n:=\arg\min_{a\in A}R(-\ell(a,X_n)),
\qquad
\Gamma:=\arg\min_{a\in A}R(-\ell(a,X)).
\]
By compactness of \(A\), continuity of the objective functions, and uniform convergence, one has
\[
\Limsup_{n\to\infty}\Gamma_n\subseteq\Gamma .
\]
Consequently, if \(\Gamma=\{a^*\}\), then every selection \(a_n^*\in\Gamma_n\) satisfies \(a_n^*\to a^*\).
\end{example}

\section{Dual representation of robust risk measure.}\label{sec:dual-risk}
We recall the classical dual representation of a convex risk measure.  For this, we denote by $ba$  the space of all signed finitely additive measures on $\Omega$, that have bounded total variation and are absolutely continuous with respect to $\P$. Moreover we denote $ba_{1,+} := \left\{ Q \in ba : Q \geq 0, \ Q(\Omega) = 1 \right\}$ the subset of non-negative measures that are normalised to 1. With slight abuse of notation, we write $\E_{Q} [X]$, $Q \in ba $, $ X \in L^\infty$, for the bilinear form between $L^\infty $ and $ba$. We further denote the set of probability measures that are absolutely continuous with respect to $\P$ by $\mathcal{P} \subseteq \ba,$ and denote its members as $\Q \in \mathcal{P}.$ A convex risk measure $\rho$ has representation
\begin{equation}\label{eq:dual-classic}
    \rho(X)= \sup_{Q\in \ba } \{\E_Q[-X] - \alpha_\rho(Q)\}  \,,
\end{equation} 
for some function $\alpha_\rho:ba \rightarrow \R\cup\{\infty\}$. The function $\alpha_\rho$ is commonly referred to as the \emph{penalty function} of $\rho$. Here we also use the terminology that ``$\rho$ is represented over $\alpha_\rho$''. The \emph{minimal penalty function} of $\rho$ is  the Fenchel-Legendre conjugate which we denote by 
$\alpha^{\min}_\rho(Q) := \sup_{ X \in A^\rho } \E_Q[-X]$, see for example Theorem 4.16 in \cite{follmer2025stochastic}. For the base convex risk measure $\rho$, we denote $\alpha_\rho$ for the minimal penalty function in order to ease notation. We also define its proper domain as $\mathcal{Q}_\rho = \{ Q\in ba_{1,+} : \alpha_\rho (Q) < \infty\}$. Moreover, if a convex risk measure is Fatou lsc, then the supremum can be taken over $\mathcal{P}$, which are probability measures on
$(\Omega,\mathcal F)$ that are absolutely continuous w.r.t $\P$ (see e.g., Theorem 4.33 in \cite{follmer2025stochastic}), that is 
\begin{equation*}
    \rho(X)= \sup_{Q\in \mathcal{P} } \{\E_Q[-X] - \alpha_\rho(Q)\}  \,.
\end{equation*}

For the dual representation of robust risk measure we restrict our attention to  convex uncertainty sets. 
 With those assumptions, the robust risk measures are finite, convex, monotone and translation invariant. Moreover the corresponding \Uset~is a convex \Uset, see \zcref[S]{prop:u_to_R,theo:equiv_R_U}. This collection of properties are exactly those needed for the subsequent results on the dual representations.

The building blocks for the dual representation are worst-case expectations, which are the support functions of the uncertainty sets and used as  auxiliary maps.
 
 \begin{definition}\label{defi auxiliar}
     For each $Q\in \ba$ and fixed uncertainty set $u$, the auxiliary map $g_Q^u:\lp{} \rightarrow \R$ is defined as
    \begin{equation*}
        g^u_Q(X)=\sup_{Z\in u(X)}\E_Q[-Z],\qquad\forall\:X\in L^\infty.
    \end{equation*}
    Whenever $u$ is clear from the context, we simply write $g_Q:= g_Q^u$.
 \end{definition}

Note that whenever $u$ is a  convex uncertainty set we have that $g_Q^u$ is a convex risk measure for any $Q \in \ba$ with a dual representation as
\begin{equation*}
g_Q^u (X) = \sup_{Q^\dagger \in \ba}\big\{ \E_{Q^\dagger}[-X ] -\alpha_{g_Q^u}(Q^\dagger) \big\}\,,    
\end{equation*}
where $\alpha_{g_Q^u}$ is the penalty function of $g_Q^u$, that can be written as $\alpha_{g_Q^u}(Q^\dagger) = \sup_{X\in L^\infty} \{ \E_{Q^\dagger} [-X] - g_Q^u(X) \} $.

\begin{lemma}\label{lemma penalty g}
Let $u$ be a 
convex uncertainty set, $R$ a convex robust risk measure, and $U$ be the associated convex consolidated uncertainty set. Then for any $P \in \mathcal{Q}_\rho$, we have that 
$g_P^U(X) = R(X) + \alpha_\rho(P)$ for all $ X \in L^\infty
$, and 
$\alpha_{g_P^U}(Q) = \alpha_R^{\min}(Q) - \alpha_\rho(P)$ for all $Q \in ba_{1,+}$.
\end{lemma}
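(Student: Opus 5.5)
We first compute $g_P^U$ directly from the structure of the consolidated uncertainty set, then obtain the penalty identity from the definition of the conjugate.

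\emph{Step 1: the formula for $g_P^U$.} By \zcref[S]{lemma u to U}, item \zcref[S]{lemma u to U: 4}, we have $U(X) = A^\rho - R(X)$. Hence for $P\in\mathcal{Q}_\rho$,
\[
g_P^U(X)=\sup_{Z\in U(X)}\E_P[-Z]=\sup_{Y\in A^\rho}\E_P[-(Y-R(X))]=\sup_{Y\in A^\rho}\E_P[-Y]+R(X)\,\E_P[1].
\]
Since $P\in\ba$, we have $\E_P[1]=P(\Omega)=1$. By the definition of the minimal penalty, $\sup_{Y\in A^\rho}\E_P[-Y]=\alpha_\rho(P)$, which is finite because $P\in\mathcal{Q}_\rho$. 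Therefore $g_P^U(X)=R(X)+\alpha_\rho(P)$. In particular $g_P^U$ is real-valued, so the hypothesis $P\in\mathcal{Q}_\rho$ is exactly what is needed for the formula to make sense.

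\emph{Step 2: the penalty identity.} We insert the formula into the conjugate:
\[
\alpha_{g_P^U}(Q)=\sup_{X\in L^\infty}\{\E_Q[-X]-R(X)\}-\alpha_\rho(P),
\]
where we may subtract the finite constant $\alpha_\rho(P)$ outside the supremum. It remains to identify $\sup_{X}\{\E_Q[-X]-R(X)\}$ with $\alpha_R^{\min}(Q)=\sup_{X\in A^R}\E_Q[-X]$ for $Q\in\ba$. This is the standard equivalence of the two forms of the minimal penalty for monetary risk measures (cf.\ Theorem 4.16 in \cite{follmer2025stochastic}). We use that $R$ is translation invariant and monotone, which follows from \zcref[S]{prop:u_to_R} since $u$ is a convex uncertainty set.

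For the inequality $\ge$: restrict the supremum to $X\in A^R$, where $-R(X)\ge 0$.

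For the inequality $\le$: given any $X$, set $X'=X+R(X)$. Then $X'\in A^R$ by translation invariance, and because $Q(\Omega)=1$,
\[
\E_Q[-X']=\E_Q[-X]-R(X).
\]

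\emph{Main obstacle.} The only delicate point is the bookkeeping around infinite values and normalisation. We must use $Q(\Omega)=1$ and $P(\Omega)=1$ when pulling constants out of the expectations, and finiteness of $\alpha_\rho(P)$ so that the subtraction is legitimate. If $\alpha_R^{\min}(Q)=\infty$, both sides of the penalty identity equal $+\infty$ consistently.
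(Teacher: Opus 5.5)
Your proof is correct and follows essentially the same route as the paper: you use $U(X)=A^\rho-R(X)$ to obtain $g_P^U(X)=R(X)+\alpha_\rho(P)$, then substitute this into the conjugate of $g_P^U$. The only difference is that you explicitly verify, via translation invariance of $R$ and $Q(\Omega)=1$, that $\sup_{X}\{\E_Q[-X]-R(X)\}=\sup_{X\in A^R}\E_Q[-X]$, whereas the paper takes this form of $\alpha_R^{\min}$ as known.
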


 \begin{proof}
    Let $P \in\mathcal{Q}_\rho$ and we write $g_P= g_P^U $. By definition, the consolidated uncertainty set is $U(X) = \{Z \in L^\infty : \rho(Z) \le R(X)\} = A^\rho - R(X)$. Thus, the auxiliary map is given by:
\[
g_P(X) = \sup_{Z \in U(X)} \mathbb{E}_P[-Z] = \sup_{Z \in A^\rho - R(X)} \mathbb{E}_P[-Z] = \sup_{Z \in A^\rho} \mathbb{E}_P[-Z] + R(X) =R(X)+ \alpha_{\rho} (P),
\]
as $  \alpha_{\rho} (P) = \sup_{Z \in A^\rho} \mathbb{E}_P[-Z] $. Moreover, we compute the penalty term for $g_P$ evaluated at $Q \in ba_{1,+}$:
\[
\alpha_{g_P}(Q) = \sup_{X \in L^\infty} \{\mathbb{E}_Q[-X] - g_P(X)\} = \sup_{X \in L^\infty} \{\mathbb{E}_Q[-X] - R(X) - \alpha_\rho(P)\}.
\]
Using the fact that the minimal penalty of $R$ can be written as $\alpha_R^{\min}(Q) = \sup_{X \in L^\infty} \{\mathbb{E}_Q[-X] - R(X)\}$, we substitute this to obtain
$\alpha_{g_P}(Q) = \alpha_R^{\min}(Q) - \alpha_\rho(P)$.
\end{proof}

 We are ready to state the main result of this section, the dual representation of robust risk measures.

\begin{theorem}
\label{thm:WC}
Let $u$ be a  convex uncertainty set. Then, $R$ has representation
\begin{align*}
    R(X) =\sup\limits_{Q\in ba_{1,+}}\left\lbrace \E_{Q}[-X]- \alpha_R(Q) \right\rbrace.
\end{align*}
where the penalty function $ \alpha_R $ is given by
\begin{equation}\label{eq:pen}
    \alpha_{R}(Q)=\inf\limits_{Q_2\in ba_{1,+}}\left\lbrace \alpha_\rho(Q_2)+\alpha_{g_{Q_2}}(Q)\right\rbrace,\qquad\forall\:Q\in ba_{1,+}.
\end{equation} 
If $u=U$, then $\alpha_R(P) = \alpha_R^{\min}(P)$ for any $P \in \mathcal{Q}_\rho$.
Moreover, if $Q\mapsto g_Q(X)$ is weak* usc for any $X\in L^\infty$, then the infimum is attained and $\alpha_R$ is weak* lsc. 
\end{theorem}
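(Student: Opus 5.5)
The plan is to write $R$ as a double supremum and interchange the order of the two suprema. Using the dual representation \eqref{eq:dual-classic} of $\rho$ with its minimal penalty, we get for every $X$
\[
R(X)=\sup_{Z\in u(X)}\sup_{Q_2\in ba_{1,+}}\{\E_{Q_2}[-Z]-\alpha_\rho(Q_2)\}
=\sup_{Q_2\in ba_{1,+}}\{g_{Q_2}(X)-\alpha_\rho(Q_2)\},
\]
with the convention that terms with $\alpha_\rho(Q_2)=\infty$ drop out. Since $u$ is a convex uncertainty set, each $g_{Q_2}$ with $Q_2\in\mathcal Q_\rho$ is a convex risk measure. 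Indeed, order preservation gives monotonicity, translation invariance passes through the supremum, and order convexity gives convexity, exactly as in \zcref[S]{prop:u_to_R} with $\rho$ replaced by the linear map $Z\mapsto\E_{Q_2}[-Z]$. Finiteness holds because $u(X)$ is bounded from below. Consequently $g_{Q_2}(X)=\sup_{Q}\{\E_Q[-X]-\alpha_{g_{Q_2}}(Q)\}$. Substituting this and swapping the suprema over $Q_2$ and $Q$ gives
\[
R(X)=\sup_{Q\in ba_{1,+}}\Big\{\E_Q[-X]-\inf_{Q_2}\big(\alpha_\rho(Q_2)+\alpha_{g_{Q_2}}(Q)\big)\Big\},
\]
which is the claimed representation with $\alpha_R$ as in \eqref{eq:pen}. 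This step is purely formal, since suprema always commute.

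For the case $u=U$, I would use \zcref[S]{lemma penalty g}. For any $Q_2\in\mathcal Q_\rho$ it gives $\alpha_\rho(Q_2)+\alpha_{g_{Q_2}}(Q)=\alpha_R^{\min}(Q)$, and every other $Q_2$ contributes $+\infty$. The infimum is therefore constant and equals $\alpha_R^{\min}(Q)$ for all $Q$, in particular for $P\in\mathcal Q_\rho$.

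The main obstacle is the last claim: attainment of the infimum and weak* lsc of $\alpha_R$ under weak* usc of $Q_2\mapsto g_{Q_2}(X)$. I would write
\[
\alpha_\rho(Q_2)+\alpha_{g_{Q_2}}(Q)=\sup_{X\in L^\infty}\big\{\E_Q[-X]-g_{Q_2}(X)+\alpha_\rho(Q_2)\big\},
\]
so the map $(Q,Q_2)\mapsto\alpha_\rho(Q_2)+\alpha_{g_{Q_2}}(Q)$ is a supremum of functions that are jointly weak* lsc. This uses that $\alpha_\rho$ is weak* lsc as a Fenchel conjugate and that $-g_{Q_2}(X)$ is weak* lsc in $Q_2$ by assumption. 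The map is therefore jointly weak* lsc on $ba_{1,+}\times ba_{1,+}$. Since $ba_{1,+}$ is weak* compact by Banach--Alaoglu, a lsc function on a compact set attains its infimum, which gives attainment in $Q_2$.

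For lsc of $\alpha_R$, I would take a net $Q^\gamma\to Q$ with minimizers $Q_2^\gamma$ and $\liminf\alpha_R(Q^\gamma)$ finite. Passing to a subnet (using compactness), $Q_2^\gamma\to\bar Q_2$. Joint lsc then gives
\[
\alpha_R(Q)\le\alpha_\rho(\bar Q_2)+\alpha_{g_{\bar Q_2}}(Q)\le\liminf_\gamma\alpha_R(Q^\gamma).
\]
The delicate point is the joint lower semicontinuity, which needs care with the $+\infty$ values and the convention for $Q_2\notin\mathcal Q_\rho$. I would handle this by checking it separately on the lsc extension.
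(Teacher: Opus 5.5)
Your proposal is correct and follows essentially the same route as the paper: you interchange the suprema over $Z\in u(X)$ and the dual variables, then use the lemma giving $g_P^U=R+\alpha_\rho(P)$ for the case $u=U$, and finally use joint weak* lower semicontinuity of $(Q,Q_2)\mapsto\alpha_\rho(Q_2)+\alpha_{g_{Q_2}}(Q)$ together with Banach--Alaoglu compactness of $ba_{1,+}$ to get attainment and lsc of $\alpha_R$. The paper phrases that last step as weak* closedness of the lower level sets, which is equivalent to your net argument once you first pass to a subnet along which $\alpha_R(Q^\gamma)$ converges to the liminf. Your explicit treatment of the $\infty$ conventions for $Q_2\notin\mathcal Q_\rho$, and your observation that the infimum equals $\alpha_R^{\min}(Q)$ for every $Q$, are slightly more careful than the paper's write-up.
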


 \begin{proof}
Under the hypothesis, both the auxiliary map $g_Q$ and $R$ are convex risk measures. Since $\rho$ is a convex risk measure, it is well known (see \cite{follmer2002convex}) that we can write $\rho (X) = \sup_{Q \in \ba} \{ \E_Q [-X] - \alpha_\rho (Q) \}  $, where $\alpha_\rho$ is the penalty function or Fenchel conjugate (see \eqref{eq:dual-classic}); the same applies for $R$ and $g_Q$. Therefore, by  direct calculation we obtain  
\begin{align*}
   R(X)
   &=\sup\limits_{Z\in u(X)}\sup\limits_{Q_1\in ba_{1,+}}\left\lbrace \E_{Q_1}[-Z]-\alpha_\rho(Q_1)\right\rbrace\\
      &=\sup\limits_{Q_1\in ba_{1,+}}\sup\limits_{Z\in u(X)}\left\lbrace \E_{Q_1}[-Z]-\alpha_\rho(Q_1)\right\rbrace\\
    &=\sup\limits_{Q_1\in ba_{1,+}}\left\lbrace g_{Q_1}(X)-\alpha_\rho(Q_1)\right\rbrace\\
    &=\sup\limits_{Q_1\in ba_{1,+}}\left\lbrace  \sup\limits_{Q_2\in ba_{1,+}}\left\lbrace \E_{Q_2}[-X] -\alpha_{g_{Q_1}}(Q_2) \right\rbrace -\alpha_\rho(Q_1)\right\rbrace\\
    &=\sup\limits_{Q_1,Q_2\in ba_{1,+}}\left\lbrace \E_{Q_2}[-X]-\alpha_\rho(Q_1)-\alpha_{g_{Q_1}}(Q_2)\right\rbrace\\
    &=\sup\limits_{Q_1\in ba_{1,+}}\left\lbrace \E_{Q_1}[-X]-\inf\limits_{Q_2\in ba_{1,+}}\left\lbrace\alpha_\rho(Q_2)+\alpha_{g_{Q_2}}(Q_1)\right\rbrace\right\rbrace.
\end{align*} Since the same deduction holds by using consolidated uncertainty sets $U$ instead of the base $u$, we can safely consider $u=U$. In this case, we have that $R$ can also be represented by $Q\mapsto\inf\limits_{Q_2\in ba_{1,+}}\left\lbrace \alpha_\rho(Q_2)+\alpha_{g^U_{Q_2}}(Q)\right\rbrace$. Thus, from \zcref[S]{lemma penalty g}, for any $Q\in\mathcal{Q}_\rho$, substituting  $\alpha_{g^U_P}$ back into the obtained representation yields:
\[
\alpha_R(Q) = \inf_{P \in ba_{1,+}} \{\alpha_\rho(P) + \alpha_{g_P}(Q)\} = \inf_{P \in ba_{1,+}} \{\alpha_\rho(P) + \alpha_R^{\min}(Q) - \alpha_\rho(P)\} = \alpha_R^{\min}(Q).
\]
Thus, the obtained penalty coincides with the minimal penalty on $\mathcal{Q}_\rho$.

Now, let $Q\mapsto g_Q(X)$ be weak* usc. Thus, $Q_2\mapsto\alpha_\rho(Q_2)+\alpha_{g_{Q_2}}(Q_1)$ is weak* lsc for any fixed $Q_1\in\ba$. Then the infimum is attained since it is taken over the weak* compact $ba_{1,+}$. In order to verify that $\alpha_R$ is weak* lsc, we let $c\in\mathbb{R}$ and note that its lower level set at $c$ is given as
\[A_c=\{Q\in ba_{1,+}\colon\exists\:Q_2\in ba_{1,+}\quad \text{s.t.}\quad \alpha_\rho(Q_2)+\alpha_{g_{Q_2}}(Q)\leq c\}.\] We claim $A_c$ is weak* closed. Let $(Q_\gamma)_{\gamma \in \Gamma} \subseteq A_c$ be a net such that $Q_\gamma \xrightarrow{w^*} Q$ in $\sigma(ba, L^\infty)$. By the definition of $A_c$, for each $\gamma \in \Gamma$, there exists $Q_{2,\gamma} \in ba_{1,+}$ such that $\alpha_\rho(Q_{2,\gamma}) + \alpha_{g_{Q_{2,\gamma}}}(Q_\gamma) \le c$. 
Since $ba_{1,+}$ is weak* compact by the Banach-Alaoglu theorem, the net $(Q_{2,\gamma})_{\gamma \in \Gamma}$ has a convergent subnet $(Q_{2,\gamma_i})_{i \in I}$ that converges to some $Q_2 \in ba_{1,+}$ in the weak* topology. Note that for a fixed $X \in L^\infty$, the map $(P_1, P_2) \mapsto \mathbb{E}_{P_1}[-X] + \alpha_\rho(P_2) - g_{P_2}(X)$ is jointly weak* lsc on $ba_{1,+} \times ba_{1,+}$. Because the supremum of a family of lsc functions is lsc, the map 
$$ (P_1, P_2) \mapsto \sup_{X \in L^\infty} \{\mathbb{E}_{P_1}[-X] + \alpha_\rho(P_2) - g_{P_2}(X)\} = \alpha_\rho(P_2) + \alpha_{g_{P_2}}(P_1) $$
is also jointly weak* lsc. Therefore, evaluating the limit along the convergent subnet yields:
$$ \alpha_R(Q) \le \alpha_\rho(Q_2) + \alpha_{g_{Q_2}}(Q) \le \liminf_{i \in I} \left( \alpha_\rho(Q_{2,\gamma_i}) + \alpha_{g_{Q_{2,\gamma_i}}}(Q_{\gamma_i}) \right) \le c. $$
Hence, $Q \in A_c$. Since $c \in \mathbb{R}$ was arbitrary, $\alpha_R$ is weak* lower semi-continuous.
\end{proof}

\begin{corollary}
If in addition to the hypotheses in \zcref[S]{thm:WC}, $\rho$ is also positive homogeneous, then the penalty becomes
\begin{align*}
     \alpha_{R}(Q)=\inf\left\lbrace \alpha_{g_{Q_2}}(Q) : {Q_2\in \mathcal{Q}_\rho}\right\rbrace,\qquad\forall\:Q\in ba_{1,+},
\end{align*}
where $\mathcal{Q}_\rho := \{ Q \in \ba : \alpha_\rho (Q) < \infty\}$.
If further $u$ is positive homogeneous, then the 
dual representation becomes 
\begin{align*}
    R(X) =\sup\limits_{Q\in \q_R }\E_{Q}[-X], \qquad\forall\:X\in \lp{},
\end{align*}
where $\mathcal{Q}_R := \{ Q \in \ba : \alpha_R (Q) < \infty\}$.
\end{corollary}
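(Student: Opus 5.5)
The corollary has two parts. For the penalty formula, I will use that a positively homogeneous convex risk measure has a minimal penalty $\alpha_\rho$ taking only the values $0$ and $\infty$. Indeed, $\alpha_\rho(Q)=\sup_{X\in A^\rho}\E_Q[-X]$, and $A^\rho$ is a cone, so this supremum is either $0$ or $+\infty$; it is $\ge 0$ because $0\in A^\rho$ when $\rho(0)=0$, which follows from positive homogeneity. Hence $\alpha_\rho=0$ on $\mathcal{Q}_\rho$ and $\alpha_\rho=\infty$ outside it. In the infimum \eqref{eq:pen} of \zcref[S]{thm:WC}, every $Q_2\notin\mathcal{Q}_\rho$ contributes $+\infty$, so it can be dropped. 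Every $Q_2\in\mathcal{Q}_\rho$ contributes exactly $\alpha_{g_{Q_2}}(Q)$. This gives the stated formula. One detail to check is that $\mathcal{Q}_\rho$ is nonempty, which holds because $\rho$ is finite-valued and therefore represented.

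For the second part, I first invoke \zcref[S]{prop:u_to_R}: when $\rho$ and $u$ are positively homogeneous, $R$ is positively homogeneous. It is also convex, monotone and translation invariant, since $u$ is a convex uncertainty set. So $R$ is a coherent risk measure, and its minimal penalty $\alpha_R^{\min}$ is again $\{0,\infty\}$-valued by the same cone argument applied to $A^R$. The classical representation then reads $R(X)=\sup_{Q:\alpha^{\min}_R(Q)<\infty}\E_Q[-X]$.

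To phrase this in terms of the $\alpha_R$ of \zcref[S]{thm:WC}, I need $\alpha_R$ to be $\{0,\infty\}$-valued as well. I would show this directly. For $Q_2\in\mathcal{Q}_\rho$, the map $g_{Q_2}$ is positively homogeneous, because $u(\lambda X)=\lambda u(X)$ gives $g_{Q_2}(\lambda X)=\lambda g_{Q_2}(X)$. It is also convex and monetary, so $\alpha_{g_{Q_2}}\in\{0,\infty\}$. Taking the infimum over $Q_2$ then makes $\alpha_R\in\{0,\infty\}$. In particular $\{\alpha_R<\infty\}=\{\alpha_R=0\}$, and the representation of \zcref[S]{thm:WC} collapses to $\sup_{Q\in\mathcal{Q}_R}\E_Q[-X]$.

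The main obstacle is the normalization $g_{Q_2}(0)=0$, which is needed for the cone argument to give $\alpha_{g_{Q_2}}\ge 0$ rather than some other constant. Positive homogeneity yields $u(0)=\lambda u(0)$ for all $\lambda\ge0$, and with $\lambda=0$ this forces $u(0)=\{0\}$, so $g_{Q_2}(0)=0$. Should this be too delicate, the fallback is to avoid $g$ altogether: for any penalty representing $R$, the value $0=R(0)=\sup_Q\{-\alpha_R(Q)\}$ forces $\alpha_R\ge0$, and $\alpha_R\ge\alpha_R^{\min}$ holds by minimality. Scaling $X\mapsto\lambda X$ and letting $\lambda\to\infty$ in the representation then kills every $Q$ with $\alpha_R(Q)>0$.
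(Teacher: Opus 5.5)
Your proposal is correct. For the first claim it is exactly the paper's argument: $\alpha_\rho$ takes only the values $0$ and $\infty$, so the terms with $Q_2\notin\mathcal{Q}_\rho$ drop out of \eqref{eq:pen}.

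For the second claim the paper only says that $R$ is coherent and ``the same reasoning'' applies. Strictly, that reasoning concerns the minimal penalty $\alpha_R^{\min}$. However, $\mathcal{Q}_R$ is defined through the penalty \eqref{eq:pen}, and Theorem~\ref{thm:WC} identifies that penalty with $\alpha_R^{\min}$ only on $\mathcal{Q}_\rho$ and only when $u=U$. Your extra step fills this in. Each $g_{Q_2}$ inherits positive homogeneity from $u$, with $g_{Q_2}(0)=0$ because $u(0)=0\cdot u(X)=\{0\}$. Hence $\alpha_{g_{Q_2}}\in\{0,\infty\}$, and so $\alpha_R\in\{0,\infty\}$. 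This is what the paper's one-line argument leaves implicit, so your route is the same in spirit but more complete.

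One small caution about your fallback. The scaling argument does not discard the $Q$ with $0<\alpha_R(Q)<\infty$. What it shows, together with $\alpha_R\ge 0$ (from $R(0)=0$), is that $\E_Q[-X]\le R(X)$ for every $Q$ with $\alpha_R(Q)<\infty$. That is what yields $R(X)=\sup_{Q\in\mathcal{Q}_R}\E_Q[-X]$. Restricting to $\{\alpha_R=0\}$ instead can fail for a general non-minimal, non-lsc penalty, although not for \eqref{eq:pen}, given your primary argument.
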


 \begin{proof}
The first claim follows directly from noting that if $\rho$ is coherent $\alpha_\rho$ assumes either $\infty$ or $0$. And the second claim follows from the same reasoning, as $R$ is a coherent risk measure. 
\end{proof}

If the risk measure and the \Uset~are Fatou lsc, then the dual representation is refined.

\begin{corollary}
    \label{crl:probability-representation}
Let $u$ be a convex uncertainty set and suppose in addition,   that
the convex risk measure $\rho$ and $u$ are Fatou lsc. Then the robust
risk measure $R$ admits a dual representation over countably additive
probabilities:
\[
  R(X)
  \;=\;
  \sup_{\Q\in\mathcal P}
  \Bigl\{ \E_{\Q}[-X] - \alpha_R(\Q)\Bigr\},
  \qquad X\in L^\infty.
\]
The penalty function $\alpha_R$ can be chosen as
\[
  \alpha_R(\Q)
  \;=\;
  \inf_{\Q^\dagger\in\mathcal P}
  \Bigl\{
    \alpha_\rho(\Q^\dagger) + \alpha_{g_{\Q^\dagger}}(\Q)
  \Bigr\},
  \qquad \Q\in\mathcal P.
\]
\end{corollary}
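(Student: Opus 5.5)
Our plan is to repeat the chain of equalities in the proof of \zcref[S]{thm:WC}, replacing $ba_{1,+}$ by $\mathcal P$ at both levels. This needs two ingredients: the Fatou-based representation of $\rho$, and a Fatou-based representation of each auxiliary map $g_{\Q^\dagger}$ for $\Q^\dagger\in\mathcal P$.

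For the outer supremum we argue as follows. Since $\rho$ is a convex risk measure that is Fatou lsc, Theorem 4.33 in \cite{follmer2025stochastic} gives $\rho(Z)=\sup_{\Q^\dagger\in\mathcal P}\{\E_{\Q^\dagger}[-Z]-\alpha_\rho(\Q^\dagger)\}$ for every $Z\in L^\infty$. Exchanging the suprema over $Z\in u(X)$ and $\Q^\dagger$, as in the proof of \zcref[S]{thm:WC}, we obtain
\[
R(X)=\sup_{\Q^\dagger\in\mathcal P}\bigl\{g_{\Q^\dagger}(X)-\alpha_\rho(\Q^\dagger)\bigr\}.
\]

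For the inner supremum we use that $g_{\Q^\dagger}$ is a convex risk measure, because $u$ is a convex uncertainty set. It remains to check that $g_{\Q^\dagger}$ is Fatou lsc for $\Q^\dagger\in\mathcal P$. Take a bounded sequence $X^n\limas X$ and fix $Z\in u(X)$. Fatou lsc of $u$ yields $Z^n\in u(X^n)$ with $Z^n\limL Z$. Since $\Q^\dagger$ is a probability measure, $\E_{\Q^\dagger}[-Z^n]\to\E_{\Q^\dagger}[-Z]$; in fact norm continuity of the linear functional suffices here. Hence $\E_{\Q^\dagger}[-Z]\le\liminf_n g_{\Q^\dagger}(X^n)$, and taking the supremum over $Z$ gives Fatou lsc of $g_{\Q^\dagger}$. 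This is the same argument as \zcref[S]{prop: continuity}\zcref[S]{prop: continuity - fatou} with $\rho$ replaced by the linear map $Z\mapsto\E_{\Q^\dagger}[-Z]$. Applying Theorem 4.33 of \cite{follmer2025stochastic} again then gives
\[
g_{\Q^\dagger}(X)=\sup_{\Q\in\mathcal P}\bigl\{\E_\Q[-X]-\alpha_{g_{\Q^\dagger}}(\Q)\bigr\},
\]
where $\alpha_{g_{\Q^\dagger}}$ is the minimal penalty, whose restriction to $\mathcal P$ is exactly the one appearing in the statement.

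Substituting the second display into the first and merging the two suprema, exactly as in the last lines of the proof of \zcref[S]{thm:WC}, yields
\[
R(X)=\sup_{\Q\in\mathcal P}\Bigl\{\E_\Q[-X]-\inf_{\Q^\dagger\in\mathcal P}\bigl\{\alpha_\rho(\Q^\dagger)+\alpha_{g_{\Q^\dagger}}(\Q)\bigr\}\Bigr\},
\]
which is the claimed representation.

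The main technical point is the Fatou lsc of $g_{\Q^\dagger}$, which relies on the strong (norm) form of Fatou lsc assumed for $u$. A minor additional point is finiteness of $g_{\Q^\dagger}$ on $\mathcal Q_\rho$, so that the classical duality applies. We handle this as follows: only those $\Q^\dagger$ with $\alpha_\rho(\Q^\dagger)<\infty$ contribute, and for these $g_{\Q^\dagger}(X)\le R(X)+\alpha_\rho(\Q^\dagger)<\infty$, while $g_{\Q^\dagger}>-\infty$ because $u(X)$ is nonempty.

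As an alternative route, one could note that $R$ is Fatou lsc by \zcref[S]{prop: continuity}\zcref[S]{prop: continuity - fatou}, and then combine its $\mathcal P$-representation with the penalty identity of \zcref[S]{thm:WC}. However, the direct argument above gives the stated form of the penalty immediately.
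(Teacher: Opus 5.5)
Your proposal is correct and follows the paper's proof. Like the paper, you represent $\rho$ and each auxiliary map $g_{\Q^\dagger}$ over $\mathcal P$ via the Fatou property, then rerun the chain of equalities from the general dual representation theorem with all dual variables restricted to $\mathcal P$. Your direct check that $g_{\Q^\dagger}$ is Fatou lsc (through norm-convergent selections $Z^n\in u(X^n)$), and your finiteness check for $g_{\Q^\dagger}$, fill in details that the paper only delegates to its continuity results.
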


 \begin{proof}
 By assumption, $\rho$ is a convex monetary risk measure with the Fatou
property. Hence, by the standard dual representation theorem on
$(L^\infty,L^1)$, $\rho$ admits a representation over countably additive
probabilities. That is there exists a penalty $\alpha_\rho:\mathcal P\to[0,+\infty]$
such that
\[
  \rho(Z)
  \;=\;
  \sup_{\Q\in\mathcal P}
  \bigl\{ \E_{\Q}[-Z] - \alpha_\rho(\Q)\bigr\},
  \qquad Z\in L^\infty.
\]
Under the assumptions on $u$ and $\rho$, \zcref[S]{prop: continuity} and \zcref[S]{theo continuity}
imply that the robust risk measure $R$, the consolidated uncertainty set $U$, and all
auxiliary maps $g_{\Q^\dagger}$, $\Q^\dagger\in\mathcal P$ are Fatou lsc.
Applying the
same duality result to $R$ and to each $g_{\Q^\dagger}$ provides dual representations
over $\mathcal P$.

Repeating the steps of the proof of \zcref[S]{thm:WC}, but restricting
all dual variables to $\mathcal P$, we obtain for every $X\in L^\infty$:
\[
  R(X)
  =
  \sup_{\Q\in\mathcal P}
  \biggl\{
    \E_{\Q}[-X]
    -
    \inf_{\Q^\dagger\in\mathcal P}
    \bigl[
      \alpha_\rho(\Q^\dagger)+\alpha_{g_{\Q^\dagger}}(\Q)
    \bigr]
  \biggr\}.
\]
Thus, we may choose
\[
  \alpha_R(\Q)
  :=
  \inf_{\Q^\dagger\in\mathcal P}
  \bigl\{
    \alpha_\rho(\Q^\dagger)+\alpha_{g_{\Q^\dagger}}(\Q)
  \bigr\},
  \qquad \Q\in\mathcal P.
\]
\end{proof}

We next provide an example of commonly used uncertainty sets in the literature and explicitly calculate the dual representation of the resulting robust risk measures and their penalty functions.

\example{Let $u(X)=\{Y\in L^\infty\colon \lVert X-Y\rVert\leq \epsilon\}=X+\{Y\in L^\infty\colon \lVert Y\rVert\leq\epsilon\}$ be closed balls in $L^\infty$. In this case, for any $Q_2\in ba_{1,+}$ we have that $g_{Q_2}(X)=\E_{Q_2}[-X]+\epsilon$. Thus, we get that 
\[\alpha_{g_{Q_2}}(Q_1)=\sup\limits_{X\in L^\infty}\left\lbrace\E_{Q_1}[-X]-\E_{Q_2}[-X] \right\rbrace-\epsilon=\chi_{Q_1=Q_2}-\epsilon,\]
where $\chi$ is defined as  $\chi_{Q_1=Q_2}=\infty$ if $Q_1\neq Q_2$, and  $\chi_{Q_1=Q_2}=0$ otherwise. By \zcref[S]{thm:WC}, $R$ has representation 
\[\alpha_R(Q_1)= \min\limits_{Q_2\in ba_{1,+}}\left\lbrace \alpha_\rho(Q_2)+\chi_{Q_1=Q_2}-\epsilon\right\rbrace=\alpha_\rho(Q_1)-\epsilon.\] In this case $R(X)=\rho(X)+\epsilon$. If our domain is $L^p$, $p\in [1,\infty)$, with dual $L^q$ where $\frac{1}{p}+\frac{1}{q}=1$, then the same reasoning applies and we obtain $\alpha_R(Q_1)=\alpha_\rho(Q_1)-\epsilon\lVert \frac{dQ_1}{d\mathbb{P}}\rVert_q$. This is in consonance with the work of \cite{Righi2024}}.

\section{
Dual representation of uncertainty Sets.}\label{sec:dual-U}
In this section, we derive the dual representation of uncertainty sets. As uncertainty sets are set-valued maps, we require the set-valued conjugation framework in the sense of Hamel \cite{hamel2009duality}. Thus, we first recall the necessary notation and definition for this setup.

\subsection{Notations and definitions.}
Throughout this section, without loss of generality, we shall require $\rho$ to be normalised. We use the theory developed in \cite{hamel2009duality,hamel2015set} which fits neatly with the notion of \Uset s.
Let $\X := \{A\subseteq L^\infty : A = {\clconv}(A+L^\infty_+)  \}$. Thus, if $U$ is a \Uset\,of an order preserving uncertainty set $u$, then $U:L^\infty \rightarrow \mathcal{X}$, that is $U$ maps to the set $\mathcal{X}$. Moreover, if $A \in \mathcal{X}$, then there exists a \Uset\, $U$, such that $U(X) = A$ for some $X\in L^\infty$. Additionally, $U(X) = U(X) + L^\infty_+$ for all $X\in L^\infty$.

The space $(\mathcal{X},\subseteq)$ is a lattice with $\sup \mathcal{B}=\bigcap_{B\in\mathcal{B}}B$ and $\inf \mathcal{B}=\clconv\left(\bigcup_{B\in\mathcal{B}}B\right)$, see Theorem 6 of \cite{hamel2005variational}. Due to order preservingness, the \Uset\, is monotone, i.e. $X\leq Y$ implies $U(Y) \subseteq U(X) = U(X) + L^\infty_+$. This naturally gives rise to a pre-order on $\mathcal{X}$ defined by $U \le U'$ if and only if $U' \subseteq U + L^\infty_+ = U$.
We denote by $\epi$ the epigraph of a \Uset\, $U$, that is
\begin{align*}
    \epi &:=\{ (X,Z) \in L^\infty \times L^\infty\colon U(X) \leq \{Z\}\} \\
    &= \{ (X,Z) \in L^\infty \times L^\infty\colon \{Z\} \subseteq U(X) + L^\infty_+\}  
    \\ &= \{ (X,Z) \in L^\infty \times L^\infty\colon \exists\:Y \in U(X)\:\text{s.t.}\:Y\leq Z\}.  
\end{align*}

The support of the graph of $U$ is a function $\delta_U : ba\times ba \rightarrow \R \cup\{\infty\}$ defined as
\begin{align}\label{eq:delta-U-def}
    \delta_U (Q_1,Q_2) = \sup\{ \E_{Q_1}[X]+\E_{Q_2}[Z] :X\in L^\infty,\, Z\in U(X) \}\,.
\end{align}
We further define the non-positive hyperplane of $Q \in ba$ as $H(Q) := \{Z \in L^\infty : \E_Q [Z] \leq 0\} $ and the negative polar of $L^\infty_+$ (without the $0$-element) as $ba_-: =(L^\infty_+)^*=\{{Q}\in ba\setminus\{0\} \colon \E_{Q}(Z)\leq 0,\:\forall\:Z\in L^\infty_+\}$.  Notice that $H(Q)\in \mathcal{X}$ for any $Q\in ba_-$.

We take, as building blocks for the conjugate function, the co-linear maps that are $\mathcal{X}$-valued as
 $S_{{Q}_1,{Q}_2}\colon L^\infty\to 2^{L^\infty}$, defined for each $({Q}_1,{Q}_2)\in ba\times ba_-$ as 
 \begin{equation*}
    S_{{Q}_1,{Q}_2}(X)=\left\lbrace Z\in L^\infty\colon \E_{{Q}_1}[X] + \E_{{Q}_2}[Z]\leq0\right\rbrace=H({Q}_2)+ \dfrac{\E_{{Q}_1}[X]}{-Q_2(\Omega)}\,.
\end{equation*}
We highlight that by definition $Q_2(\Omega)<0$. Armed with this notation, we can define the conjugate  and bi-conjugate  of the uncertainty set  $u$. With slight abuse of notation we denote the set-valued convex conjugate by $\alpha_U$, mirroring the standard notation for the scalar penalty function of risk measures. 

\begin{definition}\label{def:dual-u}
    Let $U : L^\infty \to 2^{L^\infty}$ be a \Uset. Then 
    $\alpha_U\colon ba\times ba_-\to 2^{L^\infty}$ defined as 
    \begin{equation*}
\alpha_U(Q_1,Q_2):=U^*(Q_1,Q_2)= cl\bigcup\limits_{X\in L^\infty}\left\lbrace U(X)+ S_{Q_1,Q_2}(-X)\right\rbrace
\end{equation*}
is the convex conjugate of $U$.  Furthermore, the function $U^{**}:L^\infty \to 2^{L^\infty}$ defined by
\begin{equation*}
U^{**}(X)=\bigcap\limits_{({Q}_1,{Q}_2)\in ba\times ba_-}  \left\lbrace \alpha_U({Q}_1,{Q}_2)+S_{{Q}_1,{Q}_2}(X) \right\rbrace, 
\end{equation*}
is the bi-conjugate of $U$.
\end{definition} 

A direct adaptation of Lemma 1 of \cite{hamel2009duality} allow us to rewrite the convex conjugate as \[\alpha_U(Q_1,Q_2)=\{ Z\in L^\infty\colon \E_{Q_2}[Z]\leq \delta_{U}(Q_1,Q_2)\},\] and the bi-conjugate as \[
U^{**}(X)=\bigcap\limits_{({Q}_1,{Q}_2)\in ba\times ba_-}  \{ Z\in L^\infty\colon \E_{Q_1}[X] +  \E_{Q_2}[Z]\leq \delta_{U}(Q_1,Q_2)\}.\] 

Next, we establish when the bi-conjugate of \Uset~equals itself.

\begin{lemma}\label{lmm:U  bi}
    Let $U$ be a \Uset. Then $U = U^{**}$ if and only if $U$ is set-concave and its associated robust risk measure $R$ is usc.
\end{lemma}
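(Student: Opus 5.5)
The plan is to reduce the statement to Hamel's set-valued biconjugation theorem \cite{hamel2009duality}. That theorem says an $\mathcal{X}$-valued map coincides with its biconjugate exactly when its graph is a closed convex subset of $L^\infty\times L^\infty$, where closedness is in the norm topology (equivalently weakly, once convexity is known). So I must translate \emph{convex graph} into set-concavity and \emph{closed graph} into usc of $R$.

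\textbf{The translations.}
\begin{enumerate}[label=$(\alpha)$]
\item[$(\alpha)$] First I check that $U$ takes values in $\mathcal{X}$ regardless of further assumptions. By \zcref[S]{lemma u to U}, $U(X)=A^\rho-R(X)$. The set $A^\rho$ is convex because $\rho$ is convex, norm closed because $\rho$ is Lipschitz, and satisfies $A^\rho+L^\infty_+=A^\rho$ by monotonicity.
\item[$(\beta)$] The graph $\{(X,Z):Z\in U(X)\}=\{(X,Z):\rho(Z)\le R(X)\}$ is convex if and only if $\lambda U(X_1)+(1-\lambda)U(X_2)\subseteq U(\lambda X_1+(1-\lambda)X_2)$. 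This is literally set-concavity, by unwinding definitions.
\item[$(\gamma)$] The graph is norm closed if and only if $U$ is usc in the Painlevé--Kuratowski sense. Indeed, for $X^n\to X$ and $Z^{n_k}\in U(X^{n_k})$ with $Z^{n_k}\to Z$, closedness gives $Z\in U(X)$, and conversely. By \zcref[S]{theo continuity}\,\ref{theo continuity: upper} this is equivalent to $R$ being usc.
\end{enumerate}
Alternatively, closedness can be seen directly, since $R$ usc and $\rho$ continuous make $(X,Z)\mapsto \rho(Z)-R(X)$ lsc.

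\textbf{Necessity.} Assume $U=U^{**}$. By the rewritten form of the biconjugate, $\operatorname{graph}U^{**}$ is an intersection of sets $\{(X,Z):\E_{Q_1}[X]+\E_{Q_2}[Z]\le\delta_U(Q_1,Q_2)\}$. Each is a norm-closed half-space, since elements of $ba$ are norm-continuous linear functionals on $L^\infty$. Hence the graph is closed and convex, and $(\beta)$, $(\gamma)$ give set-concavity of $U$ and usc of $R$.

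\textbf{Sufficiency.} Assume $U$ is set-concave and $R$ is usc. The inclusion $U\subseteq U^{**}$ is immediate from the definition of $\delta_U$. For the converse, fix $(X_0,Z_0)$ with $Z_0\notin U(X_0)$. The graph $G$ is closed and convex, so Hahn--Banach on $L^\infty\times L^\infty$, with dual containing $ba\times ba$, gives $(Q_1,Q_2)$ with
\[
\E_{Q_1}[X_0]+\E_{Q_2}[Z_0]>\sup_{(X,Z)\in G}\{\E_{Q_1}[X]+\E_{Q_2}[Z]\}=\delta_U(Q_1,Q_2).
\]
It then remains to show $Q_2\in ba_-$, which yields $Z_0\notin U^{**}(X_0)$.
\begin{itemize}
\item Since $G+(\{0\}\times L^\infty_+)\subseteq G$ and $\delta_U$ is finite at this pair, $\E_{Q_2}[W]\le 0$ for all $W\ge0$.
\item To exclude $Q_2=0$: since $U(X)\neq\emptyset$ for every $X$, finiteness of $\delta_U(Q_1,0)=\sup_X \E_{Q_1}[X]$ would force $Q_1=0$. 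That contradicts the strict separation.
\end{itemize}
So $Q_2\in ba_-$, and the separating half-space is one of those in the intersection defining $U^{**}(X_0)$.

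\textbf{Main obstacle.} I expect the delicate step to be the non-verticality $Q_2\neq 0$ together with the sign $Q_2(\Omega)<0$, i.e.\ that the separating functional lands in $ba\times ba_-$. A secondary point is that Hamel's theorem is stated for proper, closed, convex graphs with values in $\mathcal{X}$; the value condition is supplied by $(\alpha)$.
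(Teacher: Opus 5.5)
Your proposal is correct and follows the same route as the paper. You reduce to Hamel's biconjugation theorem, identify convexity of the graph with set-concavity and norm-closedness of the graph with usc of $U$, and then pass to usc of $R$ via \zcref[S]{theo continuity}; your explicit Hahn--Banach step (the separating functional lies in $ba\times ba_-$, using $U(X)+L^\infty_+=U(X)$ and $U(X)\neq\emptyset$ for all $X$) simply unpacks the part of Hamel's Theorem~2 that the paper cites, which makes the sufficiency direction self-contained.
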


 \begin{proof}

Theorem~2 of Hamel (2009) ensures that $U = U^{**}$ if and only if, besides the image of $U$ lying in $\mathcal{X}$, the map $U$ is set-concave and closed in $(\mathcal{X}, \subseteq)$, that is, if its epigraph is closed. Equivalently, this requires that
$U(X) \supseteq \limsup_{n \to \infty} U(X^n).$
By \zcref[S]{theo continuity}, this inclusion is satisfied if and only if $R$ is upper semi-continuous.
\end{proof}

\subsection{Duality of set-valued maps}

Now, we state the main result of this section, the dual representation of \Uset s.
    
\begin{theorem}\label{theorem dual U}
     $U$ is a concave \Uset\,  if and only if it can be represented as \begin{equation}\label{eq:U-dual}
 U(X)=U^{**}(X)=\bigcap\limits_{{Q}_1,{Q}_2\in \ba}   \left\lbrace \alpha_U({Q}_1,-{Q}_2)+S_{{Q}_1,-{Q}_2}(X) \right\rbrace.
\end{equation}
where  the penalty term is defined as 
\begin{align}\label{teorem dual U: eq penalty}
\alpha_U({Q}_1,{-Q}_2)
=\begin{cases}
    H(-Q_2)-\alpha_\rho(Q_2)-\beta_R(Q_1),&\:\beta_R(Q_1),\alpha_\rho(Q_2)<\infty\\[0.5em]
    L^\infty,& \text{otherwise},
\end{cases}
\end{align}
where $\beta_R(Q)=\sup_{X\in \lp{}}\{ \E_Q[X] +R(X)\}$ is proper.

\end{theorem}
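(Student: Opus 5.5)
The plan is to split the statement into two parts. The first is the equivalence ``concave \Uset\ $\iff U=U^{**}$''. The second is the explicit computation of $\alpha_U$ in \eqref{teorem dual U: eq penalty}.

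For the equivalence we rely on \zcref[S]{lmm:U  bi}, which already states $U=U^{**}$ if and only if $U$ is set-concave and $R$ is usc. A concave \Uset\ is by definition monotone and translation invariant. By \zcref[S]{theo:equiv_R_U}, $R$ is then monotone and translation invariant, hence Lipschitz continuous, in particular usc. Alternatively, \zcref[S]{lemma lipschitz} gives Lipschitz continuity of $U$, and \zcref[S]{theo continuity} transfers it to $R$. So set-concavity together with the monetary properties gives $U=U^{**}$. For the converse, $U^{**}$ is an intersection of sets $\{Z:\E_{Q_1}[X]+\E_{Q_2}[Z]\le \delta_U(Q_1,Q_2)\}$ whose defining constraint is affine jointly in $(X,Z)$. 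Such an intersection is automatically set-concave: if $Z_i$ satisfies the constraint at $X_i$, then $\lambda Z_1+(1-\lambda)Z_2$ satisfies it at $\lambda X_1+(1-\lambda)X_2$. Monotonicity and translation invariance of $U$ are inherited from the fact that $U$ is a \Uset\ of a monetary $R$, via \zcref[S]{theo:equiv_R_U}. Throughout we reparametrise $ba_-$ as $-\ba$ up to positive scaling; since $S_{Q_1,Q_2}$ is invariant under joint positive scaling of $(Q_1,Q_2)$, the intersection over $ba\times ba_-$ reduces to the one over $\ba\times\ba$ written in \eqref{eq:U-dual}.

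The second part is the computation of $\delta_U(Q_1,-Q_2)$ for $Q_1,Q_2\in\ba$, using \zcref[S]{lemma u to U}, namely $U(X)=A^\rho-R(X)$. First,
\[
\sup_{Z\in U(X)}\E_{-Q_2}[Z]=\sup_{Y\in A^\rho}\E_{Q_2}[-Y]+R(X)\E_{Q_2}[1]=\alpha_\rho(Q_2)+R(X).
\]
Taking the supremum over $X$ of $\E_{Q_1}[X]+R(X)+\alpha_\rho(Q_2)$ then yields $\delta_U(Q_1,-Q_2)=\beta_R(Q_1)+\alpha_\rho(Q_2)$. Substituting into the rewritten conjugate $\alpha_U(Q_1,-Q_2)=\{Z:\E_{-Q_2}[Z]\le\delta_U\}$ gives $H(-Q_2)$ shifted by a constant: using $\E_{Q_2}[1]=1$, this is $H(-Q_2)-\alpha_\rho(Q_2)-\beta_R(Q_1)$ when both quantities are finite. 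When either is $+\infty$, the constraint is void and $\alpha_U=L^\infty$. For $Q_1\notin\ba$ or general elements of $ba$, translation invariance forces $\beta_R=\infty$ unless $Q_1(\Omega)=1$ (shift $X$ by constants), and monotonicity forces $\beta_R=\infty$ unless $Q_1\ge 0$. This justifies restricting the intersection to $\ba$, because $L^\infty$ terms are redundant.

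Finally, we need properness of $\beta_R$. It is never $-\infty$ because $R$ is finite. It is finite somewhere because $R$ concave and Lipschitz has a superdifferential, i.e.\ $-R$ is convex continuous and its conjugate is finite at some point; the relevant $Q$ lies in $\ba$ by the argument above. The main obstacle I expect is the bookkeeping of signs and normalisations between $ba_-$, $S_{Q_1,Q_2}$ and the scaling by $-Q_2(\Omega)$, together with checking that restricting to normalised measures loses nothing in the intersection. I would handle this first by verifying that $S_{tQ_1,tQ_2}=S_{Q_1,Q_2}$ and $\alpha_U(tQ_1,tQ_2)=\alpha_U(Q_1,Q_2)$ for $t>0$.
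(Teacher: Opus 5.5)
Your ``only if'' direction and the penalty computation follow the paper's proof. Both use Lipschitz continuity together with \zcref[S]{lmm:U  bi}, restrict to normalised positive dual variables, and compute $\delta_U(Q_1,-Q_2)=\beta_R(Q_1)+\alpha_\rho(Q_2)$ from $U(X)=A^\rho-R(X)$. Your variations are harmless and in places cleaner: you do the restriction via $\beta_R(Q_1)=+\infty$, and you get properness of $\beta_R$ from superdifferentiability of the continuous concave $R$. Your joint-affinity argument for set-concavity of the intersection is also fine.

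The gap is in the ``if'' direction. There you claim that monotonicity and translation invariance of $U$ are inherited from $U$ being the \Uset\ of a monetary $R$. Nothing in that hypothesis makes $R=R^{u,\rho}$ monetary, since $u$ is an arbitrary uncertainty set. Moreover, by \zcref[S]{theo:equiv_R_U}, monotonicity and translation invariance of $R$ are equivalent to exactly the properties of $U$ you are trying to establish, so the step is circular.

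The conclusion genuinely fails without the representation. Take $u(X)=\{2X\}$ and $\rho=-\E[\cdot]$. Then $R(X)=-2\E[X]$ and $U(X)=\{Y:\E[Y]\ge 2\E[X]\}$. This $U$ is set-concave and monotone, but $U(X+c)=U(X)+2c$. It even equals its biconjugate over $ba\times ba_-$, via $(Q_1,Q_2)=(2\P,-\P)$. What rules it out is the restriction of the index set to $\ba\times\ba$.

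So you have to read the two properties off the representation itself, as the paper does:
\begin{itemize}
\item Each set in \eqref{eq:U-dual} has the form $\{Z:\E_{Q_1}[X]-\E_{Q_2}[Z]\le\delta\}$, where $\delta=\beta_R(Q_1)+\alpha_\rho(Q_2)$ (or $+\infty$) does not depend on $X$.
\item Since $Q_1(\Omega)=Q_2(\Omega)=1$, the constraint is invariant under $(X,Z)\mapsto(X+c,Z+c)$. This gives translation invariance.
\item Since $Q_1\ge 0$, the constraint is preserved when $X$ is decreased. This gives monotonicity.
\end{itemize}
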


 \begin{proof}
For the if part, let $V$ be defined as the right-hand-side of \eqref{eq:U-dual}, i.e., \[V(X)=\bigcap\limits_{{Q}_1,{Q}_2\in \ba}   \left\lbrace \alpha_U({Q}_1,-{Q}_2)+S_{{Q}_1,-{Q}_2}(X) \right\rbrace.\] Then from Theorem 2 in \cite{hamel2009duality} we have that $V$
 is closed, convex, non-empty and $V(X)+L^\infty_+=V(X)$. 
Moreover, $V$ is set-concave and has a closed epigraph. Then translation invariance follows by 
recalling that $S_{Q_1,-Q_2}(X)=H(-Q_2)+\E_{Q_1}[X](Q_2(\Omega))^{-1}$ and then noting that, for all $c \in  \R$,
\begin{align*}
    V(X+c) &= \bigcap\limits_{{Q}_1,{Q}_2 \in \ba}   \left\lbrace \alpha_U({Q}_1,-{Q}_2)+S_{{Q}_1,-{Q}_2}(X+c) \right\rbrace
  \\    &  =\bigcap\limits_{{Q}_1,{Q}_2\in \ba}   \left\lbrace \alpha_U({Q}_1,-{Q}_2)+H(-Q_2)+\E_{Q_1}[X]  + c \right\rbrace
    \\ &  =\bigcap\limits_{{Q}_1,{Q}_2\in \ba}   \left\lbrace \alpha_U({Q}_1,-{Q}_2)+H(-Q_2)+\E_{Q_1}[X]\right\rbrace +c\\
    &= V(X) + c\,.
\end{align*}
Next we show that $V$ is monotone. Note that for any  $Q \in \ba$ there exists a \Uset \, $V'$ such that $V'(0) = H(-Q)$. Hence, by \zcref[S]{lemma U containted in U + Y}, \zcref[S]{lemma U containted in U + Y: item Y non negative}, it holds  that $H(-Q) +Y\subseteq H(-Q)$ for any $Y \in \lp{+}_+$. Therefore,  taking $X\leq Z$ and $Y \in \lp{+}_+$ such that $Z=X+Y$ yields
\begin{align*}
    V(Z) &=    \bigcap\limits_{{Q}_1,{Q}_2 \in \ba}   \left\lbrace \alpha_U({Q}_1,-{Q}_2)+S_{{Q}_1,-{Q}_2}(Z) \right\rbrace
    \\ &  =\bigcap\limits_{{Q}_1,{Q}_2\in \ba}   \left\lbrace \alpha_U({Q}_1,-{Q}_2)+H(-Q_2)+\E_{Q_1}[Y] +\E_{Q_1}[X]   \right\rbrace.
    \\ &  \subseteq \bigcap\limits_{{Q}_1,{Q}_2\in \ba}   \left\lbrace \alpha_U({Q}_1,-{Q}_2)+H(-Q_2)+\E_{Q_1}[X]\right\rbrace
    \\ &= \bigcap\limits_{{Q}_1,{Q}_2\in \ba}   \left\lbrace \alpha_U({Q}_1,-{Q}_2)+S_{{Q}_1,-{Q}_2}(X) \right\rbrace
    \\
    &= V(X).
\end{align*}
Thus, $V(X)$ is a monotone, translation invariant, and set-concave \Uset~associated with the risk measure $\rho_V (X) := \inf \{m \in \R : X+m \in V(0)  \}$ and the robust risk measure $R(X) = \sup_{Z \in V(X)} \rho_V (Z)$.

Now we turn to the only if part. Since for each $X \in \lp{}$, $U(X)$ is a sub-level set of a convex  $\rho$, it is convex, closed, and  $U(X) = U(X) + L^\infty_+$. Hence, the image of $U$ lies in $\X$, that is, we can take $U:\lp{} \rightarrow \X$. Further, by \zcref[S]{lemma lipschitz}, it is Lipschitz continuous. Thus, its epigraph is closed. Thus, by \zcref[S]{lmm:U  bi} we have $U = U^{**}$ if and only if $U$ is set-concave, where the latter holds by assumption. Therefore we have \[U(X) = U^{**}(X) = \bigcap\limits_{({Q}_1,{Q}_2)\in ba\times ba_-}  \left\lbrace \alpha_U({Q}_1,{Q}_2)+S_{{Q}_1,{Q}_2}(X) \right\rbrace,\] where the second equality follows by \zcref[S]{def:dual-u}. In the next step we show that we can take the intersection over $\ba$ instead of $ba$. It follows from the definition of $S_{Q_1,Q_2}$ for $(Q_1,Q_2) \in ba \times ba_- $ that 
$S_{Q_1,Q_2}(X+Y) = S_{Q_1,Q_2}(X) + \frac{\E_{Q_1}[Y]}{-Q_2(\Omega)}$, in particular, 
    $S_{Q_1,Q_2}(X+c)= S_{Q_1,Q_2}(X) +c \frac{Q_1(\Omega)}{-Q_2(\Omega)}$ for some constant $c$. 
Then, as $U$ is translation invariant, it holds that
\begin{align*}
    U(X)+c=U(X+c)&=\bigcap\limits_{({Q}_1,{Q}_2)\in ba\times ba_-}  \left\lbrace \alpha_U({Q}_1,{Q}_2)+S_{{Q}_1,{Q}_2}(X+c) \right\rbrace 
    \\ 
&=\bigcap\limits_{({Q}_1,{Q}_2)\in ba\times ba_-}  \left\lbrace \alpha_U({Q}_1,{Q}_2)+S_{{Q}_1,{Q}_2}(X)  +c\frac{Q_1(\Omega)}{-Q_2(\Omega)} \right\rbrace. \end{align*}
From this, it follows that, for all $({Q}_1,{Q}_2)\in ba\times ba_-$ and $c \in \R$ that 
\begin{align*}
    U(X) + c = U(X+c) \subseteq  \alpha_U({Q}_1,{Q}_2)+S_{{Q}_1,{Q}_2}(X)  +c\frac{Q_1(\Omega)}{-Q_2(\Omega)},
\end{align*}
which is equivalent to
$
    U(X)  +c(1-\frac{Q_1(\Omega)}{-Q_2(\Omega)}) \subseteq  \alpha_U({Q}_1 ,{Q}_2)+S_{{Q}_1,{Q}_2}(X) 
$. If $\frac{Q_1(\Omega)}{-Q_2(\Omega)}\neq 1$, 
taking the union over $c \in \R$ yields
\begin{align*}
    \lp{} = U(X) +\R= \bigcup_{c \in \R}\big\{U(X)+c\big(1-\tfrac{Q_1(\Omega)}{-Q_2(\Omega)}\big)\big\} \subseteq  \alpha_U({Q}_1 ,{Q}_2)+S_{{Q}_1,{Q}_2}(X).
\end{align*}
In this case, restricting the intersection only for $\frac{Q_1(\Omega)}{-Q_2(\Omega)}=  1$ does not change it, and 
\begin{align*}
    U(X) &= 
    \bigcap_{ ({Q}_1,{Q}_2)\in ba\times ba_-} \bigg\{\alpha_U({Q}_1 ,{Q}_2)+S_{{Q}_1,{Q}_2}(X)\bigg\} 
    \\  &=  \bigcap_{ ({Q}_1,{Q}_2)\in ba\times ba_-} \bigg\{\alpha_U({Q}_1 ,{Q}_2)+S_{{Q}_1,{Q}_2}(X) : Q_1(\Omega) =-Q_2(\Omega)\bigg\}
    \\  &=  \bigcap_{ ({Q}_1,{Q}_2)\in ba\times ba_-} \bigg\{\alpha_U({Q}_1 ,{Q}_2)+S_{{Q}_1,{Q}_2}(X) : Q_1(\Omega) =-Q_2(\Omega) = 1\bigg\} 
    \\  &=  \bigcap_{ {Q}_1\in ba} \bigg\{\alpha_U({Q}_1 ,-{Q}_2)+S_{{Q}_1,-{Q}_2}(X) :  {Q}_2\in  ba_+ \setminus \{0\} ,Q_1(\Omega) =Q_2(\Omega) = 1\bigg\} .
\end{align*}
The third equality follows as $S_{\lambda Q_1, \lambda Q_2}(X) = S_{Q_1,Q_2}(X)$ for any $0<\lambda \in \R $, and the same holds for $\alpha_U$. Hence, we can take the normalised $Q_1' = \frac{Q_1}{Q_1(\Omega)}$ and $Q_2' = \frac{Q_2}{Q_1(\Omega)}$.

Now, assume, by contradiction, that $Q_1 \notin ba_+$. Then, there exists $Y \in L^\infty_+$ such that $\E_{Q_1}[Y] < 0$. Let $W_0 \in L^\infty$ be arbitrary and choose any $Z_0 \in U(W_0)$. For any $\lambda > 0$, define $W_\lambda := W_0 - \lambda Y$. Since $Y \in L^\infty_+$, we have $W_\lambda \leq W_0$. By the monotonicity of $U$, it follows that $U(W_0) \subseteq U(W_\lambda)$, which guarantees $Z_0 \in U(W_\lambda)$. Evaluating the support function of the graph of $U$ yields
\begin{align*}
    \delta_U(Q_1, -Q_2) &= \sup_{W \in L^\infty, Z \in U(W)} \{ \E_{Q_1}[W] + \E_{-Q_2}[Z] \} \\
    &\geq \sup_{\lambda > 0} \{ \E_{Q_1}[W_\lambda] + \E_{-Q_2}[Z_0] \} \\
    &= \E_{Q_1}[W_0] + \E_{-Q_2}[Z_0] - \inf_{\lambda > 0} \lambda \E_{Q_1}[Y].
\end{align*}
Since $\E_{Q_1}[Y] < 0$, it follows that $-\lambda \E_{Q_1}[Y] \to +\infty$ as $\lambda \to \infty$. Therefore, $\delta_U(Q_1, -Q_2) = +\infty$. This implies that the penalty set evaluates to the entire space:
\begin{align*}
    \alpha_U(Q_1, -Q_2) &= \{ Z \in L^\infty : \E_{-Q_2}[Z] \leq \delta_U(Q_1, -Q_2) \} = L^\infty.
\end{align*}
Consequently, $\alpha_U(Q_1, -Q_2) + S_{Q_1, -Q_2}(X) = L^\infty$. Since intersecting with $L^\infty$ imposes no additional restriction on $U(X)$, any functional $Q_1$ possessing a negative component can be omitted from the dual representation without altering the intersection. Hence, we can safely restrict the intersection over $Q_1 \in ba_+$, and after normalization, over $\ba \times \ba$, that is, we obtain
\begin{align*}
    U(X) &= 
          \bigcap_{ {Q}_1,{Q}_2\in \ba} \bigg\{\alpha_U({Q}_1 ,-{Q}_2)+S_{{Q}_1,-{Q}_2}(X)\bigg\}.
\end{align*}

For the claim on $\alpha_U$ recall that $U(X)={A}^\rho-R(X)$ and that the concave conjugate   is defined as $\beta_R(Q_1)=\sup_{X\in L^\infty}\{\E_{Q_1}[X]+R(X)\}$.
Then, for \(Q_1,Q_2\in\ba\),
\begin{align*}
\alpha_U(Q_1,-Q_2)
&=cl\bigcup_{X\in L^\infty}
\left\lbrace {A}^\rho+S_{Q_1,-Q_2}(-R(X)-X)\right\rbrace\\
&=cl\left\lbrace W\in L^\infty\colon
\exists X\in L^\infty,\ \exists (Y,Z)\in {A^\rho}\times
S_{Q_1,-Q_2}(-R(X)-X),\ W=Y+Z\right\rbrace\\
&=cl\left\lbrace W\in L^\infty\colon
\exists X\in L^\infty,\ \exists Y\in {A^\rho},\
0\geq \E_{Q_1}[-R(X)-X]+\E_{-Q_2}[W-Y]\right\rbrace\\
&=cl\left\lbrace W\in L^\infty\colon
0\geq \inf_{X\in L^\infty}\{\E_{Q_1}[-X]-R(X)\}
+\E_{-Q_2}[W]
+\inf_{Y\in A^\rho}\E_{-Q_2}[-Y]\right\rbrace\\
&=cl\left\lbrace W\in L^\infty\colon
0\geq -\sup_{X\in L^\infty}\{\E_{Q_1}[X]+R(X)\}
+\E_{-Q_2}[W]
-\sup_{Y\in A^\rho}\E_{Q_2}[-Y]\right\rbrace\\
&=cl\left\lbrace W\in L^\infty\colon
0\geq \E_{-Q_2}[W]-\beta_R(Q_1)-\alpha_\rho(Q_2)\right\rbrace\\
&=H(-Q_2)-\beta_R(Q_1)-\alpha_\rho(Q_2).
\end{align*}
The second last  equalities follow because $\alpha_\rho(Q)=\sup_{X\in{A}^\rho}\E_Q[-X]$. 
Note that as $U(X)$ is bounded from below, for all $X \in \lp{}$ the function $\beta_R$ must be proper, as otherwise $U(X) = \lp{}_+$.
\end{proof}


\begin{remark}
We have that $\beta_R\colon ba \rightarrow \R\cup\{\infty\}$ is the convex conjugate of $-R$ under the bilinear duality form taken as $(X,Q)\mapsto \E[XQ]$. Note that $\beta_R$ is convex and weak* lsc even if $R$ is not convex. The reversion of signs seems to be necessary to properly accommodate the order reversion in the Hamel duality approach. 
\end{remark}

We now show a corollary for the \emph{coherent} case for \Uset s, where the penalty becomes analogous to a convex indicator function.

\begin{corollary}
If in addition to the hypotheses in \zcref[S]{theorem dual U}, $U$ is also positive homogeneous, then the penalty becomes
\[\alpha_U( {Q}_1,- {Q}_2)=\begin{cases}
 H(- {Q}_2),&\:( {Q}_1, {Q}_2)\in\mathcal{Q}_U\\
 L^\infty,&\:( {Q}_1, {Q}_2)\not\in\mathcal{Q}_U
\end{cases},\] 
where
\[\mathcal{Q}_U=\big\{( {Q}_1, {Q}_2)\in \ba\times \ba \colon S_{( {Q}_1,- {Q}_2)}(X)\supseteq U(X)\,,\quad\forall\:X\in L^\infty\big\}.\] In this case the
dual representation becomes 
\[U(X)=\bigcap\limits_{( {Q}_1, {Q}_2)\in \mathcal{Q}_U}  S_{( {Q}_1,- {Q}_2)}(X). \]
\end{corollary}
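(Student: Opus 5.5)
The plan is to specialise the penalty formula of \zcref[S]{theorem dual U} to the positively homogeneous case. We show that $\beta_R$ and $\alpha_\rho$ can only take the values $0$ and $\infty$. The finiteness domain will then be identified with $\mathcal{Q}_U$.

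\textbf{Step 1: $\beta_R$ and $\alpha_\rho$ are indicators.} If $U$ is positively homogeneous, then \zcref[S]{theo:equiv_R_U}, \zcref[S]{theo:equiv_R_U_ph}, gives that $R$ is positively homogeneous, taking $\rho$ coherent as in the corollary after \zcref[S]{thm:WC}.

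The function $\beta_R(Q)=\sup_X\{\E_Q[X]+R(X)\}$ is the supremum of a positively homogeneous functional, since $X\mapsto \E_Q[X]+R(X)$ is positively homogeneous and $R(0)=0$. Its value is therefore $0$ if $\E_Q[X]+R(X)\le 0$ for all $X$, and $\infty$ otherwise.

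Likewise, $A^\rho$ is a cone, so $\alpha_\rho(Q)=\sup_{Y\in A^\rho}\E_Q[-Y]\in\{0,\infty\}$.

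Hence \eqref{teorem dual U: eq penalty} reduces to $\alpha_U(Q_1,-Q_2)=H(-Q_2)$ when both $\beta_R(Q_1)=0$ and $\alpha_\rho(Q_2)=0$, and to $L^\infty$ otherwise.

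\textbf{Step 2: identify this domain with $\mathcal{Q}_U$.} By Lemma 1 of \cite{hamel2009duality}, $\alpha_U(Q_1,-Q_2)=\{Z:\E_{-Q_2}[Z]\le\delta_U(Q_1,-Q_2)\}$. In the homogeneous case the graph of $U$ is a cone, since $U(\lambda X)=\lambda U(X)$. Consequently $\delta_U(Q_1,-Q_2)\in\{0,\infty\}$, and $\delta_U(Q_1,-Q_2)=0$ exactly when
\[
\E_{Q_1}[X]-\E_{Q_2}[Z]\le 0 \quad\text{for all } X \text{ and all } Z\in U(X).
\]
This condition is precisely $U(X)\subseteq S_{Q_1,-Q_2}(X)$ for all $X$, i.e.\ $(Q_1,Q_2)\in\mathcal{Q}_U$.

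Here $\delta_U=0$ gives $\alpha_U=H(-Q_2)$ and $\delta_U=\infty$ gives $\alpha_U=L^\infty$. This already yields the displayed penalty directly, without going through $\beta_R$ and $\alpha_\rho$. Step 1 serves only as a consistency check, in the direction $\beta_R(Q_1)+\alpha_\rho(Q_2)=\delta_U(Q_1,-Q_2)$, which is what the computation in the proof of \zcref[S]{theorem dual U} shows.

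\textbf{Step 3: the representation.} Substitute into \eqref{eq:U-dual}.

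Pairs outside $\mathcal{Q}_U$ contribute $L^\infty$ and drop out of the intersection.

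For a pair in $\mathcal{Q}_U$, the term is $H(-Q_2)+S_{Q_1,-Q_2}(X)$. Recall $S_{Q_1,-Q_2}(X)=H(-Q_2)+\E_{Q_1}[X]$ after normalisation. Since $H(-Q_2)$ is a convex cone, $H(-Q_2)+H(-Q_2)=H(-Q_2)$, and so the term equals $S_{Q_1,-Q_2}(X)$. This gives the stated intersection.

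\textbf{Main obstacle.} The delicate point is the cone argument for $\delta_U$. One must check that $\delta_U\ge 0$, using $0\in\mathrm{graph}$: indeed $0\in U(0)=A^\rho$ since $\rho(0)=0$. One must also handle the closure in the definition of $\alpha_U$, which is harmless because $H(-Q_2)$ is closed.
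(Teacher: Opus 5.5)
Your proof is correct and is essentially the paper's argument. Positive homogeneity makes the graph of $U$ a cone, so $\delta_U(Q_1,-Q_2)\in\{0,\infty\}$ with value $0$ exactly on $\mathcal{Q}_U$. This gives $\alpha_U(Q_1,-Q_2)=H(-Q_2)$ or $L^\infty$, and substituting into \eqref{eq:U-dual} with $H(-Q_2)+H(-Q_2)=H(-Q_2)$ yields the representation. The paper treats the $\mathcal{Q}_U$ case through set inclusions rather than through $\delta_U\le 0$, which is only a cosmetic difference.

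Your Step 1 takes $\rho$ coherent, which is not among the stated hypotheses. It does follow: with $\rho$ normalised, $U(0)=A^\rho-R(0)$ being a cone forces $R(0)=0$ and makes $A^\rho$ a cone. In any case, as you note, Step 1 is only a consistency check and is not needed.
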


 \begin{proof}
If $( {Q}_1, {Q}_2)\in\mathcal{Q}_U$, then $U(0)\subseteq S_{( {Q}_1,- {Q}_2)}(0)=H(- {Q}_2)$. Of course, $H(- {Q}_2)\supseteq\alpha_U( {Q}_1,- {Q}_2)\supseteq H(- {Q}_2)$. 
Thus, $\alpha_U( {Q}_1,- {Q}_2)=H(- {Q}_2)$. If $( {Q}_1, {Q}_2)\not\in\mathcal{Q}_U$, then there exist $(X,Z)\in graph_U$ such that $Z\not\in S_{( {Q}_1,- {Q}_2)}(X)$, which implies $\E_{ {Q}_1}[X] + \E_{- {Q}_2}[Z]>0$. Since $U$ is sub-linear, its graph is a convex cone. Then we have that
\[\delta_{graph_U}( {Q}_1, -{Q}_2)\geq \sup\limits_{\lambda\geq 0}\{\E_{ {Q}_1}[\lambda X]+\E_{ -
{Q}_2}[\lambda Z]\}=\infty.\] Thus, $\alpha_U( {Q}_1,- {Q}_2)=L^\infty$. Hence, the claim follows.
\end{proof}

The dual representation of $U$ can be taken over probability measures, as long as the \Uset~satisfies the Fatou usc, the next corollary formalizes that.

\begin{corollary}
Let $U$ be a concave \Uset , and $\rho$ be Fatou lsc. Then  $U$ is Fatou usc if and only if
\begin{equation*}
 U(X)=\bigcap\limits_{{Q}_1,{Q}_2\in \mathcal{P}}   \left\lbrace \alpha_U({\Q}_1,-{\Q}_2)+S_{{\Q}_1,-{\Q}_2}(X) \right\rbrace.
\end{equation*}
\end{corollary}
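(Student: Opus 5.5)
The plan is to transport the scalar Fatou duality machinery into the set-valued representation of \zcref[S]{theorem dual U}. We use the explicit penalty \eqref{teorem dual U: eq penalty}: for $Q_1,Q_2\in\ba$ the set $\alpha_U(Q_1,-Q_2)+S_{Q_1,-Q_2}(X)$ is either $L^\infty$ or the half-space
\[
\{Z\in L^\infty : \E_{Q_2}[-Z]\le \alpha_\rho(Q_2)+\beta_R(Q_1)-\E_{Q_1}[X]\}.
\]
(This uses $Q_1(\Omega)=Q_2(\Omega)=1$ and the signs fixed in the proof of \zcref[S]{theorem dual U}.) Since $U(X)=A^\rho-R(X)$, membership $Z\in U(X)$ means $\rho(Z)\le R(X)$. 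The intersection over $\ba\times\ba$ therefore encodes two scalar dual representations:
\[
\rho(Z)=\sup_{Q_2}\{\E_{Q_2}[-Z]-\alpha_\rho(Q_2)\},
\qquad
R(X)=\inf_{Q_1}\{\beta_R(Q_1)-\E_{Q_1}[X]\}.
\]
The second is the concave biconjugate of $R$, which is valid because $R$ is concave (\zcref[S]{theo:equiv_R_U}) and usc, the latter by Lipschitz continuity. So the corollary reduces to deciding when both representations can be restricted to $\mathcal P$.

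For the ``if'' direction, assume the representation over $\mathcal P$. The set $V(X)$ on the right-hand side is an intersection of half-spaces defined by countably additive measures. Take a bounded sequence $X^n\limas X$ and $Z\in\Limsup U(X^n)$, so that $Z^{n_k}\in U(X^{n_k})$ with $Z^{n_k}\limL Z$. For each fixed pair $(\Q_1,\Q_2)$ we have
\[
\E_{\Q_2}[-Z^{n_k}]\le \alpha_\rho(\Q_2)+\beta_R(\Q_1)-\E_{\Q_1}[X^{n_k}].
\]
We pass to the limit using dominated convergence for $\E_{\Q_1}[X^{n_k}]$, which works because $\Q_1\ll\P$ and the sequence is bounded, and using norm convergence for $\E_{\Q_2}[Z^{n_k}]$. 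This gives $Z\in V(X)=U(X)$, so $U$ is Fatou usc. Countable additivity of $\Q_1$ is the essential point here; for $Q_1\in\ba$ this step would fail.

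For the ``only if'' direction, we use Fatou lsc of $\rho$ and Theorem 4.33 of \cite{follmer2025stochastic}, so the supremum for $\rho$ may be taken over $\mathcal P$. By \zcref[S]{theo continuity}, Fatou usc of $U$ gives Fatou usc of $R$. We then need a concave analogue of the Fatou duality theorem: $-R$ is convex, satisfies $-R(X+c)=-R(X)+c$, is monotone in the reversed sense, and is Fatou lsc. Hence its conjugate can be restricted to $\mathcal P$, giving
\[
R(X)=\inf_{\Q_1\in\mathcal P}\{\beta_R(\Q_1)-\E_{\Q_1}[X]\}.
\]
To justify this we would apply the Krein--\v Smulian / $\sigma(L^\infty,L^1)$-closedness argument to the upper level sets $\{R\ge c\}$. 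Fatou usc together with concavity makes these convex and closed under bounded a.s.\ convergence, hence weak* closed, and the Hahn--Banach separation then produces elements of $L^1$. Normalisation to $\mathcal P$ follows from monotonicity and translation invariance, exactly as in the classical proof.

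Combining the two restricted scalar representations, $Z\in U(X)$ holds if and only if
\[
\E_{\Q_2}[-Z]-\alpha_\rho(\Q_2)\le \beta_R(\Q_1)-\E_{\Q_1}[X]\quad\text{for all }\Q_1,\Q_2\in\mathcal P.
\]
This is precisely the stated intersection. The main obstacle is the concave Fatou duality for $R$. In particular, $-R$ is not a monetary risk measure in the usual direction, so the sign conventions and the positivity of the separating functionals must be checked carefully. A further point to verify is that $\beta_R$ remains finite on enough $\Q_1\in\mathcal P$.
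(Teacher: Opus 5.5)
Your proof is correct. The ``if'' direction is the same as the paper's: you pass to the limit in each half-space, using dominated convergence for $\E_{\Q_1}[X^{n_k}]$ and norm convergence for $Z^{n_k}$. The ``only if'' direction takes a genuinely different route. The paper argues on the graph $\{(X,Z):Z\in U(X)\}$. It shows the graph is convex and closed under bounded a.s.\ convergence, using Fatou lsc of $\rho$ and Fatou usc of $R$. It then invokes a Krein--\v{S}mulian-type criterion on $L^\infty\times L^\infty$ to get $\sigma(L^\infty\times L^\infty,L^1\times L^1)$-closedness, reruns Hamel's biconjugation in that pairing, and only sketches the normalisation of the dual variables. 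You instead exploit the decoupling $\delta_U(Q_1,-Q_2)=\alpha_\rho(Q_2)+\beta_R(Q_1)$, which comes from $U(X)=A^\rho-R(X)$. With it, the intersection over $\mathcal P$ is literally $\{Z:\sup_{\Q_2\in\mathcal P}\{\E_{\Q_2}[-Z]-\alpha_\rho(\Q_2)\}\le\inf_{\Q_1\in\mathcal P}\{\beta_R(\Q_1)-\E_{\Q_1}[X]\}\}$, and the claim reduces to two scalar Fatou dualities, one for each dual variable. Your route is more elementary, avoids the product-space biconjugation and the normalisation step, and shows which hypothesis controls which dual variable. The paper's graph-based argument, in exchange, does not depend on this product structure of $\delta_U$. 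The obstacle you flag also disappears. The map $\tilde\rho(X):=-R(-X)$ is a convex risk measure in the usual sense, and it is Fatou lsc precisely because $R$ is Fatou usc. Its minimal penalty is $\sup_{X}\{\E_Q[-X]+R(-X)\}=\beta_R(Q)$. Applying Theorem 4.33 of \cite{follmer2025stochastic} to $\tilde\rho$ therefore gives $R(X)=\inf_{\Q\in\mathcal P}\{\beta_R(\Q)-\E_{\Q}[X]\}$ directly, with no separation argument on upper level sets. Finiteness of $\beta_R$ at some $\Q\in\mathcal P$ is then automatic, since $R$ is real-valued.
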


 \begin{proof}
Assume first that the representation is taken over $\mathcal P$. Let
$\{X^n\}_{n\in\N}\subseteq L^\infty$ be bounded and such that $X^n\limas X$.
Take $Y\in \Limsup_{n\to\infty}U(X^n)$. Then there exist a subsequence
$\{X^{n_k}\}_{k\in\N}$ and $Y^{n_k}\in U(X^{n_k})$ such that $Y^{n_k}\limL Y$.
For every $Q_1,Q_2\in\mathcal P$, since $Y^{n_k}\in U(X^{n_k})$, the
representation gives
\[
\E_{Q_1}[X^{n_k}]+\E_{-Q_2}[Y^{n_k}]
\leq
\delta_U(Q_1,-Q_2).
\]
By dominated convergence, $\E_{Q_1}[X^{n_k}]\to\E_{Q_1}[X]$, and since
$Y^{n_k}\limL Y$, also $\E_{-Q_2}[Y^{n_k}]\to\E_{-Q_2}[Y]$. Therefore
$\E_{Q_1}[X]+\E_{-Q_2}[Y]\leq \delta_U(Q_1,-Q_2)$ for all
$Q_1,Q_2\in\mathcal P$. Using again the representation over $\mathcal P$, we
obtain $Y\in U(X)$. Hence, $\Limsup_{n\to\infty}U(X^n)\subseteq U(X)$, and $U$
is Fatou usc.

Conversely, assume that \(U\) is Fatou usc. By \zcref[S]{theo continuity}, \(R\) is Fatou usc. We show that the representation can be taken over the dual pair $(L^\infty\times L^\infty,\; L^1\times L^1).$
Let $\epi    :=
    \left\{(X,Z)\in L^\infty\times L^\infty : Z\in U(X)\right\}$. Since \(U\) is set-concave, \(\epi \) is convex. We claim that
\(\epi \) is closed under bounded a.s. convergence. Let
\(\{(X^n,Z^n)\}_{n\in\mathbb N}\subseteq \epi \) be uniformly bounded in
\(L^\infty\times L^\infty\) and suppose that $X^n \limas X$ and  $   Z^n \limas Z$.
Since \((X^n,Z^n)\in \epi \), we have \(Z^n\in U(X^n)\). Equivalently,
\[
    \rho(Z^n)\leq R(X^n), \qquad n\in\mathbb N.
\]
As \(\rho\) is Fatou lsc and \(\{Z^n\}_{n\in\mathbb N}\) is bounded, it follows that
\[
    \rho(Z)\leq \liminf_{n\to\infty}\rho(Z^n).
\]
Moreover, since \(R\) is Fatou usc and \(\{X^n\}_{n\in\mathbb N}\) is bounded,
\[
    \limsup_{n\to\infty}R(X^n)\leq R(X).
\]
Therefore,
\[
    \rho(Z)
    \leq \liminf_{n\to\infty}\rho(Z^n)
    \leq \liminf_{n\to\infty}R(X^n)
    \leq \limsup_{n\to\infty}R(X^n)
    \leq R(X).
\]
Hence \(Z\in U(X)\), and consequently \((X,Z)\in \epi \). Thus,
\(\epi \) is closed under bounded a.s. convergence.

Since \(\epi \) is convex and Fatou closed, the Fatou closedness
criterion for convex subsets of \(L^\infty\times L^\infty\) implies that
\(\epi \) is
\(\sigma(L^\infty\times L^\infty,L^1\times L^1)\)-closed. Hence the biconjugation
argument for \(U\) can be applied with respect to the dual pair $ (L^\infty\times L^\infty,\; L^1\times L^1),$ rather than
$  (L^\infty\times L^\infty,\; ba\times ba)$
Therefore, the dual representation of \(U\) can be written using dual variables in
\(L^1\times L^1\). Finally, by the same monotonicity and translation invariance arguments used in the
proof of the general dual representation, the effective dual variables may be restricted
to positive and normalized elements. Thus they correspond to probability measures
absolutely continuous with respect to \(\P\). Consequently, the representation of \(U\)
can be taken over \(\mathcal P\).
\end{proof}

We now show an example that satisfy the conditions on \zcref[S]{theorem dual U}. Specifically, from a convex risk measure $\rho$, we construct an uncertainty set such that the robust risk measure is concave and the consolidated uncertainty set is set-concave.
\begin{example}
 For $\theta>0$, let
$\mathrm{ENT}_{\theta}:L^\infty\to\mathbb R$ be the entropic risk measure defined by
$\mathrm{ENT}_{\theta}(X):=\frac{1}{\theta}\log \E\left[e^{-\theta X}\right]$.

Let $\gamma,\eta>0$ and define $\rho(Y):=\mathrm{ENT}_{\gamma}(Y)$. Then $\rho$ is a convex risk measure. Now define
$R(X):=-\mathrm{ENT}_{\eta}(-X)$. Since $\mathrm{ENT}_{\eta}(-X)=\frac{1}{\eta}\log\E[e^{\eta X}]$, the map $R$ is monetary and concave.

Let $C_m:=\{W\in L^\infty:\|W\|_\infty\le m\}$ for some $m>0$, that is, $C_m$ is a $\lp{}$ closed ball centred around $0$. Moreover, $\sup_{W\in C_m}\rho(W)=m$. Define the convex uncertainty set
\[
u(X):=\mathrm{ENT}_{\eta}(-X)+m+C_m
=
\{\mathrm{ENT}_{\eta}(-X)+m+W:W\in C_m\}.
\]
Then, by translation invariance of $\rho$,
\begin{align*}
R^{u,\rho}(X)
=
\sup_{Y\in u(X)}\rho(Y)
&=
\sup_{W\in C_m}\rho(\mathrm{ENT}_{\eta}(-X)+m+W)
\\
&=
-\mathrm{ENT}_{\eta}(-X)-m+\sup_{W\in C_m}\rho(W)
=
R(X).
\end{align*}
Thus the robust risk measure generated by $(u,\rho)$ is exactly $R$ and the associated consolidated uncertainty set is
\[
U(X)=\{Y\in L^\infty:\rho(Y)\le R(X)\}.
\]
Equivalently, $U(X)=A^\rho-R(X)=A^\rho+\mathrm{ENT}_{\eta}(-X)$. As $R$ is concave, \zcref[S]{theo:equiv_R_U}, \zcref[S]{theo:equiv_R_U_conc} yields that $U$ is set-concave. 
Finally, we compute the dual set of $U$. For $\Q\in\mathcal P$, let
$H(\Q|\P):=\E\left[\frac{d\Q}{d\P}\log\left(\frac{d\Q}{d\P}\right)\right]$ denote the relative entropy, note that for $Q \in \ba \setminus \mathcal{P}, H(\Q|\P) = \infty $. The variational representation of the entropic risk measure gives
$\alpha_\rho(\Q_2)=\frac{1}{\gamma}H(\Q_2|\P)$ and
\[
\beta_R(\Q_1):=\sup_{X\in L^\infty}\left\{\E_{\Q_1}[X]+R(X)\right\}
=
\frac{1}{\eta}H(\Q_1|\P).
\]
By \zcref[S]{cor:exact_equality_delta_g} the support function of the graph of $U$ satisfies
\[
\delta_U(\Q_1,-\Q_2)
=
\beta_R(\Q_1)+\alpha_\rho(\Q_2)
=
\frac{1}{\eta}H(\Q_1|\P)+\frac{1}{\gamma}H(\Q_2|\P).
\]
Consequently, for $\Q_1,\Q_2\in\mathcal P$, the set-valued penalty of $U$ is
\[
\alpha_U(\Q_1,-\Q_2)
=
\left\{
Z\in L^\infty:
\E_{\Q_2}[-Z]\le
\frac{1}{\eta}H(\Q_1|\P)+\frac{1}{\gamma}H(\Q_2|\P)
\right\}.
\]
Thus the dual representation of $U$ is
\[
U(X)
=
\bigcap_{\Q_1,\Q_2\in\mathcal P}
\left\{
Z\in L^\infty:
\E_{\Q_1}[X]+\E_{\Q_2}[-Z]
\le
\frac{1}{\eta}H(\Q_1|\P)+\frac{1}{\gamma}H(\Q_2|\P)
\right\}.
\]
\end{example}

We conclude this section with the same example as in \zcref[S]{sec:dual-risk} but now analysing the dual of the uncertainty set.

\example{
Let \(u(X)=\{Y\in L^\infty\colon \lVert X-Y\rVert\leq \epsilon\}=X+\{Y\in L^\infty\colon \lVert Y\rVert\leq\epsilon\}\), i.e. closed balls in \(L^\infty\). In order to put the example in the setup of \zcref[S]{theorem dual U}, take the base risk measure \(\rho(X)=-\E[X]\). Then \(R(X)=\sup_{\lVert Y-X\rVert\leq\epsilon}-\E[Y]=-\E[X]+\epsilon\), and the consolidated uncertainty set is
\[
U(X)=\{Y\in L^\infty\colon \rho(Y)\leq R(X)\}
=\{Y\in L^\infty\colon \E[Y]\geq \E[X]-\epsilon\}.
\]
We now compute the penalty term for \(\alpha_U(Q_1,-Q_2)\), with \(Q_1,Q_2\in\ba\). By definition, \(\alpha_U(Q_1,-Q_2)=\{Z\in L^\infty\colon \E_{-Q_2}[Z]\leq \delta_U(Q_1,-Q_2)\}\), where
\[
\delta_U(Q_1,-Q_2)
:=
\sup_{X\in L^\infty}\sup_{Y\in U(X)}
\{\E_{Q_1}[X]+\E_{-Q_2}[Y]\}.
\]
If $Q_1\neq Q_2$, it is straightforward to verify that $\delta_U(Q_1,-Q_2)=\infty$. Since \(U(X)=\{Y\in L^\infty\colon \E[Y]\geq \E[X]-\epsilon\}\), we have
\[
\sup_{Y\in U(X)}\E_{-Q_2}[Y]
=
\sup_{\E[Y]\geq \E[X]-\epsilon}-\E_{Q_2}[Y].
\]
If \(Q_2\neq \P\), this supremum is \(+\infty\), since one may add to \(Y\) variables with zero \(\P\)-expectation and nonzero \(Q_2\)-expectation and then rescale. If \(Q_2=\P\), then \(\sup_{Y\in U(X)}\E_{-\P}[Y]=-\E[X]+\epsilon\). Hence
\[
\delta_U(Q_1,-Q_2)
=
\begin{cases}
\epsilon, & Q_1=-Q_2=\P,\\
+\infty, & \text{otherwise}.
\end{cases}
\]
Therefore,
\[
\alpha_U(Q_1,-Q_2)
=
\begin{cases}
H(-\P)-\epsilon, & Q_1=- Q_2=\P,\\
L^\infty, & \text{otherwise}.
\end{cases}
\]
This is consistent with the formula in \zcref[S]{theorem dual U}. Indeed, for \(\rho(X)=-\E[X]\), \(\alpha_\rho(Q_2)=0\) if \(Q_2=\P\), and \(\alpha_\rho(Q_2)=+\infty\) otherwise. Moreover,
\[
\beta_R(Q_1)
=
\sup_{X\in L^\infty}\{\E_{Q_1}[X]+R(X)\}
=
\sup_{X\in L^\infty}\{\E_{Q_1}[X]-\E[X]+\epsilon\},
\]
so \(\beta_R(Q_1)=\epsilon\) if \(Q_1=\P\), and \(\beta_R(Q_1)=+\infty\) otherwise. Thus \(\alpha_U(Q_1,-Q_2)=H(-Q_2)-\alpha_\rho(Q_2)-\beta_R(Q_1)\) reduces exactly to the expression above.
}

\section{Connection between the dual of R and U.}\label{sec:connection}

At this point, a fundamental incompatibility emerges. In Hamel's framework, the dual representation of $U$ hinges on \emph{set-concavity}, whereas in the theory of robust risk measures the dual representation of $R$ relies on the usual convexity of risk measures, which is equivalent to \emph{set-convexity} of the associated consolidated uncertainty set. These two requirements point in opposite directions. Indeed, for a consolidated uncertainty set, set-concavity of $U$ is equivalent, to concavity of $R$. 
Moreover, the set-valued duality of $U$ also requires the set $U(X)$ is convex, which holds if and only if the underlying risk measure $\rho$ is convex.
Thus, this framework naturally combines a convex $\rho$ with a concave $R$. Under two additional assumptions, however, this combination degenerates. First, if both $\rho$ and $R$ are normalized, and second if for each $X$, the uncertainty set around $X$ contains $X$ itself; where the latter implies that $\rho(X)\le R(X)$ for all $X\in L^\infty$. Hence, as $\rho$ is convex and $R$ is concave, normalization forces both functionals to be identical and linear. To see it, we apply  convexity and concavity of $\rho$ and $R$, respectively, together with normalization of both maps, such that for all $X \in \lp{}$ 
\[ 0=2\rho(0) \leq  \rho(X) + \rho(-X)\leq  R(X) + R(-X) \leq 2R(0)=0.\]
Hence, $\rho$ and $R$ are both homogeneous, i.e. $R(\lambda X) = \lambda R(X)$ for all $\lambda \in \R$. Moreover, it holds that $R(X) \geq R(X+Y) + R(-Y) =R(X+Y)  -R(Y)$ and by changing the roles of $X$ and $Y$, $R(X)+R(Y) = R(X+Y)$. The same reasoning applies for $\rho$, hence, both functions are linear and the condition $\rho \leq R$ implies that they are identical. Without normalization of $\rho$ and $R$, the same argument yields affinity rather than linearity, that is the functions satisfy $\rho(X) - \rho(0) = R(X) - R(0)$.

There are however many situations when $X \not\in U(X)$ and thus $\rho$ and $R$ are not affine. Examples include if $X$ is, e.g., a unreliable prior or estimated from poor data. In which case, the uncertainty set $U(X)$ can be constructed to contain the ``true'' (limiting) distribution, however, including $X$ in the uncertainty set would yield an \Uset~too large to be useful. Alternatively, in a financial context, an agent might wish to hedge/replicate $X$. Then $U(X)$ can be viewed as all available financial position to replicate $X$, clearly $X \not\in U(X)$.

Nonetheless, in this section we attempt to establish a connection between both dual representation setups. 
As we can write the penalty function of $U$ as $\alpha_U(Q_1,-Q_2)=\{ Z\in L^\infty\colon \E_{Q_2}[Z]\leq \delta_{U}(Q_1,-Q_2)\}$ we can draw a connection between the penalty function of the \Uset\,to the penalty functions of the robust risk measure $R$ and the risk measure $\rho$. In particular, we can link $\delta_U$, to $\beta_R$,  $ \alpha_\rho$ and $\beta_{g_Q}$, where $\beta_{g_Q}(Q_1):=\sup_{X\in L^\infty}\{\E_{Q_1}[X]+g_Q(X)\}$ plays a similar role as the one of $\beta_R$ for $R$.


\begin{theorem}\label{cor:exact_equality_delta_g}
Let  $u$ be a  translation invariant and order preserving  uncertainty set,  $U$ its  consolidated uncertainty set, $R$ the robust risk measure and $g_Q$ its associated auxiliary function as in \zcref[S]{defi auxiliar}. Then, for all $Q_1,Q_2 \in \ba$ such that $\beta_R(Q_1) + \alpha_\rho(Q_2)<\infty$ we have
\[\delta_U(Q_1, -Q_2)  = \beta_R(Q_1) +\alpha_\rho(Q_2) = \beta_{g^U_{Q_2}}(Q_1).\]


\end{theorem}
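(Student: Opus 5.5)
The plan is to compute $\delta_U(Q_1,-Q_2)$ directly, using the representation $U(X)=A^\rho-R(X)$ from \zcref[S]{lemma u to U}, \zcref[S]{lemma u to U: 4}. The hypotheses on $u$ are used only through what they give $R$. By \zcref[S]{prop:u_to_R}, $R$ is monotone and translation invariant, hence finite and real-valued. So $U(X)$ is a well-defined translate of the acceptance set $A^\rho$ of the convex, translation invariant $\rho$.

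\textbf{First equality.} For $Q_1,Q_2\in\ba$ the definition \eqref{eq:delta-U-def} splits as follows:
\begin{align*}
\delta_U(Q_1,-Q_2)
&=\sup_{X\in L^\infty}\sup_{Y\in A^\rho}\bigl\{\E_{Q_1}[X]-\E_{Q_2}[Y-R(X)]\bigr\}\\
&=\sup_{X\in L^\infty}\bigl\{\E_{Q_1}[X]+R(X)\,Q_2(\Omega)\bigr\}+\sup_{Y\in A^\rho}\E_{Q_2}[-Y].
\end{align*}
Since $Q_2(\Omega)=1$, the first supremum equals $\beta_R(Q_1)$. The second equals $\alpha_\rho(Q_2)$, because $\alpha_\rho$ is the minimal penalty written as a supremum over $A^\rho$. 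The two suprema decouple because the inner supremum over $Y$ does not depend on $X$ once the shift $R(X)$ is pulled out, using $\E_{Q_2}[c]=c$ for constants. The finiteness assumption $\beta_R(Q_1)+\alpha_\rho(Q_2)<\infty$ rules out the $\infty-\infty$ ambiguity. It also ensures that neither term equals $-\infty$: each term is bounded below, since $0\in A^\rho$ after normalisation (or some constant lies in $A^\rho$ in general), and $R$ is finite.

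\textbf{Second equality.} I would first show that $g^U_{Q_2}(X)=R(X)+\alpha_\rho(Q_2)$ for all $X$. This is the computation in the proof of \zcref[S]{lemma penalty g}, and it needs only $U(X)=A^\rho-R(X)$, not convexity of $u$. It applies whenever $\alpha_\rho(Q_2)<\infty$, which holds under our finiteness assumption. Then
\[
\beta_{g^U_{Q_2}}(Q_1)=\sup_{X}\bigl\{\E_{Q_1}[X]+R(X)\bigr\}+\alpha_\rho(Q_2)=\beta_R(Q_1)+\alpha_\rho(Q_2),
\]
since $\alpha_\rho(Q_2)$ is a finite constant. Combining the two steps gives the full chain of equalities.

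The main obstacle is not analytic but bookkeeping. The first point is to check that \zcref[S]{lemma penalty g} can be invoked without its convexity hypothesis, which I would handle by re-deriving the one-line identity for $g^U_{Q_2}$ directly. The second is to handle the signs and normalisation carefully: the pairing uses $-Q_2$, and $Q_2(\Omega)=1$ is needed to turn $\E_{Q_2}[R(X)]$ into $R(X)$. Once these are in place, everything reduces to separating a supremum over a product set.
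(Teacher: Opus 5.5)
Your proof is correct, but it handles the two equalities in the opposite order from the paper.

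\textbf{The paper's route.} The paper obtains $\delta_U(Q_1,-Q_2)=\beta_R(Q_1)+\alpha_\rho(Q_2)$ indirectly. It writes $\alpha_U(Q_1,-Q_2)=H(-Q_2)-\delta_U(Q_1,-Q_2)$ from the support-function form of the set-valued conjugate. It then recalls the closure-of-union computation $\alpha_U(Q_1,-Q_2)=H(-Q_2)-\alpha_\rho(Q_2)-\beta_R(Q_1)$ from the proof of Theorem~\ref{theorem dual U}, and compares the two half-spaces. Its second equality is a structural observation: the inner supremum $\sup_{Z\in U(X)}\E_{Q_2}[-Z]$ in $\delta_U$ is exactly $g^U_{Q_2}(X)$. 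Hence $\delta_U(Q_1,-Q_2)=\beta_{g^U_{Q_2}}(Q_1)$ for any set-valued $U$ and any $Q_1,Q_2$.

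\textbf{Your route.} You compute $\delta_U$ directly by splitting a supremum over the product $L^\infty\times A^\rho$. You reach $\beta_{g^U_{Q_2}}$ through the explicit formula $g^U_{Q_2}=R+\alpha_\rho(Q_2)$ from Lemma~\ref{lemma penalty g}.

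\textbf{What each buys.} Your argument is more elementary and self-contained, and it avoids the closure and ``isolate the scalar shift'' step. It also shows that the first identity holds in $(-\infty,\infty]$ without any finiteness assumption. Both suprema are taken over nonempty sets of reals, so neither can be $-\infty$; your remark about an $\infty-\infty$ ambiguity is therefore unnecessary. It makes clear that the only input is $U(X)=A^\rho-R(X)$ with $R$ real-valued, and the hypotheses on $u$ play no role. The paper's route keeps the identity tied to the set-valued penalty $\alpha_U$, which is the purpose of that section. Its second equality also does not need $\alpha_\rho(Q_2)<\infty$. Yours uses that condition so that $g^U_{Q_2}$ is real-valued, and the hypothesis supplies it.

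\textbf{A small correction.} Finiteness of $R$ does not follow from monotonicity and translation invariance alone. It follows from $u(X)$ being nonempty and bounded from below, together with monotonicity and translation invariance of $\rho$, and it is built into $R\colon L^\infty\to\R$.
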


\begin{proof}We first show $\delta_U(Q_1, -Q_2)  = \beta_R(Q_1) +\alpha_\rho(Q_2) $.
From the definition of the set-valued conjugate, and noticing that  we have $-Q_2(\Omega) = -1$ for any $Q_2 \in ba_{1,+}$, we can express the penalty $\alpha_U$ via the support function of the graph:
\begin{align}\label{eq for delta = alpha}
\alpha_U(Q_1, -Q_2) = \bigl\{Z \in L^\infty : \E_{-Q_2}[Z] \le \delta_U(Q_1, -Q_2)\bigr\} = H(-Q_2) - \delta_U(Q_1, -Q_2).
\end{align}
On the other hand, from the same steps as those in the prof of \zcref[S]{theorem dual U} regarding the deduction of \eqref{teorem dual U: eq penalty} we get
\begin{align}\label{eq for delta = alpha 2}    
\alpha_U(Q_1, -Q_2) = H(-Q_2) - \alpha_\rho(Q_2) - \beta_R(Q_1),
\end{align}
whenever $ \alpha_\rho(Q_2) $ and $ \beta_R(Q_1)$ are finite.
Equating \eqref{eq for delta = alpha} and \eqref{eq for delta = alpha 2} and isolating the scalar shift yields:
\begin{equation*}\label{eq:delta_U_exact}
\delta_U(Q_1, -Q_2) = \alpha_\rho(Q_2) + \beta_R(Q_1).
\end{equation*}

We now show that $\delta_U(Q_1, -Q_2)   = \beta_{g^U_{Q_2}}(Q_1)$. Denote  $g^U_{Q_2}$ and $\beta_{g^U_{Q_2}}$ as $g$ and $\beta_g$ respectively. 
By the definition of $\delta_U$ and of $g$, direct calculation yields
\begin{align*}
\delta_U(Q_1,-Q_2)
&=
\sup\left\{
\E_{Q_1}[X]+\E_{-Q_2}[Z]:
X\in L^\infty,\ Z\in U(X)
\right\}
\\
&=
\sup_{X\in L^\infty}
\left\{
\E_{Q_1}[X]+
\sup_{Z\in U(X)}\E_{Q_2}[-Z]
\right\}
\\
&=
\sup_{X\in L^\infty}
\left\{
\E_{Q_1}[X]+g(X)
\right\}
=
\beta_g(Q_1).
\end{align*}
\end{proof}

\begin{remark}
 Under the conditions of \zcref[S]{cor:exact_equality_delta_g} and for a coherent risk measure $\rho$, we have  \begin{align*}
    \alpha_U(Q_1,-Q_2)=\{ Z\in L^\infty\colon \E_{Q_2}[-Z]\leq \beta_R(Q_1)\},
\end{align*}
for all $Q_2 \in \ba$ such that $ \alpha_\rho(Q_2)<\infty$. Furthermore, in this case, $\beta_R (Q_1) = \beta_{g^U_{Q_2}} (Q_1)$. That this, the penalty of $U$ can be fully described by the reversed sign convex conjugate of the auxiliary function. 
The second identity in \zcref[S]{cor:exact_equality_delta_g} shows that the support function of the
graph of $U$ is related to the concave conjugate of $g_Q$, not to the usual
risk-measure penalty of $g_Q$. 
Thus, the graph support of $U$ and the epigraph support of $g_Q$ encode dual
information with opposite signs.   
\end{remark}

We end the paper with an example regarding  Bregman-ball uncertainty sets in order to illustrate the
graph-support identity in \zcref[S]{cor:exact_equality_delta_g}. Bregman uncertainty sets have been studied in \cite{Pesenti2024ORL,Pesenti2024WP,tam2026WP} and \cite{Pesenti2026WP}.

\example{
 Let $\phi:I\to\R$ be a proper,
convex, continuously differentiable function on an open interval $I\subseteq\R$,
and let $r\ge 0$. For $X,Z\in L^\infty$ with values in $I$, define
\begin{align*}
B_\phi(Z,X)
:&=
\E\!\left[
\phi(Z)-\phi(X)-\phi'(X)(Z-X)
\right],
\quad \text{and}\\
u(X):&=\left\{Z\in L^\infty:\ B_\phi(Z,X)\le r\right\}.
\end{align*}
For this example, we write \(U=u\). Let $Q\in\mathcal P$ and, under the usual
assumptions to interchange expectation and pointwise optimization, consider
the auxiliary map
\[
g_Q(X):=\sup_{Z\in U(X)}\E_Q[-Z].
\]
Then
\[
g_Q(X)
=
\inf_{\lambda>0}
\left\{
\lambda r+
\E\!\left[
\lambda\left(
\phi^*\left(\phi'(X)-\frac{\frac{dQ}{d\P}}{\lambda}\right)
-\phi^*(\phi'(X))
\right)
\right]
\right\}.
\]
Indeed, this follows from the Lagrangian relaxation of the constraint
$B_\phi(Z,X)\le r$ and the identity
$
\phi^*(\phi'(X))=X\phi'(X)-\phi(X).
$
If $\phi$ is of Legendre type and the infimum is attained at some
$\lambda^*>0$, then the corresponding optimizer is given by
\[
Z^*
=
(\phi')^{-1}
\left(
\phi'(X)-\frac{\frac{dQ}{d\P}}{\lambda^*}
\right).
\]

Consider now the entropic generator
$
\phi(x)=x\log x-x,\ x>0.
$
Then $\phi'(x)=\log x$, $\phi^*(y)=e^y$, and
\[
B_\phi(Z,X)
=
\E\!\left[
Z\log\!\left(\frac{Z}{X}\right)-Z+X
\right],
\qquad X,Z>0.
\]
Hence,
\[
g_Q(X)
=
\inf_{\lambda>0}
\left\{
\lambda r
-
\lambda
\E\!\left[
X\left(1-e^{-\frac{dQ}{d\P}/\lambda}\right)
\right]
\right\}.
\]
If the infimum is attained at $\lambda^*>0$, then
\[
Z^*
=
X e^{-\frac{dQ}{d\P}/\lambda^*}.
\]

We now compute the support scalar that appears in the graph duality of \(U\).
Let \(Q_1\) be a positive finite measure absolutely continuous with respect to
\(\P\), and set
\[
\beta_{g_Q}(Q_1)
:=
\sup_{X\in L^\infty}
\left\{
\E_{Q_1}[X]+g_Q(X)
\right\}.
\]
By \zcref[S]{cor:exact_equality_delta_g}, this scalar satisfies
\[
\delta_U(Q_1,-Q)=\beta_{g_Q}(Q_1).
\]
For the entropic Bregman ball, using the preceding expression for \(g_Q\) and
the same minimax convention as above, we obtain
\[
\beta_{g_Q}(Q_1)
=
\inf\left\{
\lambda r:\ \lambda>0,\ 
\frac{dQ_1}{d\P}
=
\lambda\left(1-e^{-\frac{dQ}{d\P}/\lambda}\right)
\right\},
\]
whenever the constraint is satisfiable, and \(\beta_{g_Q}(Q_1)=+\infty\)
otherwise. In particular, if there exists a unique \(\lambda(Q_1)>0\) such that
\[
\frac{dQ_1}{d\P}
=
\lambda(Q_1)
\left(1-e^{-\frac{dQ}{d\P}/\lambda(Q_1)}\right)
\quad \P\text{-a.s.},
\]
then
\[
\beta_{g_Q}(Q_1)=\lambda(Q_1)r.
\]

Consequently, the set-valued penalty of \(U\) along the direction \((Q_1,-Q)\)
is
\[
\alpha_U(Q_1,-Q)
=
H(-Q)-\beta_{g_Q}(Q_1).
\]
Equivalently, in the unique-\(\lambda\) case,
\[
\alpha_U(Q_1,-Q)
=
H(-Q)-\lambda(Q_1)r.
\]

We emphasize that, in the entropic Bregman case, the density
\[
\lambda\left(1-e^{-\frac{dQ}{d\P}/\lambda}\right)
\]
need not integrate to one. Thus, the preceding display should be read over
positive finite measures \(Q_1\), rather than only over \(\mathcal P\). This is
precisely why the example is better interpreted through the graph-support
identity \(\delta_U(Q_1,-Q)=\beta_{g_Q}(Q_1)\), rather than through the usual
risk-measure penalty \(\alpha_{g_Q}\).
}

\section*{Acknowledgements.}
M. Righi is grateful for the support of Brazilian Research Council (CNPq grants 302614/2021-4 and 401720/2023-3). S. Pesenti acknowledges support from the Natural Sciences and Engineering Research Council of Canada (RGPIN-2025-05847). M. Moresco  is grateful for the support of the Fundação de Amparo à Pesquisa do Estado do Rio Grande do Sul (FAPERGS) under Grant 25/2551-0000969-3

\bibliographystyle{spmpsci}
\bibliography{references} 
\end{document}